\newtheorem{theorem}{Theorem}
\newtheorem{corollary}{Corollary}
\newtheorem{definition}{Definition}
\newtheorem{lemma}{Lemma}
\newtheorem{proposition}{Proposition}
\newenvironment{proof}[1][Proof]{\noindent\textbf{#1.} }{\ \rule{0.5em}{0.5em}}
\newcommand{\localinc}[3]{\overset{#1,#2 ,#3}{\hookrightarrow }}
\newcommand{\localincstep}[2]{\overset{#1,#2}{\longrightarrow }}
\begin{document}

\title{An Axiomatization of the Shapley-Shubik 
Index for Interval Decisions}
\author{}
\maketitle

\author{Sascha Kurz\footnote{Department of Mathematics, Physics and
Computer Science, University of Bayreuth, 95440 Bayreuth, Germany
Tel.: +49-921-557353   Fax: +49-921-557352.  E--mail: sascha.kurz@uni-bayreuth.de}
\  {\normalsize and} \ 
Issofa Moyouwou\footnote{
Advanced Teachers Training College, University of Yaounde I, PO Box 47 Yaounde, Cameroon}, 
and Hilaire Touyem\footnote{Research and Training Unit for Doctorate in Mathematics, Computer Sciences and Applications, University of Yaounde I, PO Box 812, Yaounde, Cameroon}

\begin{abstract}
  The Shapley-Shubik index was designed to evaluate the power distribution in committee systems 
  drawing binary decisions and is one of the most established power indices. It was generalized 
  to decisions with more than two levels of approval in the input and output. In the limit we have 
  a continuum of options. For these games with interval decisions we prove an axiomatization of a 
  power measure and show that the Shapley-Shubik index for simple games, as well as for $(j,k)$~simple 
  games, occurs as a special discretization. This relation and the closeness of the stated axiomatization 
  to the classical case suggests to speak of the Shapley-Shubik index for games with interval decisions,  
  that can also be generalized to a value.
  
  \vskip 3mm \noindent \textit{Key words}: simple games, decisions with a continuum of options, aggregation functions, power indices, Shapley-Shubik index
  \vskip 3mm

  \noindent {\it Math. Subj. Class. (2000)}: Primary 
  91A40, 91A80, 91B12.

  \noindent {\it JEL Class.}: C71, D70, D71.
\end{abstract}

\section{Introduction}
\label{sec_introduction}

Consider a committee, which jointly draws decisions by voting and aggregating the votes to a common outcome. The case that is 
studied mostly in the literature is that of binary decisions, i.e., for a given proposal each committee member can either 
vote {\lq\lq}yes{\rq\rq} or {\lq\lq}no{\rq\rq}. The aggregated group decision then also is either to accept the proposal and 
implement it or to dismiss the proposal and retain the status quo. In the case of heterogeneous committee members the question of 
their respective influence on the group decision arises. Taking just the formal specification of the given voting rule into 
account, a measurement for influence is called a power index. A rather comprehensive and widely applied class of voting rules is 
that of simple games. One of the most established power indices for simple games is the Shapley-Shubik index \cite{shapley1954method}.

However, not all decisions are binary. Abstaining from a vote might be seen as a third option for the committee members. In general, 
there might also be any number $j\ge 2$ of alternatives that can be chosen from. To this end, simple games were generalized to 
$(j,k)$ simple games \cite{freixas2003weighted}, where $j$ is the number of alternatives in the input, i.e., the voting possibilities, 
and $k$ the number of alternatives for the group decision. A Shapley-Shubik power index for $(3,2)$ simple games was 
introduced in \cite[pp. 291--293]{felsenthal1998measurement}. When discussing the so-called roll call model for the Shapley-Shubik 
index, we will see that certain biases of the voters to {\lq\lq}yes{\rq\rq} or {\lq\lq}no{\rq\rq}-votes do not matter for 
the Shapley-Shubik index for simple games. This changes if voters have at least a third option. So given some probability that voters do 
not abstain, Friedman and Parker consider a conditional Shapley-Shubik power index for $(3,2)$ simple games \cite{friedman2018conditional}.    
For general $(j,k)$ simple games a Shapley-Shubik power index was introduced in \cite{freixas2005shapley}. For a 
variant with a different notion of monotonicity see \cite{hsiao1993shapley}. 
For completeness, we mention that \cite{amer1998extension} considers a Shapley value for $r$ unordered alternatives in the input. 
Axiomatizations are still under the way and mainly consider special cases.

In some applications, i.e., tax rates, we have a continuum of options, which might be modeled by a real number. The aggregation of 
$n$ different real numbers to a single real number was studied in the literature under the name of aggregation functions, see e.g.\ 
\cite{grabischaggregation}. Mimicking the properties of a simple game we speak of interval simple games for real-valued decisions in 
$[0,1]$. A generalization of the Shapley-Shubik index to that context was proposed in \cite{kurz2014measuring}. 
Here we give an axiomatic justification for that power index. Moreover, we show that simple games as well as $(j,k)$ simple games 
are covered by interval simple games using a natural embedding. This gives a coherent story condensing the different variants 
for committee decisions in one common framework.

As further related work one might mention considerations for other power indices. An axiomatization of a Banzhaf index for $(j,k)$ simple games
was obtained in \cite{freixas2005banzhaf}. For games with abstention, $(3,2)$ simple games, the ordinal equivalence between the Shapley-Shubik and the  
Banzhaf index has been studied in \cite{tchantcho2008voters}. Further influence measures have been studied in \cite{grabisch2010model}, see 
also \cite{freixas2012prob}. Models of influence for a continuum of actions have e.g.\ been studied in \cite{abdou1988neutral,grabisch2011model}. 
For aggregation functions there is a known generalization of the (Penrose-)Banzhaf index \cite{grabischaggregation}, see also \cite{kurz2018importance} 
for more details on the relationship.


The remaining part of this paper is structured as follows. In Section~\ref{sec_preliminaries} we introduce 
the basic notions, i.e., we consider committee decisions in Subsection~\ref{subsec_committee_decisions},   
power indices in Subsection~\ref{subsec_power_indices}, and the roll call model in Subsection~\ref{subsec_roll_call}. 
The generalization of the Shapley-Shubik index to interval simple games is studied in Section~\ref{sec_the_index}.
The main part, i.e., an axiomatization of that index is given in Section~\ref{sec_axiomatization}. After studying the first basic properties 
and observing that the classical axioms are not sufficient to uniquely characterize the index, we introduce the 
new axiom (HIS) of homogeneous increments sharing in Subsection~\ref{subsec_HIS}. In terms of simple games this axiom 
corresponds to the axiom (SymGL) of symmetric gain-loss \cite[p. 93]{laruelle2001shapley}. Technically we will work with step functions, 
which can approximate any reasonable general function. Assuming that a power index is commutable with the limit of 
step functions, we end up with an axiomatization, see Theorem~\ref{thm_characterization} in Subsection~\ref{subsec_characterization}. In 
order to illustrate the technical details and subtleties we give a detailed example in the appendix. In Section~\ref{sec_conclusion} we 
close the paper with a conclusion.

\section{Preliminaries}
\label{sec_preliminaries}
Let $N=\left\{ 1,2,...,n\right\}$ be  a finite set of voters. Any non-empty subset $S$ of $N$ is 
called a coalition and the set of all coalitions of $N$ is denoted by $2^{N}$. For easier reading capital
letters are reserved for coalitions (such as $N$, $S$, $T$, $J$, $K$, \dots), while the corresponding small letters 
($n$, $s$, $t$, $j$, $k$, \dots) denote their respective cardinalities. By $\mathcal{S}_n$ we denote the set of 
permutations of length $n$, i.e., the bijections on $N$. 

\subsection{Committee decisions}
\label{subsec_committee_decisions}
A most easy framework for committee decisions are binary decisions, i.e., each committee member can either vote 
{\lq\lq}no{\rq\rq} or {\lq\lq}yes{\rq\rq} on a given proposal, while the group decision then is either to reject 
or to accept the proposal. This setting is commonly formalized by a \emph{simple game} $v$ (on $N$), which is 
a mapping $v\colon 2^{N}\to \left\{ 0,1\right\}$ such that $v(\emptyset) =0$, $v(N)=1$, and $v(S)\leq v(T)$ for
all coalitions $\emptyset\subseteq S\subseteq T\subseteq N$. Here $S\subseteq N$ collects all players that are 
voting {\lq\lq}yes{\rq\rq} and we have $v(S)=1$ iff coalition $S$ can bring trough the proposal. If $v(S)=1$, then
coalition $S$ is called \emph{winning} and \emph{losing} otherwise. By $[q;w_1,\dots,w_n]$ we denote the simple 
game whose winning coalitions are exactly those with $w(S):=\sum_{i\in S} w_i\ge q$. In this case we also speak 
of a \emph{weighted game} (with weights $w_1,\dots,w_n$ and quota $q$). The term \textit{simple} 
refers to the fact that there are just two options in the input as well as in the output. In some older literature 
a simple game does not need to satisfy the monotonicity assumption $v(S)\leq v(T)$ and one speaks of \emph{monotonic 
simple games} if it does. However, the monotonicity assumption is quite natural for most decisions, since it only 
requires that additional supporters for a proposal do not turn acceptance into rejection. We follow the more recent literature 
and just speak of simple games.

In several applications decisions are not binary. One may think of grades or situations where abstention is possible.
To this end, in \cite{freixas2003weighted}, see also \cite{freixas2009anonymous}, the authors have introduced 
\emph{$(j,k)$~simple games}, which map the selection of $n$ voters for $j$ possible levels of approval to 
$k$ possible outputs. We slightly reformulate their definition. For $J=\{0,1,\dots, j-1\}$ and $K=\{0,1,\dots,k-1\}$, 
where $j,k\ge 2$, a $(j,k)$~simple game $v$ (on $N$) is a mapping $v\colon J^n\to K$ with $v(0,\dots,0)=0$, $
v(j-1,\dots,j-1)=k-1$, and $v(x)\le v(y)$ for all $x,y\in J^n$ with $x\le y$. Here we write $x\le y$ for $x,y\in\mathbb{R}^n$ if 
$x_i\le y_i$ for all $1\le i\le n$. A $(2,2)$-simple game is isomorphic to a simple game. Note that the input levels from 
$J$, as well as the output levels from $K$, are assumed to be ordered to make the monotonicity condition meaningful. For 
$r$ unordered input alternatives, see e.g.\ \cite{bolger1986power}.   

If we rescale the input and output levels of a $(j,k)$~simple game to $\frac{1}{j-1}\cdot(0,\dots,j-1)$ and 
$\frac{1}{k-1}\cdot(0,\dots,k-1)$,\footnote{Technically we will use a slightly different scaling, see Footnote~\ref{fn_rescaling}.} respectively, 
then the input and output levels are both contained in the real 
interval $[0,1]$ between $0$ and $1$. Increasing $j$ and $k$ approximates $[0,1]$, so that we may consider its limit 
$[0,1]$ itself. So, we want to study $[0,1]^n\to[0,1]$-functions $v$ with $v(\mathbf{0})=0$, $v(\mathbf{1})=1$, and $v(x)\le v(y)$ for 
all $x,y\in [0,1]^n$ with $x\le y$. In \cite{kurz2014measuring} the author called those objects continuous simple games 
since, for simplicity, $v$ was assumed to be continuous. To go in line with the above naming we call them 
\emph{interval simple games} here\footnote{In \cite{kurz2018importance} they were called simple aggregation functions, so that 
in any case the naming should be considered as temporary.}. Without the monotonicity assumption $v(x)\le v(y)$ and the domain restriction 
to $[0,1]$ those functions are more widely known under the name \emph{aggregation function}, see e.g.\ \cite{grabischaggregation}. 
The name says it, an aggregation function takes $n$ real numbers as inputs and condenses them to a single real number.       
Examples are direct votes on e.g., top tax rates or pension contributions, i.e., real numbers from some interval can be 
directly named instead of approving or disapproving some concrete proposal. If the real-valued alternatives are ordered 
by their usual order, then the monotonicity assumption makes sense again. In the voting context, the weighted median 
is a reasonable aggregation function, see e.g.\ \cite{kurz2017democratic} for the assignment of \textit{fair} weights 
in the corresponding two-tier context. Mathematically, also a function like e.g.\ $v(x_1,\dots,x_n)=\prod_{i=1}^n x_i^i$ 
falls into the class of interval simple games. With respect to the restriction to the specific interval $[0,1]$ we note that 
this can be achieved by rescaling, so that we retain this here due to simplicity. However, higher dimensional policy spaces 
are significantly different from our setting.

\subsection{Power indices}
\label{subsec_power_indices}
Even if the case where a simple game $v$ is weighted, 
influence or power is not always reasonably reflected by 
the weights. This fact is well-known and triggered the invention of power indices, i.e., mappings from a simple game on 
$n$ players to $\mathbb{R}^n$ reflecting the influence of a player on the final group decision. One of the most established power indices 
is the \emph{Shapley-Shubik index} \cite{shapley1954method}. It can be defined via
\begin{equation}
  \label{eq_ssi_simple_games}
  \operatorname{SSI}_{i}(v)=\sum_{i\in S\subseteq N}\frac{(s-1)!(n-s)!}{n!}\cdot \left[
  v(S)-v(S\backslash \left\{ i\right\} )\right]
\end{equation}
for all players $i\in N$. If $v(S)-v(S\backslash\{i\})=1$, then we have $v(S)=1$ and $v(S\backslash\{i\})=0$ in a simple game 
and voter~$i$ is called a swing voter. In the next subsection we give another equivalent formulation for $\operatorname{SSI}_{i}(v)$ 
based on the so-called roll call model, where pivotality plays the essential role and nicely motivates the factors $\frac{(s-1)!(n-s)!}{n!}$. 
This interpretation triggers the definition of a Shapley-Shubik index for $(j,k)$~simple games in \cite{freixas2005shapley} and was 
generalized to interval simple games in \cite{kurz2014measuring}.  

Another way to characterize power indices are axiomatizations, i.e., sets of properties that are satisfied by a power index and uniquely 
characterize it. For the Shapley-Shubik index we refer to \cite{0050.14404,shapley1954method}. In order to introduce properties of power 
indices for all three types of games, let $v$ be a mapping $J^n\to K$, where $J=\{0,1,\dots,j-1\}$ for some integer $j\ge 2$ or $J=[0,1]$, 
and $K=\{0,1,\dots,k-1\}$ for some integer $k\ge 2$ or $K=[0,1]$. A \emph{power index} $\varphi$ maps $v$ to $\mathbb{R}^n$ for all 
$v\in\mathcal{V}^n_{J,K}$, where $\mathcal{V}^n_{J,K}$ denotes the set of all corresponding games for $n$ players. We call $\varphi$ 
\emph{positive} if $\varphi(v)\neq 0$ and $\varphi_i(v)\ge 0$ for all $v\in\mathcal{V}^n_{J,K}$ and all $i\in N$. If $\sum_{i=1}^n \varphi_i(v)=1$
for all games $v$, then $\varphi$ is called $\emph{efficient}$. A power index $\varphi$ is called \emph{anonymous} if we 
have $\varphi_{\pi(i)}(\pi v)=\varphi_i(v)$ for all permutations $\pi$ of $N$, $i\in N$, and $v\in\mathcal{V}_{J,K}^n$, where 
$\pi v (x)=v(\pi(x))$ and $\pi(x)=\left(x_{\pi(i)}\right)_{i\in N}$. If $\pi\in\mathcal{S}_n$ is a transposition interchanging 
player~$i$ and player~$j$, then we call the two players \emph{symmetric} if $\pi v (x)=v(x)$ for all $x\in J^n$. A power index $\varphi$ 
is called \emph{symmetric} if $\varphi_i(v)=\varphi_j(v)$ for all players $i$ and $j$ that are symmetric in $v$. Note that symmetry 
is a relaxation of anonymity. A player $i\in N$ is called a \emph{null player} if 
$v(x)=v(y)$ for all $x,y\in J$ with $x_j=y_j$ for all $j\in N\backslash \{i\}$, i.e., $v(x)$ does not depend on $x_i$. A power index $\varphi$ 
is said to \emph{satisfy the null player property} if $\varphi_i(v)=0$ for every null player $i$ in $v$. In general, a function 
$f\colon\mathbb{R}^n\supseteq U\to\mathbb{R}$ is called \emph{linear} if we have $f(\alpha x+\beta y)=\alpha f(x)+\beta f(y)$ for 
all $x,y\in U$ and $\alpha,\beta\in\mathbb{R}$ such that $\alpha x+\beta y\in U$. For simple games (and their generalizations) linear combinations 
$\alpha u+\beta v$ of simple games $u$ and $v$ are almost never a simple game again, so that linearity has been adopted to the following. 
A power index $\varphi$ satisfies the \emph{transfer property} if for all $u,v\in \mathcal{V}_{J,K}^n$ and all $i\in N$ we have 
$\varphi_i(u)+\varphi_i(v)=\varphi_i(u\vee v)+\varphi_i(u\wedge v)$, where $(u\vee v)(x)=\max\{u(x),v(x)\}$ and $(u\wedge v)(x)=\min\{u(x),v(x)\}$ 
for all $x\in J^n$. Note that we always have $u\vee v,u\wedge v\in\mathcal{V}_{J,K}^n$. With this we can state that the Shapley-Shubik index 
for simple games is the unique power index that is symmetric, efficient, satisfies both the null player property and the transfer property, 
see \cite{dubey1975uniqueness}. Moreover, $\operatorname{SSI}$ is also positive and anonymous. The more general Shapley value, coinciding 
with the Shapley-Shubik index for simple games and having the set of all cooperative games for $n$ players as domain, is linear.  

\subsection{The roll call model}
\label{subsec_roll_call}
In \cite{shapley1954method} the authors have motivated the Shapley-Shubik index by the following interpretation. Assume that 
the $n$ voters row up in a line and declare to be part in the coalition of {\lq\lq}yes{\rq\rq}-voters. Given an ordering of the players, 
the player that first guarantees that a proposal can be put through is then called pivotal. Considering all $n!$ orderings $\pi\in\mathcal{S}_n$ of 
the players with equal probability then gives a probability for being pivotal for a given player $i\in N$ that equals its 
Shapley-Shubik index. So we can rewrite Equation~(\ref{eq_ssi_simple_games}) to 
\begin{equation}
  \label{eq_roll_call_simple_games_first}
  \operatorname{SSI}_i(v)=\frac{1}{n!}\cdot \sum_{\pi\in\mathcal{S}_n} \Big( v(\{j\in N\,:\,\pi(j)\le \pi(i)\})-v(\{j\in N\,:\,\pi(j)< \pi(i)\}) \Big). 
\end{equation}
Setting $S_\pi^i:=\{j\in N\,:\,\pi(j)\le \pi(i)\}$ we have $S_\pi^i=S$ for exactly $(s-1)!(n-s)!$ permutations $\pi\in\mathcal{S}_n$ and 
an arbitrary set $\{i\}\subseteq S\subseteq N$, so that Equation~(\ref{eq_ssi_simple_games}) is just a simplification of 
Equation~(\ref{eq_roll_call_simple_games_first}).

Instead of assuming that all players vote {\lq\lq}yes{\rq\rq} one can also assume that all players vote 
{\lq\lq}no{\rq\rq}. In \cite{MannShapley} it is mentioned that the model also yields the same result if we assume 
that all players independently vote {\lq\lq}yes{\rq\rq} with a fixed probability $p\in[0,1]$. This was further generalized to probability 
measures $p$ on $\{0,1\}^n$ where vote vectors with the same number of {\lq\lq}yes{\rq\rq} votes have the same probability, see 
\cite{hu2006asymmetric}. In other words, individual votes may be interdependent but must be exchangeable. That no further probability 
measures lead to the Shapley-Shubik index was finally shown in \cite{kurz2018roll}. For the most symmetric case $p=\tfrac{1}{2}$ we can rewrite 
Equation~(\ref{eq_roll_call_simple_games_first}) to 
\begin{equation}
  \label{eq_roll_call_simple_games}
  \operatorname{SSI}_i(v)=\frac{1}{n!\cdot 2^n}\cdot \sum_{(\pi,x)\in\mathcal{S}_n\times\{0,1\}^n} M(v,(\pi,x),i), 
\end{equation}
where $M(v,(\pi,x),i)$ is one if player $i$ is pivotal for ordering $\pi$ and vote vector $x$ in $v$, see \cite{kurz2018roll}, 
and zero otherwise.

Being pivotal means that the vote of player~$i$, according to the ordering $\pi$ and the votes of the previous players, fixes 
the group decision for the first time. In a $(j,k)$~simple game we can have the same notation as long as there are just $k=2$ 
outputs. If $k>2$, then pushing the outcome to at least $h$ or at most $h-1$ are possible events for $h\in\{1,\dots,k-1\}$, so that we 
speak of an $h$-pivotal player, which is always unique, see \cite{freixas2005shapley}. In that paper the author defines 
\begin{equation}
  \label{eq_roll_call_jk_simple_games}
  \frac{1}{n!\cdot j^n\cdot (k-1)}\sum_{h=1}^{k-1} \left|\left\{(\pi,x)\in\mathcal{S}_n\times J^n\,:\,i\text{ is an $h$-pivot for 
  $\pi$ and $x$ in $v$}\right\}\right|,
\end{equation}
for all $i\in N$, as the Shapley-Shubik index for $(j,k)$~simple games. Using monotonicity the $h$-pivotality as well as $M(v,(\pi,x),i)$ 
can be stated more directly. Slightly abusing notation we write $\mathbf{0}\in\mathbb{R}^n$ and $\mathbf{1}\in\mathbb{R}^n$ for the 
vectors that entirely consist of zeroes and entries $j-1$, respectively. For each $\emptyset\subseteq S\subseteq N$ we write $x_S$ for the restriction 
of $x\in\mathbb{R}^n$ to $\left(x_i\right)_{i\in S}$. As an abbreviation, we write $x_{-S}=x_{N\backslash S}$. For a given permutation 
$\pi\in\mathcal{S}_n$ and $i\in N$, we set $\pi_{<i}=
\left\{j\in N\,:\, \pi(j)<\pi(i)\right\}$, $\pi_{\le i}= \left\{j\in N\,:\, \pi(j)\le\pi(i)\right\}$, $\pi_{>i}=\left\{j\in N\,:\, 
\pi(j)>\pi(i)\right\}$, and $\pi_{\ge i}= \left\{j\in N\,:\, \pi(j)\ge \pi(i)\right\}$. With this, we can rewrite 
(\ref{eq_roll_call_jk_simple_games}) to 
\begin{equation}
  \label{eq_uncertainty_reduction_jk_simple_games}
  \frac{1}{n!\!\cdot\! j^n\!\cdot\! (k-1)}\sum_{(\pi,x)\in\mathcal{S}_n\times J^n}\!\! \Big(\!\!\left[v(x_{\pi_{<i}},\mathbf{1}_{\pi_{\ge i}})-v(x_{\pi_{<i}},\mathbf{0}_{\pi_{\ge i}})\right]
  -\left[v(x_{\pi_{\le i}},\mathbf{1}_{\pi_{> i}})-v(x_{\pi_{\le i}},\mathbf{0}_{\pi_{> i}})\right]\!\!\Big).
\end{equation}  
The interpretation is as follows. Since $v$ is monotone, before the vote of player~$i$ exactly the values in 
$\left\{v(x_{\pi_{<i}},\mathbf{0}_{\pi_{\ge i}}),\dots, v(x_{\pi_{<i}},\mathbf{1}_{\pi_{\ge i}})\right\}$ are still possible as final group decision. After 
the vote of player~$i$ this interval eventually shrinks to $\left\{v(x_{\pi_{\le i}},\mathbf{0}_{\pi_{> i}}),\dots, v(x_{\pi_{\le i}},\mathbf{1}_{\pi_{> i}})\right\}$. 
The difference in (\ref{eq_uncertainty_reduction_jk_simple_games}) just computes the difference between the lengths of both intervals, i.e., 
the number of previously possible outputs that can be excluded for sure after the vote of player~$i$.

In order to simplify~(\ref{eq_uncertainty_reduction_jk_simple_games}) a bit, let $C(v,T)=\frac{1}{j^n(k-1)}\cdot \sum_{x\in J^n} \Big(v(\mathbf{1}_T,x_{-T})-v(\mathbf{0}_T,x_{-T})\Big)$ for all $T\subseteq N$. 
As in the situation where we simplified the Shapley-Shubik index of a simple game given by Equation~(\ref{eq_roll_call_simple_games_first}) 
to Equation~(\ref{eq_ssi_simple_games}), we observe that it is sufficient to know the sets $\pi_{\ge i}$ and $\pi_{>i}$ for every permutation 
$\pi\in\mathcal{S}_n$. So we can condense all permutations that lead to the same set and can simplify (\ref{eq_uncertainty_reduction_jk_simple_games}) 
to  
\begin{equation}
  \label{eq_roll_call_jk_simple_games_simplified}
  \sum_{i\in S\subseteq N} \frac{(s-1)!(n-s)!}{n!}\cdot\left[C(v,S)-C(v,S\backslash\{i\})\right].
\end{equation}
Note the similarity between (\ref{eq_roll_call_jk_simple_games_simplified}) and Equation~(\ref{eq_ssi_simple_games}).  
For $j=k=2$, i.e., simple games, the coincidence between $v(S)-v(S\backslash \left\{ i\right\})$ and the more complicated 
summation behind $C(v,S)-C(v,S\backslash\{i\})$ is due to the fact that the roll call model gives the same probabilities for 
$p=1$ and $p=\tfrac{1}{2}$.\footnote{A direct combinatorial proof of the underlying identity was also given in \cite{bernardi2018shapley,
kurz2016generalized}.} However, this is an artifact for $j=2$ and for $j>2$ different probabilities for the input levels, 
as well as more complicated probability distributions on vote vectors, lead to different results in the roll call model. The 
case $(j,k)=(3,2)$ was studied in more detail in \cite{friedman2018conditional}, where the authors defined a conditional Shapley-Shubik index 
given some fixed probability of the voters to abstain.

\section{A Shapley-Shubik like index for interval decisions}
\label{sec_the_index}

If we renormalize $J$ and $K$ to subsets of $[0,1]$ and consider the limit taking $j$ and $k$ to infinity,\footnote{To be more precisely, we consider 
the mappings $K=\{0,1,\dots,k-1\}\to[0,1]$, $i\mapsto i/(k-1)$ and $J=\{0,1,\dots,j-1\}\to[0,1]$, $i\mapsto \left(i+\tfrac{1}{2}\right)/j$. Note 
that the latter points are the middle points of the intervals $(i/j,(i+1)/j)$ for $0\le i\le j-1$, c.f.\ Footnote~\ref{fn_center} 
and Definition~\ref{def_natural_embedding}.\label{fn_rescaling}} 
then Equation~(\ref{eq_uncertainty_reduction_jk_simple_games}) gives a power index for interval simple games. 

\begin{definition}(cf.~\cite[Definition 6.2]{kurz2014measuring})\\
  Let $v$ be an interval simple game with player set $N$ and $i\in N$ an arbitrary player. We set 
  \begin{eqnarray} 
  \Psi_i(v) &=& \frac{1}{n!}\sum_{\pi\in\mathcal{S}_n}\int_0^1\dots\int_0^1 \left[v(x_{\pi_{<i}},\mathbf{1}_{\pi_{\ge i}})-v(x_{\pi_{<i}},\mathbf{0}_{\pi_{\ge i}})\right]\notag\\
  &&-\left[v(x_{\pi_{\le i}},\mathbf{1}_{\pi_{> i}})-v(x_{\pi_{\le i}},\mathbf{0}_{\pi_{> i}})\right]\operatorname{d}x_1\dots \operatorname{d}x_n.\label{eq_def_ssi_interval_sg}
  \end{eqnarray}\label{def_ssi_interval_sg}
\end{definition}

We remark that we change between the notations $\int_0^1\dots\int_0^1$ and $\int_{[0,1]^n}$ from time to time taking Fubini's 
theorem into account. We skip the question for the existence of the involved integrals till we state a simpler formula for $\Psi$ 
in Proposition~\ref{prop_formula_ssi_interval_sg}. However, a few explicit formulas have been obtained for special classes of interval 
simple games directly using the rather complicated expression from Definition~\ref{def_ssi_interval_sg}.

\begin{proposition}(\cite[Theorem 6.3]{kurz2018importance})\\
Let $w\in\mathbb{R}_{\ge 0}^n$ with $\sum_{i=1}^n w_i=1$ and $f_i\colon[0,1]\to[0,1]$ weakly monotonic increasing functions with $f(\mathbf{0})=0$ and 
$f(\mathbf{1})=1$ for all $i\in N$. Then $f\colon[0,1]^n\to[0,1]$ defined by $x\mapsto \sum_{i=1}^n w_i\cdot f_i(x_i)$ is an interval simple game 
and satisfies $\Psi_i(f)=w_i$ for all $i\in N$.
\end{proposition}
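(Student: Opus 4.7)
The plan is to exploit the additive-separable structure of $f$ to show that the integrand in Definition~\ref{def_ssi_interval_sg} collapses to the constant $w_i$ on the nose, after which the sum over $\pi$ and the integration do nothing more than contribute a factor of $n!$ and a factor of $1$, respectively.

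First I would verify that $f$ is indeed an interval simple game: monotonicity in each argument follows because each $f_j$ is weakly increasing and the weights $w_j$ are non-negative; the boundary conditions $f(\mathbf{0})=\sum_j w_j f_j(0)=0$ and $f(\mathbf{1})=\sum_j w_j f_j(1)=\sum_j w_j=1$ are immediate. So the expression $\Psi_i(f)$ is well defined.

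Next I would fix a permutation $\pi\in\mathcal{S}_n$ and abbreviate $A=\pi_{<i}$, $D=\pi_{>i}$, so that $\pi_{\le i}=A\cup\{i\}$ and $\pi_{\ge i}=\{i\}\cup D$ (both unions disjoint). Because $f$ is additive across coordinates and each $f_j$ sends $0\mapsto 0$ and $1\mapsto 1$, a direct evaluation gives
\begin{align*}
f(x_A,\mathbf{1}_{\{i\}\cup D})-f(x_A,\mathbf{0}_{\{i\}\cup D})&=w_i+\sum_{j\in D}w_j,\\
f(x_{A\cup\{i\}},\mathbf{1}_D)-f(x_{A\cup\{i\}},\mathbf{0}_D)&=\sum_{j\in D}w_j.
\end{align*}
Subtracting, the difference appearing inside the integral in Equation~(\ref{eq_def_ssi_interval_sg}) equals $w_i$, independently of $x\in[0,1]^n$ and independently of the permutation $\pi$.

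Plugging this constant back into Definition~\ref{def_ssi_interval_sg} yields
\[
\Psi_i(f)=\frac{1}{n!}\sum_{\pi\in\mathcal{S}_n}\int_{[0,1]^n}w_i\,\mathrm{d}x_1\cdots\mathrm{d}x_n=\frac{1}{n!}\cdot n!\cdot w_i=w_i,
\]
which is the claim. There is essentially no obstacle here; the only point requiring care is the bookkeeping of the four index sets $\pi_{<i},\pi_{\le i},\pi_{>i},\pi_{\ge i}$ and the observation that the contributions of the coordinates in $A$ cancel between the two evaluations (since they are identical), while the contributions of the coordinates in $D$ cancel between the two bracketed differences, leaving only the $w_i$-term from player~$i$'s own switch from $0$ to $1$.
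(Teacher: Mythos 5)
Your proof is correct, and since the paper only cites this result from \cite[Theorem 6.3]{kurz2018importance} without reproducing a proof, there is nothing to deviate from: the direct computation showing that the integrand in Definition~\ref{def_ssi_interval_sg} is identically $w_i$ (using additivity, $f_j(0)=0$, $f_j(1)=1$, and the cancellation of the $\pi_{>i}$-terms) is exactly the natural argument, and it also renders the integrability question trivial. The same conclusion could alternatively be reached via Proposition~\ref{prop_formula_ssi_interval_sg}, since $C(f,T)=\sum_{j\in T}w_j$, but your route is at least as short.
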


\begin{proposition}(\cite[Theorem 6.4]{kurz2018importance})\\ \label{prop_formula_exponential_product}
  For a positive integer $n$ and positive real numbers $\alpha_1,\dots,\alpha_n$ let $f\colon[0,1]^n\to[0,1]$ be defined by 
  $x\mapsto \prod_{i=1}^n x_i^{\alpha_i}$ and $\Lambda=\prod_{j=1}^n \left(\alpha_j+1\right)$. Then, $f$ is an interval simple game and
  \begin{equation}
    \Psi_i(f)=\frac{1}{n!\cdot\Lambda}\cdot\left( (n-1)!+\alpha_i\cdot \sum_{T\subseteq N\backslash\{i\}} |T|!\cdot(n-1-|T|)! \cdot 
    \prod_{j\in T} \left(\alpha_j+1\right)\right) 
  \end{equation}
  for all $i\in N$.
\end{proposition}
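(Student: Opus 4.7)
The plan is to evaluate the integral in Definition~\ref{def_ssi_interval_sg} directly for the product function $f$, exploiting the fact that a single coordinate set to $0$ already annihilates $f$, and then to reorganize the resulting sum to match the advertised formula via a complement substitution $T \mapsto N\setminus T\setminus\{i\}$.

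First I would quickly check that $f$ is an interval simple game: $f(\mathbf{0})=0$, $f(\mathbf{1})=1$, and each $x\mapsto x^{\alpha_i}$ with $\alpha_i>0$ is weakly increasing on $[0,1]$, so $f$ is monotone. Next, for a fixed $\pi\in\mathcal{S}_n$ and player $i$, I would evaluate the four values inside the integrand of~(\ref{eq_def_ssi_interval_sg}). Since $i\in\pi_{\ge i}$ always, $v(x_{\pi_{<i}},\mathbf{0}_{\pi_{\ge i}})=0$, and $v(x_{\pi_{<i}},\mathbf{1}_{\pi_{\ge i}})=\prod_{j\in\pi_{<i}}x_j^{\alpha_j}$. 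Similarly $v(x_{\pi_{\le i}},\mathbf{1}_{\pi_{>i}})=x_i^{\alpha_i}\prod_{j\in\pi_{<i}}x_j^{\alpha_j}$. The only subtlety is $v(x_{\pi_{\le i}},\mathbf{0}_{\pi_{>i}})$: it equals $0$ when $\pi_{>i}\neq\emptyset$, but equals $\prod_{j\in N}x_j^{\alpha_j}$ when $\pi(i)=n$. Combining, the integrand collapses to $(1-x_i^{\alpha_i})\prod_{j\in\pi_{<i}}x_j^{\alpha_j}$ when $\pi(i)<n$ and to $\prod_{j\neq i}x_j^{\alpha_j}$ when $\pi(i)=n$.

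Using $\int_0^1 x^{\alpha}\,dx=\tfrac{1}{\alpha+1}$ and $\int_0^1(1-x_i^{\alpha_i})\,dx_i=\tfrac{\alpha_i}{\alpha_i+1}$, the integrals in the two cases become
\[
\frac{\alpha_i}{\alpha_i+1}\prod_{j\in\pi_{<i}}\frac{1}{\alpha_j+1}\quad\text{and}\quad\prod_{j\neq i}\frac{1}{\alpha_j+1}
\]
respectively. I would then group the $n!$ permutations by the set $T:=\pi_{<i}\subseteq N\setminus\{i\}$: there are exactly $|T|!\,(n-1-|T|)!$ permutations realizing a given $T$, with $T=N\setminus\{i\}$ corresponding precisely to the boundary case $\pi(i)=n$. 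Multiplying the whole expression by $\Lambda=\prod_j(\alpha_j+1)$ clears the denominators, producing
\[
n!\,\Lambda\,\Psi_i(f)=\alpha_i\sum_{T\subsetneq N\setminus\{i\}}|T|!\,(n-1-|T|)!\prod_{k\in N\setminus T\setminus\{i\}}(\alpha_k+1)+(n-1)!(\alpha_i+1).
\]

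The main (and only real) obstacle is cosmetic: the product that naturally appears is $\prod_{k\in N\setminus T\setminus\{i\}}(\alpha_k+1)$, whereas the target formula has $\prod_{k\in S}(\alpha_k+1)$. I would apply the change of variables $S:=N\setminus T\setminus\{i\}$, noting that $T\subsetneq N\setminus\{i\}$ corresponds to $S\neq\emptyset$, and $|T|!(n-1-|T|)!=(n-1-|S|)!\,|S|!$ is symmetric in the swap. This turns the sum into $\sum_{S\subseteq N\setminus\{i\},S\neq\emptyset}|S|!(n-1-|S|)!\prod_{k\in S}(\alpha_k+1)$. Finally, I would split the boundary contribution $(n-1)!(\alpha_i+1)=\alpha_i(n-1)!+(n-1)!$ and absorb $\alpha_i(n-1)!$ into the sum as the missing $S=\emptyset$ term, so that $S$ ranges over all of $2^{N\setminus\{i\}}$; dividing by $n!\Lambda$ yields exactly the claimed identity.
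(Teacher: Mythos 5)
Your computation is correct: the four evaluations of $f$ inside the integrand of Definition~\ref{def_ssi_interval_sg} (in particular the special term $f(x_{\pi_{\le i}},\mathbf{0}_{\pi_{>i}})=\prod_{j\in N}x_j^{\alpha_j}$ only when $\pi(i)=n$), the integration giving $\frac{\alpha_i}{\alpha_i+1}\prod_{j\in\pi_{<i}}\frac{1}{\alpha_j+1}$ resp.\ $\prod_{j\neq i}\frac{1}{\alpha_j+1}$, the count $|T|!\,(n-1-|T|)!$ of permutations with $\pi_{<i}=T$, and the final complementation $S=N\setminus T\setminus\{i\}$ together with absorbing $\alpha_i(n-1)!$ as the $S=\emptyset$ term all check out and reproduce the stated formula. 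Note that the paper itself gives no proof of this proposition; it is quoted from \cite[Theorem 6.4]{kurz2018importance}, so your argument is a self-contained verification rather than a variant of an argument in this paper. A marginally shorter route would have been to start from Proposition~\ref{prop_formula_ssi_interval_sg}, where $C(f,T)=\prod_{j\notin T}\frac{1}{\alpha_j+1}$ for $T\neq\emptyset$ and $C(f,\emptyset)=0$, and then collect the coefficients $\frac{(s-1)!(n-s)!}{n!}$; this is the same bookkeeping you did, just with the permutation-grouping step already built in.
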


Of course, it would be interesting to compute $\Psi$ for other parametric classes of interval simple games. The subsequent simplification 
in Proposition~\ref{prop_formula_ssi_interval_sg} might be rather useful for that aim. Prior to stating and proving a simplification 
of Definition~\ref{def_ssi_interval_sg}, we show that $\Psi$ indeed shares some properties that we might expect from a meaningful 
power index. 

\begin{proposition}(cf.~\cite[Definition 6.7]{kurz2014measuring}, \cite[Lemma 6.1]{kurz2018importance})\\
  The mapping $\Psi$ is positive, efficient, anonymous, symmetric, and satisfies both the null player and the transfer property.\label{proposition_power_properties} 
\end{proposition}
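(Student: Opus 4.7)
The plan is to verify each of the six properties by direct manipulation of the integrand in Definition~\ref{def_ssi_interval_sg}, exploiting only the monotonicity and boundary conditions that define an interval simple game. Before anything I would note that each instance of $v$ inside the bracket is evaluated at a single point of $[0,1]^n$ whose coordinates are either a coordinate of $x$, a $0$, or a $1$; since monotone functions on $[0,1]^n$ are Borel measurable and take values in $[0,1]$, the integrand is measurable and bounded, so Fubini applies and all integrals exist.

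For positivity I would fix $\pi$, $i$ and $x$, and derive the chain $v(x_{\pi_{<i}},\mathbf{0}_{\pi_{\ge i}}) \le v(x_{\pi_{\le i}},\mathbf{0}_{\pi_{>i}}) \le v(x_{\pi_{\le i}},\mathbf{1}_{\pi_{>i}}) \le v(x_{\pi_{<i}},\mathbf{1}_{\pi_{\ge i}})$ from monotonicity in the $i$-th coordinate. This immediately gives the pointwise bound $[v(x_{\pi_{<i}},\mathbf{1}_{\pi_{\ge i}})-v(x_{\pi_{<i}},\mathbf{0}_{\pi_{\ge i}})]\ge[v(x_{\pi_{\le i}},\mathbf{1}_{\pi_{>i}})-v(x_{\pi_{\le i}},\mathbf{0}_{\pi_{>i}})]$, so $\Psi_i(v)\ge 0$. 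Efficiency then follows by a telescoping argument: list the players according to $\pi$ as $i_1,\dots,i_n$ and set $A_k := v(x_{i_1},\dots,x_{i_{k-1}},\mathbf{1})-v(x_{i_1},\dots,x_{i_{k-1}},\mathbf{0})$; the $i_k$-summand of the integrand equals $A_k-A_{k+1}$, hence $\sum_i$ of the integrand equals $A_1-A_{n+1}=v(\mathbf{1})-v(\mathbf{0})=1$. Integrating and averaging over $\pi$ yields $\sum_i \Psi_i(v)=1$, which also rules out $\Psi(v)=0$ and completes positivity.

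For anonymity I would combine the substitution $\sigma \mapsto \tau:=\pi\circ\sigma\circ\pi^{-1}$ in the permutation sum with the measure-preserving change of variables $x \mapsto \pi(x)$ in the integral; using $(\pi v)(y)=v(\pi(y))$ and tracking how $\sigma_{<\pi(i)}$ becomes $\tau_{<i}$ under these two substitutions transforms the defining expression for $\Psi_{\pi(i)}(\pi v)$ into that for $\Psi_i(v)$. Symmetry is then a one-line consequence: for the transposition $\pi$ swapping two symmetric players $i,j$ we have $\pi v=v$, so $\Psi_i(v)=\Psi_{\pi(j)}(\pi v)=\Psi_j(v)$.

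The null player property is obtained by observing that if $v$ does not depend on the $i$-th coordinate, then $v(x_{\pi_{<i}},\mathbf{1}_{\pi_{\ge i}})=v(x_{\pi_{\le i}},\mathbf{1}_{\pi_{>i}})$ and analogously with $\mathbf{0}$, so the integrand vanishes pointwise. Finally, the transfer property is almost bookkeeping: each of the four evaluations of $v$ in the integrand is taken at the \emph{same} point for $u\vee v$ and $u\wedge v$, so the pointwise identity $u(y)+v(y)=(u\vee v)(y)+(u\wedge v)(y)$ applied term by term and integrated gives $\Psi_i(u)+\Psi_i(v)=\Psi_i(u\vee v)+\Psi_i(u\wedge v)$. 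The only step where I expect real bookkeeping difficulty is anonymity, since one must carefully verify that the composite relabelling of permutations and coordinates sends the indexed families $\pi_{<i},\pi_{\le i},\pi_{\ge i},\pi_{>i}$ to the correct analogues under the new permutation; every other property reduces to a single, essentially mechanical observation about the integrand.
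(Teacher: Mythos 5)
Your proposal is correct and follows essentially the same route as the paper's own proof: pointwise monotonicity of the integrand for positivity, the telescoping sum over players for efficiency, relabelling of permutations and coordinates for anonymity (hence symmetry), pointwise vanishing of the integrand for the null player property, and the identity $u+v=(u\vee v)+(u\wedge v)$ applied under the sum and integral for transfer. The only quibble is in the anonymity bookkeeping, which the paper dismisses as obvious: with the convention $\pi v(x)=v(\pi(x))$, $\pi(x)=\left(x_{\pi(i)}\right)_{i\in N}$, the reindexing that matches the terms is $\tau=\sigma\circ\pi$ together with $x\mapsto\pi(x)$, not the conjugation $\tau=\pi\circ\sigma\circ\pi^{-1}$ you state, a detail the careful verification you announce would catch.
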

\newcommand{\upim}{x_{\pi_{<i}},\mathbf{1}_{\pi_{\ge i}}}
\newcommand{\upi}{x_{\pi_{\le i}},\mathbf{1}_{\pi_{>i}}}
\newcommand{\dnim}{x_{\pi_{<i}},\mathbf{0}_{\pi_{\ge i}}}
\newcommand{\dni}{x_{\pi_{\le i}},\mathbf{0}_{\pi_{>i}}}
\newcommand{\upimk}{x_{\kappa_{<i}},\mathbf{1}_{\kappa_{\ge i}}}
\newcommand{\upik}{x_{\kappa_{\le i}},\mathbf{1}_{\kappa_{>i}}}
\newcommand{\dnimk}{x_{\kappa_{<i}},\mathbf{0}_{\kappa_{\ge i}}}
\newcommand{\dnik}{x_{\kappa_{\le i}},\mathbf{0}_{\kappa_{>i}}}
\newcommand{\upimky}{y_{\kappa_{<j}},\mathbf{1}_{\kappa_{\ge j}}}
\newcommand{\upiky}{y_{\kappa_{\le j}},\mathbf{1}_{\kappa_{>j}}}
\newcommand{\dnimky}{y_{\kappa_{<j}},\mathbf{0}_{\kappa_{\ge j}}}
\newcommand{\dniky}{y_{\kappa_{\le j}},\mathbf{0}_{\kappa_{>j}}}
\newcommand{\upimkx}{x_{\kappa_{<j}},\mathbf{1}_{\kappa_{\ge j}}}
\newcommand{\upikx}{x_{\kappa_{\le j}},\mathbf{1}_{\kappa_{>j}}}
\newcommand{\dnimkx}{x_{\kappa_{<j}},\mathbf{0}_{\kappa_{\ge j}}}
\newcommand{\dnikx}{x_{\kappa_{\le j}},\mathbf{0}_{\kappa_{>j}}}
\newcommand{\upjm}{x_{\pi_{<j}},\mathbf{1}_{\pi_{\ge j}}}
\newcommand{\upj}{x_{\pi_{\le j}},\mathbf{1}_{\pi_{>j}}}
\newcommand{\dnjm}{x_{\pi_{<j}},\mathbf{0}_{\pi_{\ge j}}}
\newcommand{\dnj}{x_{\pi_{\le j}},\mathbf{0}_{\pi_{>j}}}
\begin{proof}
  For each $x\in[0,1]^n$, $\pi\in\mathcal{S}_n$, and $i\in N$, we have $f(\upim)\ge f(\upi)$ 
  and $f(\dni)\ge f(\dnim)$, so that $\Psi_i(v)\ge 0$. Since we will show that $\Psi$ is efficient, we especially 
  have $\Psi(v)\neq \mathbf{0}$, so that $\Psi$ is positive.
  
  For any permutation $\pi\in\mathcal{S}_n$ and any $0\le h\le n$ let $\pi|h:=\{\pi(i)\,:\,1\le i\le h\}$, i.e., 
  the first $h$ agents in ordering $\pi$. Then, for any state vector $x\in[0,1]^n$, we have
  \begin{eqnarray*}
    &&\sum_{i=1}^n \Big(v(\upim)-v(\upi)+v(\dni)-v(\dnim)\Big)\nonumber\\
    &=&\sum_{h=1}^n \Big(v(x_{\pi|h-1},\mathbf{1}_{-\pi|h-1})\!-\! v(x_{\pi|h},\mathbf{1}_{-\pi|h})\Big)
    \!+\!\sum_{h=1}^n \Big(v(x_{\pi|h},\mathbf{0}_{-\pi|h})\!-\!v(x_{\pi|h-1},\mathbf{0}_{-\pi|h-1})\Big)\nonumber\\
    &=& v(x_{\pi|0},\mathbf{1}_{-\pi|0})-v(x_{\pi|n},\mathbf{1}_{-\pi|n})+v(x_{\pi|n},\mathbf{0}_{-\pi|n})-
    v(x_{\pi|0},\mathbf{0}_{-\pi|0})\nonumber\\
    &=& v(\mathbf{1})-v(x)+v(x)-v(\mathbf{0})=1-0=1, 
  \end{eqnarray*}
  so that $\sum_{i=1}^n \Psi_i(v)=1$, i.e., $\Psi$ is efficient.
  
  The definition of $\Psi$ is obviously anonymous, so that it is also symmetric. If agent $i\in N$ is a null player and 
  $\pi\in\mathcal{S}_n$ arbitrary, then $v(\dnim)=v(\dni)$ and  $v(\upim)=v(\upi)$, so that $\Psi_i(v)=0$, i.e., $\Psi$ 
  satisfies the null player property. Since $x+y=\max\{x,y\}+\min\{x,y\}$ for all $x,y\in\mathbb{R}$ and due to the linearity of 
  finite sums and integrals, $\Psi$ also satisfies the transfer axiom.
\end{proof}

Since the proof of Proposition~\ref{proposition_power_properties} actually shows that the stated properties are even satisfied for the 
summation part without integrating we conclude:
\begin{proposition}
  \label{prop_generalized_measure_point}
  For every $\alpha\in[0,1]$ the mapping $\Psi^a$, where $a=(\alpha,\dots,\alpha)\in[0,1]^n$, defined by
  $$ 
  \Psi^a_i(v) = \frac{1}{n!}\sum_{\pi\in\mathcal{S}_n}\Big(\left[v(a_{\pi_{<i}},\mathbf{1}_{\pi_{\ge i}})-v(a_{\pi_{<i}},\mathbf{0}_{\pi_{\ge i}})\right]
  -\left[v(a_{\pi_{\le i}},\mathbf{1}_{\pi_{> i}})-v(a_{\pi_{\le i}},\mathbf{0}_{\pi_{> i}})\right]\Big)
  $$
  for all $i\in N$, is positive, efficient, anonymous, symmetric, and satisfies both the null 
  player and the transfer property for interval simple games.
\end{proposition}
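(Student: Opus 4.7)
The plan is to observe, as the paragraph preceding the proposition already hints, that the proof of Proposition~\ref{proposition_power_properties} never actually used the integration step: each property was verified pointwise in the state vector $x \in [0,1]^n$, and then the integral was applied to a nonnegative (or telescoping, or linearly combined) integrand. Since $\Psi^a_i(v)$ is obtained from the same summand by simply evaluating at the constant state vector $a = (\alpha, \dots, \alpha)$ instead of integrating, every pointwise argument carries over verbatim. So the strategy is to walk through the five properties in the same order as in the earlier proof, explaining which pointwise identity does the work each time.

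First, for each fixed $\pi \in \mathcal{S}_n$ and $i \in N$, monotonicity of $v$ gives $v(a_{\pi_{<i}}, \mathbf{1}_{\pi_{\ge i}}) \ge v(a_{\pi_{\le i}}, \mathbf{1}_{\pi_{> i}})$ and $v(a_{\pi_{\le i}}, \mathbf{0}_{\pi_{> i}}) \ge v(a_{\pi_{<i}}, \mathbf{0}_{\pi_{\ge i}})$, which yields $\Psi^a_i(v) \ge 0$; combined with efficiency (shown below) this gives positivity. For efficiency, the telescoping identity from the earlier proof,
\[
\sum_{i=1}^n \Big(v(x_{\pi_{<i}},\mathbf{1}_{\pi_{\ge i}})-v(x_{\pi_{\le i}},\mathbf{1}_{\pi_{>i}})+v(x_{\pi_{\le i}},\mathbf{0}_{\pi_{>i}})-v(x_{\pi_{<i}},\mathbf{0}_{\pi_{\ge i}})\Big) = v(\mathbf{1}) - v(\mathbf{0}) = 1,
\]
holds for every $x$, in particular for $x = a$; averaging over $\pi$ then gives $\sum_{i=1}^n \Psi^a_i(v) = 1$.

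For anonymity, the crucial point is that $a = (\alpha, \dots, \alpha)$ is invariant under every coordinate permutation, so for any $\pi \in \mathcal{S}_n$ we have $a_S = \alpha \cdot \mathbf{1}_S$ regardless of the relabelling of $S$; hence the change-of-variables argument (reindexing over $\mathcal{S}_n$) that works for the integrated version also works when the state vector is frozen at $a$, and symmetry follows as a special case. The null player property is again pointwise: if $i$ is a null player then $v(a_{\pi_{<i}}, \mathbf{1}_{\pi_{\ge i}}) = v(a_{\pi_{\le i}}, \mathbf{1}_{\pi_{>i}})$ and likewise on the $\mathbf{0}$ side, so every summand vanishes. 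Finally, the transfer property reduces to the identity $x + y = \max\{x,y\} + \min\{x,y\}$ applied to each of the four values $v(a_{\pi_{<i}}, \mathbf{1}_{\pi_{\ge i}})$, etc., together with linearity of the finite sum over $\pi$; no integration is needed here either.

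The only conceptual check, and the one that could be viewed as the main (very mild) obstacle, is anonymity: one must be careful that the absence of the integral does not break the reindexing argument. The saving grace is that the state vector $a$ has identical entries, so permuting coordinates leaves it fixed, and the same bijection on $\mathcal{S}_n$ that implemented anonymity in Proposition~\ref{proposition_power_properties} works unchanged. Everything else is a transcription of the earlier pointwise reasoning.
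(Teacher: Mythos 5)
Your proposal is correct and matches the paper's own (very brief) justification: the paper simply remarks that the proof of Proposition~\ref{proposition_power_properties} establishes all properties pointwise in the state vector, so they survive when the integral is replaced by evaluation at the symmetric point $a=(\alpha,\dots,\alpha)$, which is exactly your argument. Your extra care about anonymity requiring the invariance of $a$ under coordinate permutations is the right observation and is consistent with the paper's emphasis that $a$ must be symmetric.
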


In other words, $\Psi^a$ does not consider all possible vote vectors with equal probability but just a specific and symmetric 
one, i.e., $a_i=a_j$ for all $i,j\in N$. This construction can easily be generalized by introducing a {\lq\lq}symmetric{\rq\rq} 
density function in the integration part of Definition~\ref{def_ssi_interval_sg}.

\begin{proposition}
  \label{prop_generalized_measure_density}
  Let $f$ be a $[0,1]^n\to\mathbb{R}_{\ge 0}$ mapping satisfying $\int_{[0,1]^n} f(x)\operatorname{d}x=1$ and 
  $f(x)=f\!\left(\left(x_{\pi(i)}\right)_{i\in N}\right)$ for all $x\in[0,1]^n$ and all $\pi\in\mathcal{S}_n$. Then, 
  the mapping $\Psi^f$ defined by
  $$
    \Psi^f_i(v) = \frac{1}{n!}\sum_{\pi\in\mathcal{S}_n}\int_{[0,1]^n} f(x)\cdot\left(\left[v(x_{\pi_{<i}},\mathbf{1}_{\pi_{\ge i}})-v(x_{\pi_{<i}},\mathbf{0}_{\pi_{\ge i}})\right]
  -\left[v(x_{\pi_{\le i}},\mathbf{1}_{\pi_{> i}})-v(x_{\pi_{\le i}},\mathbf{0}_{\pi_{> i}})\right]\right)\operatorname{d}x
  $$
  for all $i\in N$, is positive, efficient, anonymous, symmetric, and satisfies both the null 
  player and the transfer property for interval simple games.
\end{proposition}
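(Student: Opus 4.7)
The plan is to mimic the proof of Proposition~\ref{proposition_power_properties} almost verbatim, observing that all but one of the arguments are pointwise in $x$ and therefore carry over after one multiplies the integrand by $f(x)$ and integrates; the only place where one genuinely uses a property of $f$ beyond nonnegativity and unit mass is in the verification of anonymity, where the permutation-invariance of $f$ enters.

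First, I would check positivity by arguing that for every fixed $x \in [0,1]^n$ and $\pi \in \mathcal{S}_n$ the bracketed integrand is nonnegative by monotonicity of $v$ (exactly as in the proof of Proposition~\ref{proposition_power_properties}), so multiplying by $f(x) \geq 0$ and integrating keeps nonnegativity; together with efficiency (proved next) this rules out $\Psi^f(v) = \mathbf{0}$. For efficiency, I would sum over $i \in N$ inside the integral and apply the same telescoping identity
\[
\sum_{i=1}^n \Big(v(\upim) - v(\upi) + v(\dni) - v(\dnim)\Big) = v(\mathbf{1}) - v(\mathbf{0}) = 1,
\]
valid pointwise in $x$ and $\pi$; then the outer sum collapses to $\tfrac{1}{n!} \sum_\pi \int_{[0,1]^n} f(x)\,\mathrm{d}x = 1$ using $\int f = 1$.

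Next, anonymity: given $\pi \in \mathcal{S}_n$, I would unfold $\Psi^f_{\pi(i)}(\pi v)$ according to the definition, reindex the inner sum over $\mathcal{S}_n$ by $\kappa \mapsto \pi^{-1} \kappa$, and perform the change of variables $y_k = x_{\pi(k)}$ in the integral. The symmetry assumption $f(x) = f((x_{\pi(i)})_{i \in N})$ ensures that the density factor is unchanged under this substitution, and the Jacobian equals $1$, so the integral is transformed into precisely the expression defining $\Psi^f_i(v)$. Symmetry of players follows immediately from anonymity applied to a transposition fixing $v$. For the null player property, if $i$ is null then $v(\upim) = v(\upi)$ and $v(\dnim) = v(\dni)$ for every $x$ and $\pi$, so the integrand vanishes identically and hence $\Psi^f_i(v) = 0$. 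The transfer property is obtained from the pointwise identity $u(x) + v(x) = \max\{u(x),v(x)\} + \min\{u(x),v(x)\}$ together with the linearity of integrals and finite sums.

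The only step requiring genuine care is anonymity, since this is the one place where the permutation invariance of $f$ is used in an essential way; all other properties would already hold for any nonnegative $f$ of unit mass. I do not expect any serious obstacle here, only bookkeeping for the simultaneous reindexing of $\kappa$ and the change of variables in $x$.
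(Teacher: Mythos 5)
Your proof is correct and follows essentially the same route the paper intends: the paper gives no separate proof for this proposition but relies, exactly as you do, on the fact that every argument in the proof of Proposition~\ref{proposition_power_properties} (positivity, telescoping for efficiency, null player, and the $\max+\min$ identity for transfer) is pointwise in $x$ and $\pi$, hence survives multiplication by $f(x)\ge 0$ with $\int f=1$, while the permutation invariance of $f$ is precisely what is needed for anonymity (and hence symmetry). No gap to report.
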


The example from Proposition~\ref{prop_generalized_measure_point} can be interpreted in the context of the construction from 
Proposition~\ref{prop_generalized_measure_density} by using a Dirac measure. In \cite{kurz2014measuring} it was conjectured 
that every power index for interval simple games that satisfies symmetry, efficiency, the null player property, and the transfer 
property coincides with $\Psi$. However, the two constructions above show that this is wrong. To this end we consider the specific 
interval simple game defined by $v(x)=x_1x_2^2$ for $n\ge 2$ players. It can be easily checked that
$$
  \Psi(v)=\left(\frac{5}{12},\frac{7}{12},0,\dots,0\right)
  \quad\text{and}\quad
  \Psi^a(v)=\left(\frac{1}{2}-\frac{\alpha-\alpha^2}{2},\frac{1}{2}+\frac{\alpha-\alpha^2}{2},0,\dots,0\right)
$$
for $a=(\alpha,\dots,\alpha)$ with $\alpha\in[0,1]$. For $\alpha\neq \tfrac{1}{2}\pm \tfrac{1}{2\sqrt{3}}$ we have 
$\Psi(v)\neq \Psi^a(v)$. In the setting of Proposition~\ref{prop_generalized_measure_density} we can even find way more 
density functions $f$ with $\Psi(v)\neq \Psi^f(v)$ than those already mentioned. We remark that the underlying idea of the constructions 
of Proposition~\ref{prop_generalized_measure_point} and Proposition~\ref{prop_generalized_measure_density} can also be applied 
to $(j,k)$ simple games. For simple games the analogy of Proposition~\ref{prop_generalized_measure_point} is the roll call model 
where either all players say {\lq\lq}yes{\rq\rq} or all players say {\lq\lq}no{\rq\rq}, corresponding to $a=\mathbf{0}$ and 
$a=\mathbf{1}$, respectively. The analogy of Proposition~\ref{prop_generalized_measure_density} for simple games 
is the roll call model with exchangeable probabilities for vote vectors $x\in\{0,1\}^n$ as proven in \cite{hu2006asymmetric}. 
In the light of the characterization result from \cite{kurz2018roll} it would be interesting to know whether the parametric 
set of examples from Proposition~\ref{prop_generalized_measure_density} can be even further generalized. Indeed further examples 
satisfying symmetry, efficiency, the null player property, and the transfer property exist like e.g.\ $v\mapsto\Psi(v^2)$, 
where $v^2(x)=v(x)^2$ for all $x\in[0,1]^n$, which is an interval simple game provided that $v$ is an interval simple game. 
For $(j,k)$ simple games we remark that efficiency, symmetry, the null player and the transfer property are sufficiently to determine the 
Shapley-Shubik index for $(j,2)$ simple games, see \cite[Theorem 5.1]{freixas2005shapley}, while for $(j,k)$ simple 
games with $k>2$ further axioms are needed, see \cite[Theorem 1.3]{shapleybookjosep}. Additionally axioms for a characterization 
of $\Psi$ for interval simple games are given in Section~\ref{sec_axiomatization}.  

\medskip

As shown in the previous subsection for $(j,k)$ simple games, we can easily simplify the definition of $\Psi$:
\begin{proposition}
  \label{prop_formula_ssi_interval_sg}
  For every interval simple game $v$ with player set $N$ and every player $i\in N$ we have
  \begin{equation}
  \label{eq_ssi_interval_sg}
  \Psi_i(v)=\sum_{i\in S\subseteq N} \frac{(s-1)!(n-s)!}{n!}\cdot\left[C(v,S)-C(v,S\backslash\{i\})\right],
  \end{equation}
  where $C(v,T)=\int_{[0,1]^n} v(\mathbf{1}_T,x_{-T})-v(\mathbf{0}_T,x_{-T})\,\operatorname{d}x$ for all $T\subseteq N$.
\end{proposition}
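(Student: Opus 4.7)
The plan is to mimic the reduction carried out in the $(j,k)$ case that led from expression~(\ref{eq_uncertainty_reduction_jk_simple_games}) to~(\ref{eq_roll_call_jk_simple_games_simplified}), and to verify that the (now continuous) integration causes no extra trouble.

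First I would fix a player $i$ and inspect the integrand of~(\ref{eq_def_ssi_interval_sg}) for a generic permutation $\pi\in\mathcal{S}_n$. The crucial observation is that the integrand depends on $\pi$ \emph{only} through the two subsets $S:=\pi_{\ge i}$ and $S\setminus\{i\}=\pi_{>i}$, because $\pi_{<i}=N\setminus S$ and $\pi_{\le i}=N\setminus(S\setminus\{i\})$ are then determined as well. This suggests grouping permutations by the value of $S$. Given $i\in S\subseteq N$ with $|S|=s$, a permutation $\pi$ satisfies $\pi_{\ge i}=S$ iff $\pi(i)=n-s+1$, the players of $N\setminus S$ fill the positions $1,\dots,n-s$ in some order, and the players of $S\setminus\{i\}$ fill the positions $n-s+2,\dots,n$ in some order; hence exactly $(s-1)!(n-s)!$ permutations correspond to each such $S$.

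Next, for a fixed $S\ni i$, I would rewrite the two bracketed differences in~(\ref{eq_def_ssi_interval_sg}) as
\[
\bigl[v(\mathbf{1}_S,x_{-S})-v(\mathbf{0}_S,x_{-S})\bigr]-\bigl[v(\mathbf{1}_{S\setminus\{i\}},x_{-(S\setminus\{i\})})-v(\mathbf{0}_{S\setminus\{i\}},x_{-(S\setminus\{i\})})\bigr].
\]
Each of these four terms has the property that the coordinates indexed by the corresponding subset $T\in\{S,S\setminus\{i\}\}$ are frozen to constants $0$ or $1$, so the integrand is constant in $x_T$. Applying Fubini, the integrals over those coordinates contribute a factor $1$ each, which shows that
\[
\int_{[0,1]^n}\bigl(v(\mathbf{1}_T,x_{-T})-v(\mathbf{0}_T,x_{-T})\bigr)\,\mathrm{d}x=C(v,T)
\]
for every $T\subseteq N$, matching the definition of $C(v,T)$ in the statement. (Monotonicity and boundedness of $v$ guarantee that all integrals exist as proper Lebesgue integrals.)

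Putting the two ingredients together, after interchanging the sum over $\pi$ with the integral and regrouping,
\[
\Psi_i(v)=\frac{1}{n!}\sum_{i\in S\subseteq N}(s-1)!(n-s)!\,\bigl[C(v,S)-C(v,S\setminus\{i\})\bigr],
\]
which is the desired formula~(\ref{eq_ssi_interval_sg}). There is no real obstacle here: the entire argument is bookkeeping plus one application of Fubini. The only point deserving care is checking that the map $\pi\mapsto (\pi_{\ge i},\pi_{>i})$ partitions $\mathcal{S}_n$ into fibers of size $(s-1)!(n-s)!$ indexed by subsets $S\ni i$, exactly as in the discrete reduction that produced~(\ref{eq_roll_call_jk_simple_games_simplified}).
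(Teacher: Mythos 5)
Your proposal is correct and follows essentially the same route as the paper's proof: group the permutations by the set $S=\pi_{\ge i}$, note that each $S\ni i$ arises from exactly $(s-1)!(n-s)!$ permutations, and observe that integrating the frozen coordinates over $[0,1]$ contributes a factor of one, so the integral of each bracketed difference is $C(v,S)-C(v,S\setminus\{i\})$. You merely spell out the bookkeeping (and the Fubini/integrability remark, which the paper defers to the discussion after the proposition) in more detail than the paper does.
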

\begin{proof}
  Setting $S_\pi^i:=\{j\in N\,:\,\pi(j)\ge \pi(i)\}$ we have $S_\pi^i=S$ for exactly $(s-1)!(n-s)!$ permutations $\pi\in\mathcal{S}_n$ and 
an arbitrary set $\{i\}\subseteq S\subseteq N$, so that Equation~(\ref{eq_ssi_interval_sg}) is just a simplification of 
Equation~(\ref{eq_def_ssi_interval_sg}).
\end{proof}

Now we have to catch up on the fact that $\Psi$ is well-defined, i.e., all stated integrals indeed exist. Using Fubini's theorem 
we can easily switch between the formulation in Definition~\ref{def_ssi_interval_sg} and that of Proposition~\ref{prop_formula_ssi_interval_sg}, 
so that it suffices to show the existence of
$$
  \int_0^1\dots\int_0^1 v(\mathbf{1}_T,x_{-T}) \operatorname{d}x_1\dots \operatorname{d}x_n 
  \quad\text{and}\quad
  \int_0^1\dots\int_0^1 v(\mathbf{0}_T,x_{-T}) \operatorname{d}x_1\dots \operatorname{d}x_n 
$$
for all $T\subseteq N$. Here we can use the fact that $v$ is monotone in each coordinate, so that this property remains true 
if we iteratively integrate one coordinate after the other. A monotone $[0,1]\to[0,1]$ function is obviously integrable. 

So, the monotonicity of interval simple games $v$ is a sufficient condition for the existence of the integrals. Another reason 
is that we eventually lose the property that $\Psi$ is positive if we allow functions $v$ that are not (weakly) monotone 
increasing. The very same effect also happens for simple games. As an example let $n=3$ and $v$ map the coalitions $\{1\}$, 
$\{1,3\}$, $\{1,2,3\}$ to $1$ and all other coalitions to zero. Using Equation~(\ref{eq_ssi_simple_games}) we obtain 
$\operatorname{SSI}_2(v)=-\tfrac{1}{3}<0$. 


\section{An axiomatization for the Shapley-Shubik like index for interval decisions}
\label{sec_axiomatization}

As we have seen in Proposition~\ref{prop_generalized_measure_point} and Proposition~\ref{prop_generalized_measure_density}, for interval simple 
games the axioms of efficiency (E), symmetry (S), the null player (NP), and the transfer property (T) are not sufficient to uniquely characterize a 
power index. Also anonymity (A) and positivity (P) are satisfied by our parametric examples of power indices. So, for an axiomatization we need 
some further axioms. To that end we consider a special class of step functions. In general, a \emph{step function} 
is a function that takes only finitely many values. Here we consider only regular pavings based on rectangular boxes. To formalize 
this, we set 
$$
  \mathcal{D}_{p}=\left\{ \left( \alpha _{0},\alpha _{1},...,\alpha_{p}\right) \in \left[ 0,1\right] ^{p+1}
  \,:\, \alpha _{0}=0,\alpha _{p}=1\text{ and    }\alpha _{i}<\alpha _{i+1}\text{ for }i=0,1,...,p-1\right\}
$$
for a given integer $p\ge 1$ and $\mathcal{A}_{p}=\left\{0,\tfrac{1}{2},1,\tfrac{3}{2},\dots,p-\tfrac{1}{2},p\right\}^{n}$, where $n$ denotes the number of players.   
Given $\alpha \in \mathcal{D}_{p}$, we denote by $\omega \left(\alpha\right) =
\max_{1\leq h\leq p}\left( \alpha _{h}-\alpha _{h-1}\right)$ the
maximal difference between two consecutive $\alpha_{h}$. For each $e 
\in \mathcal{A}_{p}$ and each $\alpha \in \mathcal{D}_{p}$ by $(\alpha)_e$ we abbreviate the open box $I_1\times I_2\times\dots\times I_n$, 
where the intervals are given by $I_j=\{\alpha_{e_j}\}$ if $e_j\in\mathbb{N}$ and $I_j=\left(\alpha_{e_j-1/2},\alpha_{e_j+1/2}\right)$ 
otherwise.\footnote{Slightly abusing notation, we may also write $I_j=\left(\alpha_{\left\lfloor e_j\right\rfloor},\alpha_{\left\lceil e_j\right\rceil}\right)$.} 
We denote the closure of $(\alpha)_e$ by $[\alpha]_e$, i.e., all open intervals $I_j$ are replaced by the corresponding closed intervals.
  
Given a paving of $[0,1]^n$, described by $\alpha\in\mathcal{D}_p$, we assume that a step function is constant on the interior of every face of one of the boxes.  
\begin{definition}
  \label{definition_step_function}
  A function $f\colon[0,1]^n\to[0,1]$ is a \emph{step function} if there exists an integer $p\ge 1$ and $\alpha\in \mathcal{D}_p$ such 
  that $f$ is constant on $(\alpha)_e$ for all $e\in\mathcal{A}_{p}$. We 
  call $\alpha$ a \emph{discretization} of $f$.   
\end{definition}

Figure~\ref{phase_2}, in the appendix, shows an example of a step function for $n=2$ players and discretization $\alpha=\left(0,\tfrac{1}{4},1\right)$, i.e., $p=2$. 
The open rectangle $\left(0,\tfrac{1}{4}\right)\times\left(\tfrac{1}{4},1\right)$ is denoted by $(\alpha)_e$, where $e=\left(\tfrac{1}{2},\tfrac{3}{2}\right)\in\mathcal{A}_2$. 
The (closed) edge $\left[0,\tfrac{1}{4}\right]\times\{1\}$ is denoted by $[\alpha]_e$, where $e=\left(\tfrac{1}{2},2\right)$, and the vertex 
$\left\{1\right\}\times\left\{\tfrac{1}{4}\right\}$, i.e., the point with coordinates $\left(1,\tfrac{1}{4}\right)$, is denoted by either $[\alpha]_e$ or 
$(\alpha)_e$, where $e=\left(2,1\right)$. We remark that the number of fractional entries in $e\in\mathcal{A}_p$ gives the geometric dimension of the 
face $[\alpha]_e$ (or $(\alpha)_e$). In order to have a simple notation for the full dimensional boxes we set 
$\overline{\mathcal{A}}_p=\left\{\tfrac{1}{2},\tfrac{3}{2},\dots,p-\tfrac{1}{2}\right\}^{n}$. 

Since it would be beneficial to describe a step function by specifying its values on the full dimensional boxes, i.e., on $(\alpha)_{e}$ for 
all $e\in\overline{\mathcal{A}}_p$, we introduce the concept of a \emph{regular step function}. Here we determine the values of the step 
function on the faces that are not full dimensional simply by averaging over the values attained in the interior of all 
neighbored full dimensional boxes. This approach allows us to embed any given $(j,k)$ simple game in an interval simple game, see 
Definition~\ref{def_natural_embedding}, so that the corresponding power indices coincide, see Proposition~\ref{prop_embed}. 

\begin{definition}
  \label{definition_regular_step_function}
  A step function $f\colon[0,1]^n\to[0,1]$ with discretization $\alpha\in \mathcal{D}_p$ is called \emph{regular} if we have $f(\mathbf{0})=0$, $f(\mathbf{1})=1$, and
  \begin{equation}
    \label{eq_average}
    f(x)=\sum_{e\in E(x)} f(c_e) / |E(x)|\quad \forall x\in[0,1]^n\backslash\{\mathbf{0},\mathbf{1}\}, 
  \end{equation} 
  where $c_e$ denotes the center\footnote{\label{fn_center}The center of $[\alpha]_e$ is given by 
  $\left(\frac{\alpha_{\left\lfloor e_1\right\rfloor}+\alpha_{\left\lceil e_1\right\rceil}}{2},\dots,
  \frac{\alpha_{\left\lfloor e_n\right\rfloor}+\alpha_{\left\lceil e_n\right\rceil}}{2}\right)$.} of $[\alpha]_e$ 
  and $E(x)$ denotes the set of elements $e$ in $\overline{\mathcal{A}}_p$ such that $x\in[\alpha]_e$. If Equation~(\ref{eq_average}) 
  is only satisfied for all $x\in(0,1)^n$, then $f$ is called \emph{semi-regular}.
\end{definition}  

We need to specify a regular step function $f$ only on the interior of the full dimensional boxes $(\alpha)_e$, where $e\in\overline{\mathcal{A}_p}$. Since 
a step function is constant on those faces it is e.g.\ sufficient to specify the value $f(c_e)$ at the corresponding center. For all other points in $[0,1]^n$, 
i.e., those that are on the boundary of at least one of the boxes, we can use Equation~(\ref{eq_average}) to determine the function value. This average type extension to 
the boundary, is essential in our context, since $\Psi(v)$ heavily depends on evaluations of $v$ at the boundary 
of $[0,1]^n$.\footnote{We will see latter that boundary points of boxes of the underlying paving that are not boundary points of $[0,1]^n$ do not play 
a role for the value of $\Psi$.} Alternatively, we may replace $0$ and $1$ by $\varepsilon$ and $1-\varepsilon$, respectively, and consider the limit $\varepsilon\to 0$. 
Here we prefer the less technical, but more restrictive, variant of Equation~(\ref{eq_average}).

Note that if a regular step function $f$ is monotone, with given parameters $p$ and $\alpha$, then $f(x)\le f(y)$ is equivalent to $e_x\le e_y$ 
for all $e_x,e_y\in\overline{\mathcal{A}}_p$ and all $x\in(\alpha)_{e_x}$, $y\in(\alpha)_{e_y}$.  

Next we show that we can embed each $(j,k)$ simple game $v$ as interval simple game $\hat{v}$ and that the corresponding power indices 
$\operatorname{SSI}(v)$ and $\Psi(\hat{v})$ coincide.

\begin{definition}
  \label{def_natural_embedding}
  For integers $j,k\ge 2$ let $v$ be a $(j,k)$ simple game for $n$ players. For $\alpha=\left(0,\tfrac{1}{j},\tfrac{2}{j},\dots,\tfrac{j-1}{j},1\right)$ 
  let $\hat{v}$ be the regular step function with discretization $\alpha$ uniquely defined by $\hat{v}(x)=v(e)/(k-1)$ for all $x\in (\alpha)_{\bar{e}}$ 
  and all $\bar{e}\in\overline{\mathcal{A}}_j$, where $e=\bar{e}-\tfrac{1}{2}\cdot\mathbf{1}$. We call $\hat{v}$ the natural embedding of $v$. 
\end{definition}

It can be easily checked that this embedding transfers the null player property and the property of symmetric players, i.e., a null player in $v$ is also a null player 
in $\hat{v}$ and two symmetric players in $v$ are also symmetric in $\hat{v}$. The $[0,1]^n\to[0,1]$ function $\hat{v}$ is monotone and satisfies $\hat{v}(\mathbf{0})=0$ 
and $\hat{v}(\mathbf{1})=1$,\footnote{Since $v(0,\dots,0)=0$ and $v(j-1,\dots,j-1)=k-1$, the conditions $f(\mathbf{0})=0$, $f(\mathbf{1})=1$ in 
Definition~\ref{definition_regular_step_function} are not necessary for this conclusion.} i.e., $\hat{v}$ is an interval simple game. 

\begin{proposition}
  \label{prop_embed}
  For integers $j,k\ge 2$ let $v$ be a $(j,k)$ simple game for $n$ players and $\hat{v}$ its natural embedding with discretization $\alpha$, as 
  specified in Definition~\ref{def_natural_embedding}, then $\operatorname{SSI}(v)=\Psi(\hat{v})$.
\end{proposition}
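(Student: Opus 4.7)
The plan is to reduce both $\operatorname{SSI}(v)$ and $\Psi(\hat v)$ to the same weighted sum of the quantities $C(\cdot,T)$ and then verify the single identity $C(\hat v,T)=C(v,T)$ for every $T\subseteq N$. Proposition~\ref{prop_formula_ssi_interval_sg} gives
$$\Psi_i(\hat v)=\sum_{i\in S\subseteq N}\frac{(s-1)!(n-s)!}{n!}\bigl[C(\hat v,S)-C(\hat v,S\setminus\{i\})\bigr],$$
and equation~(\ref{eq_roll_call_jk_simple_games_simplified}) gives exactly the same combinatorial expression for $\operatorname{SSI}_i(v)$ in terms of $C(v,\cdot)$. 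So the full statement collapses to the single identity above.

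First I would integrate (resp.\ sum) out the coordinates in $T$ on each side. Since neither $v(\mathbf{1}_T,x_{-T})-v(\mathbf{0}_T,x_{-T})$ nor its continuous analogue depends on the $T$-coordinates, the $(j,k)$ sum picks up a factor $j^{|T|}$ and the interval integral picks up a factor $1$, yielding
$$C(v,T)=\frac{1}{j^{|N\setminus T|}(k-1)}\sum_{e_{-T}\in J^{N\setminus T}}\bigl(v(\mathbf{1}_T,e_{-T})-v(\mathbf{0}_T,e_{-T})\bigr)$$
and a corresponding integral over $[0,1]^{N\setminus T}$ for $C(\hat v,T)$.

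The crux is then to evaluate $\hat v(\mathbf{1}_T,x_{-T})$ for almost every $x_{-T}\in[0,1]^{N\setminus T}$. Fix $x_{-T}$ such that for each $i\in N\setminus T$ the coordinate $x_i$ lies in the interior of some interval $(e_i/j,(e_i+1)/j)$ with $e_i\in J$; this excludes only a Lebesgue null set. I claim that the point $(\mathbf{1}_T,x_{-T})$ then lies in the closure of a \emph{unique} box $[\alpha]_{\bar e}$ with $\bar e\in\overline{\mathcal A}_j$: the coordinate $1$ lies on the closure of only one box in each $T$-direction (with $\bar e_i=j-\tfrac12$), while $x_i$ lies in exactly one open box in each $N\setminus T$-direction. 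Hence the averaging formula of Definition~\ref{definition_regular_step_function} collapses to a single term, giving $\hat v(\mathbf{1}_T,x_{-T})=v(\mathbf{1}_T,e_{-T})/(k-1)$, and analogously for the $\mathbf 0_T$ side. Decomposing $[0,1]^{N\setminus T}$ into its $j^{|N\setminus T|}$ open boxes of volume $1/j^{|N\setminus T|}$ on each of which $\hat v(\mathbf{1}_T,\cdot)-\hat v(\mathbf{0}_T,\cdot)$ is constant, the integral becomes exactly the finite sum displayed above, proving $C(\hat v,T)=C(v,T)$.

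The main obstacle is precisely the boundary issue: because $(\mathbf{1}_T,x_{-T})$ always lies on $\partial[0,1]^n$ whenever $T\neq\emptyset$, the value of $\hat v$ there is not handed to us by the defining rule $\hat v(x)=v(e)/(k-1)$ on open boxes, and one must genuinely invoke the regularity convention from Definition~\ref{definition_regular_step_function}. What rescues the argument is that fixing a coordinate to $0$ or $1$ leaves only one candidate box in that direction, so regularity still pins down a single value almost everywhere in $x_{-T}$; the remaining measure-zero set on which several boxes meet contributes nothing to the integral. Everything else is routine Riemann-integral bookkeeping.
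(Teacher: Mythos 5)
Your proposal is correct and follows essentially the same route as the paper: both reduce the claim, via (\ref{eq_roll_call_jk_simple_games_simplified}) and Proposition~\ref{prop_formula_ssi_interval_sg}, to the single identity $C(\hat v,T)=C(v,T)$ and then evaluate the integral by decomposing $[0,1]^n$ into the open boxes of the discretization. The only difference is that you spell out the boundary detail (that regularity pins down $\hat v(\mathbf{1}_T,x_{-T})=v(\mathbf{1}_T,e_{-T})/(k-1)$ almost everywhere because only one full-dimensional box is adjacent in each fixed coordinate direction), which the paper's computation uses implicitly without comment.
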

\begin{proof} 
  Due to (\ref{eq_roll_call_jk_simple_games_simplified}) and Equation~(\ref{eq_ssi_interval_sg}) it suffices to verify the coincidence of the 
  two different expressions for $C(v,T)$ and $C(\hat{v},T)$ for all $T\subseteq N$. We compute
  \begin{eqnarray*}
    C(\hat{v},T)&=&\int_{[0,1]^n} \hat{v}(\mathbf{1}_T,x_{-T})-\hat{v}(\mathbf{0}_T,x_{-T})\,\operatorname{d}x\\
    &=& \sum_{\overline{e}\in\overline{\mathcal{A}}_j} \int_{(\alpha)_{\overline{e}}} \hat{v}(\mathbf{1}_T,x_{-T})-\hat{v}(\mathbf{0}_T,x_{-T})\,\operatorname{d}x\\ 
    &=& \frac{1}{k-1}\cdot \sum_{\overline{e}\in\overline{\mathcal{A}}_j} \int_{(\alpha)_{\overline{e}}} v(\mathbf{1}_T,e_{-T})-v(\mathbf{0}_T,e_{-T})\,\operatorname{d}x 
    \text{ where } e=\bar{e}-\tfrac{1}{2}\cdot\mathbf{1}\\
    &=& \frac{1}{j^n(k-1)}\cdot \sum_{x\in J^n} v(\mathbf{1}_T,x_{-T})-v(\mathbf{0}_T,x_{-T})  
    = C(v,T).
  \end{eqnarray*}
\end{proof}

So, in other words we have associated a $(j,k)$ simple game with an interval simple game that is a step function. For $j=2$ there is an even more 
general statement. For any parameter $\tau\in(0,1)$ we can replace the discretization $\alpha=\left(0,\tfrac{1}{2},1\right)$ by $(0,\tau,1)$. 
\begin{proposition}
  \label{prop_embed_j_2}
  For an integer $k\ge 2$ and $\tau\in(0,1)$ let $v$ be a $(2,k)$ simple game for $n$ players and $\hat{v}$ be the regular step function with discretization 
  $\alpha=(0,\tau,1)$ uniquely defined by $\hat{v}(x)=v(e)/(k-1)$ for all $x\in (\alpha)_{\bar{e}}$ and all $\bar{e}\in\overline{\mathcal{A}}_2$, 
  where $e=\bar{e}-\tfrac{1}{2}\cdot\mathbf{1}$. Then, $\operatorname{SSI}(v)=\Psi(\hat{v})$. 
\end{proposition}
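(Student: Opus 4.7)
My plan is to reduce the statement to the case $k=2$ (ordinary simple games) and then invoke the Mann--Shapley invariance of the roll-call model.

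First I would layer-decompose the $(2,k)$ simple game $v$ into the $k-1$ simple games $v_h$ defined by $v_h(x)=1$ iff $v(x)\ge h$. Monotonicity and the boundary values of $v$ ensure that each $v_h$ is itself a simple game and $v=\sum_{h=1}^{k-1}v_h$ pointwise on $\{0,1\}^n$. Since player $i$ is $h$-pivotal for $(\pi,x)$ in $v$ exactly when $i$ is pivotal in the simple-game sense for $(\pi,x)$ in $v_h$, formula~(\ref{eq_roll_call_jk_simple_games}) yields $\operatorname{SSI}_i(v)=\tfrac{1}{k-1}\sum_{h=1}^{k-1}\operatorname{SSI}_i(v_h)$. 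The embedding $v\mapsto\hat{v}$ of the proposition is linear in $v$, so $\hat{v}=\tfrac{1}{k-1}\sum_{h=1}^{k-1}\hat{v}_h$, and Definition~\ref{def_ssi_interval_sg} is linear in its argument, so $\Psi_i(\hat{v})=\tfrac{1}{k-1}\sum_{h=1}^{k-1}\Psi_i(\hat{v}_h)$. Thus it suffices to prove $\operatorname{SSI}(w)=\Psi(\hat{w})$ whenever $w$ is a simple game.

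Fix such a simple game $w$. For almost every $x\in[0,1]^n$, namely those with $x_j\neq\tau$ for all $j$, each of the four arguments $(x_{\pi_{<i}},\mathbf{1}_{\pi_{\ge i}})$, $(x_{\pi_{<i}},\mathbf{0}_{\pi_{\ge i}})$, $(x_{\pi_{\le i}},\mathbf{1}_{\pi_{>i}})$, $(x_{\pi_{\le i}},\mathbf{0}_{\pi_{>i}})$ entering Definition~\ref{def_ssi_interval_sg} lies on a face of $[0,1]^n$ adjacent to a unique full-dimensional box of the paving: the coordinates fixed to $1$ (resp.\ $0$) force $\bar{e}_j=\tfrac{3}{2}$ (resp.\ $\tfrac{1}{2}$), while each free $x_j$ already sits in the interior of one of the intervals $(0,\tau)$ or $(\tau,1)$. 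Regularity of $\hat{w}$ therefore collapses the averaging in Definition~\ref{definition_regular_step_function} to that single box, so $\hat{w}$ equals $w$ at the corresponding vertex of $\{0,1\}^n$. Introducing iid Bernoulli variables $Y_1,\dots,Y_n$ with $\Pr(Y_j=0)=\tau$, the integrand reduces to the pivotality indicator $M(w,(\pi,Y),i)$ of Subsection~\ref{subsec_roll_call}, and
\begin{equation*}
\Psi_i(\hat{w})\;=\;\frac{1}{n!}\sum_{\pi\in\mathcal{S}_n}\mathbb{E}_Y\bigl[M(w,(\pi,Y),i)\bigr],
\end{equation*}
which is precisely the probability that $i$ is the roll-call pivot when voters independently say {\lq\lq}yes{\rq\rq} with probability $p=1-\tau\in(0,1)$.

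The conclusion then follows from the Mann--Shapley invariance recalled in Subsection~\ref{subsec_roll_call} via \cite{MannShapley}: for a simple game this probability equals $\operatorname{SSI}_i(w)$ for every $p\in[0,1]$. I expect the main obstacle to be the bookkeeping in the previous paragraph, namely verifying that the regular-step-function convention, which in general averages over several adjacent full-dimensional boxes on a shared face, really does collapse to a single box for each of the four boundary evaluations entering $\Psi_i$ for almost every $x$. Once this is in hand, the proof is driven entirely by Mann--Shapley, which is exactly why the extra flexibility in $\tau$ is special to $j=2$ (where $J^n=\{0,1\}^n$) and is not available for the general setting of Proposition~\ref{prop_embed}.
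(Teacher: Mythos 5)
Your proof is correct, but it takes a genuinely different route from the paper's. The paper argues algebraically: it views the $(2,k)$ game as a rescaled TU game, checks on (embeddings of) unanimity games that the null player property, symmetry and efficiency of $\Psi$ force the value $1/|T|$ independently of $\tau$, extends by linearity of $\Psi$ over the unanimity basis to conclude that $\Psi(\hat{v})$ does not depend on $\tau$ at all, and then simply sets $\tau=\tfrac{1}{2}$ and quotes Proposition~\ref{prop_embed}. You instead reduce to simple games by the layer decomposition $v_h(x)=1\iff v(x)\ge h$ (using that $h$-pivotality in $v$ is pivotality in $v_h$, so $\operatorname{SSI}(v)=\tfrac{1}{k-1}\sum_h\operatorname{SSI}(v_h)$, and that both the embedding and $\Psi$ are linear), and then compute $\Psi(\hat{w})$ directly as a roll-call pivot probability with i.i.d.\ Bernoulli votes, $\Pr(\text{yes})=1-\tau$, invoking the Mann--Shapley invariance recalled in Subsection~\ref{subsec_roll_call}. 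The bookkeeping you worried about is fine: for $x_j\neq\tau$ (a.e.) each of the four evaluation points lies in exactly one closed box of the paving (here $\tau\in(0,1)$ is used), so regularity collapses to a single value of $w$, and the uncertainty-reduction integrand is indeed the pivotality indicator for a simple game. What each approach buys: yours makes explicit the probabilistic interpretation that the paper only alludes to in the remark following the proposition (probability $\tau$ of voting no), at the cost of resting on the cited invariance result from \cite{MannShapley}/\cite{kurz2018roll} rather than proving it; the paper's argument is self-contained given Proposition~\ref{proposition_power_properties} and Proposition~\ref{prop_embed}, and directly exhibits the $\tau$-independence of $\Psi(\hat{v})$ as the structural reason behind the statement.
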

\begin{proof} 
  Let $\tilde{v}$ be a TU game, i.e., a mapping $2^N\to\mathbb{R}$, which maps $\emptyset$ to $0$. We associate $\tilde{v}$ with $v$ by setting $\tilde{v}(S)=v(\tilde{S})/(k-1)$ 
  for all $S\subseteq N$ where $\tilde{S}\in \{0,1\}^n$ is defined for all $i\in N$ by $\tilde{S}_i=1$ if $i\in S$ and $\tilde{S}_i=0$ otherwise. With this, $\hat{v}$ is the regular step function with discretization $\alpha=(0,\tau,1)$ uniquely defined by 
  $\hat{v}(x)=\tilde{v}(e)$ for all $x\in (\alpha)_{\bar{e}}$ and all $\bar{e}\in\overline{\mathcal{A}}_2$, where $e=\bar{e}-\tfrac{1}{2}\cdot\mathbf{1}$. If 
  $v$ is such that $\tilde{v}(S)=1$ if $T\subseteq S$ and $\tilde{v}(S)=0$ otherwise, where $\emptyset\neq T\subseteq N$, i.e., 
  $\tilde{v}$ is a unanimity game, then the null player property, symmetry, and efficiency give $\operatorname{SSI}_i(v)=\Psi_i(\hat{v})=1/|T|$ for all $i\in T$ and $\operatorname{SSI}_i(v)=\Psi_i(\hat{v})=0$ otherwise. Note 
  that the values are independent from $\tau$. Since each TU game can be written as a linear combination of unanimity games and $\Psi$ is linear, also in general, 
  $\Psi(\hat{v})$ does not depend on $\tau$. Using $\tau=\tfrac{1}{2}$ we can use Proposition~\ref{prop_embed} to conclude $\operatorname{SSI}(v)=\Psi(\hat{v})$.
\end{proof}

We remark that the result is essentially implied by the roll call interpretation for the Shapley value for TU games, or for $k=2$ for the Shapley-Shubik 
index for simple games, with a probability of $\tau$ for voting {\lq\lq}no{\rq\rq} and a probability of $1-\tau$ for voting {\lq\lq}yes{\rq\rq}, see e.g.\ 
\cite{kurz2018roll}. The finer coincidence of $C(\hat{v},T)=C(v,T)$ is valid for $\tau=\tfrac{1}{2}$ only. 

We remark that in \cite[Definition 3.7]{freixas2005shapley} the author considers a more general definition of a Shapley-Shubik like index for 
$(j,k)$ simple games than we have presented here. In general a so-called numeric evaluation comes into play and our case is called uniform 
numeric evaluation. We remark that Proposition~\ref{prop_embed} can also be generalized in that direction modifying the paving for the 
associated step function in a natural way.  

The proof of Proposition~\ref{prop_embed_j_2} suggests an even more general statement for $(2,k)$ simple games. The essential part is that we used 
a map $\eta$ from the set of $(2,k)$ simple games into interval simple games that preserves null players, symmetric players and linearity, i.e., 
$\eta\left(\sum_T \lambda_T\cdot v_T\right)=\sum_T \lambda_T \cdot\eta\left(v_T\right)$. For each such map $\eta$ we have $\operatorname{SSI}(v)=\Psi(\eta(v))$ 
since we can write $v$ as a linear combination of unanimity unanimity games. As an application we mention the following embedding of a simple game as an 
interval simple game, which we will use later on, see the proof of Lemma~\ref{lemma_hsi_constants}.

\begin{proposition}
  \label{prop_embed_sg_special}
  Let $v$ be a simple game for $n$ players and $\tilde{v}$ be the semi-regular step function (with discretization $\alpha=(0,1)$) 
  defined by $\tilde{v}(x)=1$ if $\{i\,:\,x_i=1\}$ is a winning coalition in $v$ and $\tilde{v}(x)=0$ otherwise. 
  Then, $\operatorname{SSI}(v)=\Psi(\tilde{v})$. 
\end{proposition}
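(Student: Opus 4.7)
The plan is to apply the simplified formula~(\ref{eq_ssi_interval_sg}) from Proposition~\ref{prop_formula_ssi_interval_sg} and verify that the coalitional quantity $C(\tilde{v},T)$ agrees with the simple-game value $v(T)$ for every $T\subseteq N$; once this is done, the identity $\operatorname{SSI}(v)=\Psi(\tilde{v})$ reduces to a direct comparison with Equation~(\ref{eq_ssi_simple_games}).

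First I would check that $\tilde{v}$ is an interval simple game, so that Proposition~\ref{prop_formula_ssi_interval_sg} applies. The boundary conditions $\tilde{v}(\mathbf{0})=v(\emptyset)=0$ and $\tilde{v}(\mathbf{1})=v(N)=1$ are immediate, and monotonicity follows because $x\le y$ in $[0,1]^n$ forces $\{j:x_j=1\}\subseteq\{j:y_j=1\}$, whence the monotonicity of $v$ propagates to $\tilde{v}$. This step is entirely routine.

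The main (but still mild) obstacle is the evaluation of $C(\tilde{v},T)$. Observe that the set of $x\in[0,1]^n$ at which at least one coordinate from $N\backslash T$ equals $0$ or $1$ is a finite union of coordinate hyperplanes, hence a Lebesgue null set. For every $x$ outside this null set one has $\{j:(\mathbf{1}_T,x_{-T})_j=1\}=T$ and $\{j:(\mathbf{0}_T,x_{-T})_j=1\}=\emptyset$, so by the defining formula for $\tilde{v}$ the integrand in $C(\tilde{v},T)$ reduces almost everywhere to $v(T)-v(\emptyset)=v(T)$; integrating over the unit cube gives $C(\tilde{v},T)=v(T)$. Semi-regularity (rather than full regularity) is essential here, because the values of $\tilde{v}$ at points of the form $(\mathbf{1}_T,x_{-T})$ with $x_{-T}\in(0,1)^{N\backslash T}$ are pinned down by the defining formula to $v(T)$ and are not forced to be averages over adjacent full-dimensional cells of the paving $\alpha=(0,1)$.

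Plugging $C(\tilde{v},S)=v(S)$ and $C(\tilde{v},S\backslash\{i\})=v(S\backslash\{i\})$ into~(\ref{eq_ssi_interval_sg}) and comparing with~(\ref{eq_ssi_simple_games}) then yields $\Psi_i(\tilde{v})=\operatorname{SSI}_i(v)$ for every $i\in N$. As an alternative one could invoke the meta-argument sketched just before the proposition: the assignment $v\mapsto\tilde{v}$ extends linearly to TU games via $\tilde{v}(x)=v(\{j:x_j=1\})$ and preserves null players and symmetric players, so it would suffice to check the claim on unanimity games, where it follows from the null-player property, symmetry and efficiency. Either route works, but the direct calculation is the cleaner one.
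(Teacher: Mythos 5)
Your proof is correct, but it takes a different route from the one the paper has in mind for this statement. The paper justifies Proposition~\ref{prop_embed_sg_special} by the meta-argument sketched immediately before it (and carried out in the proof of Proposition~\ref{prop_embed_j_2}): the map $v\mapsto\tilde{v}$, extended linearly to TU games via $\tilde{v}(x)=v(\{j:x_j=1\})$, preserves null players, symmetric players and linear combinations, so it suffices to verify the claim on unanimity games, where it follows from (NP), symmetry and efficiency of $\Psi$ -- this is exactly your "alternative" route, which you mention but do not elaborate. Your primary argument instead computes $C(\tilde{v},T)$ directly: since for $x_{-T}\in(0,1)^{N\backslash T}$ one has $\tilde{v}(\mathbf{1}_T,x_{-T})=v(T)$ and $\tilde{v}(\mathbf{0}_T,x_{-T})=v(\emptyset)=0$, and the exceptional set is null (with integrability guaranteed by monotonicity, as the paper notes), one gets $C(\tilde{v},T)=v(T)$ for all $T\subseteq N$, and then (\ref{eq_ssi_interval_sg}) versus (\ref{eq_ssi_simple_games}) finishes the proof. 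This is in effect the same computation the paper performs for Proposition~\ref{prop_embed}, specialized to the coarser discretization $\alpha=(0,1)$, and it is arguably cleaner and more self-contained here: it avoids the decomposition into unanimity games and the implicit appeal to linearity of $\Psi$, while the paper's route has the advantage of isolating exactly which structural properties of the embedding (preservation of null players, symmetric players, and linearity) drive the coincidence, which is what makes the generalization to arbitrary such maps $\eta$ possible. Your observation that semi-regularity (rather than regularity) is what makes the boundary values of $\tilde{v}$ compatible with the defining formula is also accurate and worth keeping; your preliminary check that $\tilde{v}$ is an interval simple game, so that Proposition~\ref{prop_formula_ssi_interval_sg} applies, closes the only potential gap in the direct computation.
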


\medskip

Our aim is to approximate interval simple games by regular step functions. Given an interval simple game $v\colon[0,1]^n\to[0,1]$ and a 
discretization $\alpha\in\mathcal{D}_p$ we call a regular step function $f\colon[0,1]^n\to[0,1]$ with discretization $\alpha$ an 
\emph{approximation} of $v$ if for each $e\in\overline{\mathcal{A}}_p$ there exists a value $x_e\in[\alpha]_e$ such that $f(y_e)=v(x_e)$ for all 
$y_e\in(\alpha)_e$. Of course it is easy to construct such approximations, i.e., we may take the center $x_e=c_e$ of $[\alpha]_e$, 
see Footnote~\ref{fn_center} . Any approximation $f$ of $v$ is an interval simple game, i.e., it is monotone and we 
have $f(\mathbf{0})=0$, $f(\mathbf{1})=1$. Moreover, if players $i$ and $j$ are symmetric in $v$, then they are symmetric in $f$ and if player 
$i$ is a null player in $v$, then it is a null player in $f$. The condition $f(y_e)=v(x_e)$ captures the idea of an approximation.\footnote{Alternatively, 
we might have also included the supremum or the infimum on $[\alpha]_e$, or some value in between, as a possible value for $f(y_e)$ or require that 
the approximation get {\lq\lq}better{\rq\rq} if $\omega(\alpha)$ 
tends to zero.}

\begin{definition}
  \label{def_approx}
  Let $v$ be an interval simple game and $\left(f^h\right)_{h\in\mathbb{N}}$ a sequence of approximations with discretizations $\alpha^h$. 
  If $\lim_{h\to\infty}\omega\!\left(\alpha^h\right)=0$ and $\lim_{h\to\infty} \sup_{x\in[0,1]^n} \left|v(x)-f^h(x)\right|=0$, then $\left(f^h\right)_{h\in\mathbb{N}}$ 
  is called an \emph{approximation sequence} of $v$. If a least one approximation sequence exists then $v$ is called \emph{approximable}.  
\end{definition}

Note that we have required point-wise convergence in Definition~\ref{def_approx}. In combination with Equation~(\ref{eq_average}) for regular step functions, 
this is quite restrictive but e.g.\ satisfied by interval simple games that are continuous. Technically, we might relax the condition of point-wise convergence 
so that deviations that do not change the value of $\Psi(v)$ are ignored. To ease the exposition we do not go into details here. 

In the subsequent Subsection~\ref{subsec_HIS} we introduce a further axiom in Definition~\ref{def_his} that allows to characterize $\Psi$ on regular step functions, see 
Theorem~\ref{thm_characterization_step_function}. In Subsection~\ref{subsec_characterization} we introduce another axiom in Definition~\ref{definition_limit_psi} 
that allows us to characterize $\Psi$ on approximable interval simple games, see Theorem~\ref{thm_characterization}. 

\subsection{Homogeneous Increments Sharing}
\label{subsec_HIS}

Let $v$ be an interval simple game and $S\in 2^{N}\backslash \{N\}$ a coalition. Then, the \emph{potential influence}  
of the coalition $S$ denoted by $\Delta v\left(S,x_{-S}\right)$ is defined by $\Delta v\left(
S,x_{-S}\right) =v\left( \mathbf{1}_{S}\,,x_{-S}\right) -v\left( \mathbf{0}_{S}\,,x_{-S}\right) $. The potential 
influence $\Delta v\left(S,x_{-S}\right)$ of $S$ measures the greatest change in the social decision that may be 
observed when voters in $S$ change their respective opinions assuming that the profile of the voters in $N\backslash S$ 
is given by $x_{-S}$. A $T$-domain is a Cartesian product $D=\bigtimes_{i\in T}\left[a_{i},b_{i}\right] $ given some 
$a_{i},b_{i}\in [0,1]$, meaning that each voter $i\in T$ freely and independently chooses his levels of approbation 
from $\left[a_{i},b_{i}\right]\subseteq [0,1]$.

\begin{definition}
  \label{definition_local_improvement}
  Let $v$ be an interval simple game for $n$ players, $\emptyset\subsetneq S\subsetneq N$ 
  a coalition, $\varepsilon\in\mathbb{R}_{\ge 0}$,   
  and $D=\bigtimes_{i\in N\backslash S}\left[a_{i},b_{i}\right]$ be an $(N\backslash S)$-domain, where $0\le a_i \le b_i\le 1$ 
  for all $i\in N\backslash S$. If an interval simple game $u$ for $n$ players satisfies
  \begin{itemize}
  \item $\forall x_{-S}\in (0,1)^{N\backslash S},$ $\Delta v\left(S,x_{-S}\right) =\left\{
\begin{tabular}{lll}
$\Delta u\left( S\,,x_{-S}\right) $ & \text{if} & $x_{-S}\notin
\bigtimes_{i\in N\backslash S}\left[a_{i},b_{i}\right] $ \\
&  &  \\
$\Delta u\left( S,x_{-S}\right) +\varepsilon $ & \text{if} & $
x_{-S}\in \bigtimes_{i\in N\backslash S}\left( a_{i},b_{i}\right) $
\end{tabular}
\right.$
\item $\forall T\in 2^{N}\backslash \left\{S\right\} 
,\forall x_{-T}\in (0,1)^{N\backslash T}$, $\Delta v\left( T,x_{-T}\right) =\Delta
u\left( T,x_{-T}\right)$,
\end{itemize}
then $v$ is a \emph{local increment} of $u$ and we write $u\localinc{S}{\varepsilon}{D}v$. For 
$S=\emptyset$ we write $u\localinc{\emptyset}{\varepsilon}{D}v$ if $v(x)=u(x)+\varepsilon$ for 
all $x\in(0,1)^n$, i.e., we are indirectly setting $D=[0,1]^n$. For $S=N$ we should have chosen $D=\emptyset$, but 
we choose $D=[c,1]^n$ instead, where $c\in(0,1)$. With this, the condition for $u\localinc{S}{\varepsilon}{D}v$ 
is $v(x)=u(x)+\varepsilon$ for all $x\in(c,1)^n$.
\end{definition}

In words $u\localinc{S}{\varepsilon}{D}v$ means that on the one hand, the potential influence of coalition $S$ increases by a
constant increment $\varepsilon$ whenever each voter $i\in N\backslash S$ picks his
opinion from $\left(a_{i},b_{i}\right)$, but remains unchanged if the opinion of at least one voter $i\in N\backslash S$ 
is outside of  $\left[a_{i},b_{i}\right]$. It is then reasonable that the corresponding increment in the collective decision
mainly comes from voters in $S$ and is uniform, local and elsewhere valid on $\bigtimes_{i\in S}\left(a_{i},b_{i}\right)$. 
On the other hand, the potential influence of any other coalition $T$ remains unchanged unless some
voters in $S$ show a full support ($x_{i}=1$), or no support ($x_{i}=0$). In
such situations, the shares by a conceivable power index from $u$ to $v$ are expected to change accordingly by only uniformly rewarding 
voters in $S$ in the expense of voters outside of $S$. In the extreme cases of $S=\emptyset$ and $S=N$, this reasoning does not 
makes sense and we consider uniform changes for a completely symmetric $D$. 

\begin{definition}
  \label{def_his}
  A power index $\Phi$ for interval simple games satisfies the \emph{homogeneous increments sharing} (HIS) axiom, if for all 
  $\emptyset\subseteq S\subseteq N$ and for all interval simple games $u$ and $v$ such that 
  $u\localinc{S}{\varepsilon}{D}v$ for some $\varepsilon >0$ and some $(N\backslash S)$-domain $D$ we have
\begin{equation}
\Phi_{i}\left( v\right) -\Phi_{i}\left( u\right) =\left\{
\begin{array}{ccc}
\lambda_{\Phi }(S)\cdot \varepsilon\cdot \operatorname{vol}(D) & \text{if} & i\in S, \\
&  &  \\
-\gamma_{\Phi }(S)\cdot \varepsilon \cdot \operatorname{vol}(D) & \text{if} & i\notin S,
\end{array}%
\right.  \label{Eq HIS}
\end{equation}%
where $\lambda_{\Phi}(S)$ and $\gamma_{\Phi}(S)$ are two real 
constants that do only depend on $S$, i.e., they do neither depend on $u$ and $v$ nor on $\varepsilon$ and $D$, and 
$\operatorname{vol}(D)$ denotes the volume of $D$. (Note the special shapes we assume for $D$ in the 
case of $S=\emptyset$ or $S=N$.)
\end{definition}


The term $\varepsilon\cdot\operatorname{vol}(D)$ captures the fact that the change in the share of a voter is both proportional to the
magnitude $\varepsilon$ of the homogeneous increment and to the (local) volume $\operatorname{vol}(D)$ of the domain on which this change 
occurs. 
For simple games the analog of (HIS) is the axiom of \emph{Symmetric Gain-Loss} (SymGL) \cite[p. 93]{laruelle2001shapley}. 
The second part cannot occur for simple games since we have $v(\emptyset)=0$ and $v(N)=1$ for every simple game by definition. For 
$(j,k)$ simple games the axiom of \emph{level change on unanimity games}, see \cite[Subsection 1.8.2]{shapleybookjosep}, is closely related to (HIS).

\begin{proposition}
  \label{proposition_HIS} 
  The power index $\Psi$ for interval simple games satisfies (HIS) for
  \begin{equation}
  \left( \lambda_{\Psi}(S),\gamma_{\Psi}(S)\right) =
  \left( \frac{(s-1)!(n-s)!}{n!},\frac{s!(n-s-1)!}{n!}\right)  \label{eq_coeff_HIS}
  \end{equation}
  for all $\emptyset\subsetneq S\subsetneq N$. Moreover, we can set $\left( \lambda_{\Psi}(\emptyset),\gamma_{\Psi}(\emptyset)\right)=
  \left( \lambda_{\Psi}(N),\gamma_{\Psi}(N)\right)=(0,0)$. 
\end{proposition}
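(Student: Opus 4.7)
The plan is to plug the local-increment hypothesis into the simplified formula for $\Psi$ from Proposition~\ref{prop_formula_ssi_interval_sg} and observe that, for the coalition $S$ in which the increment is concentrated, almost all terms cancel.

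First I would rewrite $C(v,T)$ in a more convenient form. Since $v(\mathbf{1}_T,x_{-T})$ and $v(\mathbf{0}_T,x_{-T})$ are independent of $x_T$, Fubini gives
\begin{equation*}
C(v,T)\;=\;\int_{[0,1]^{N\setminus T}}\!\Delta v\!\left(T,x_{-T}\right)\operatorname{d}x_{-T},
\end{equation*}
and the boundary $\partial[0,1]^{N\setminus T}$ has Lebesgue measure zero, so it suffices to integrate over $(0,1)^{N\setminus T}$, where $\Delta v$ is governed by the bullet points of Definition~\ref{definition_local_improvement}. From the second bullet, $\Delta v(T,x_{-T})=\Delta u(T,x_{-T})$ for every $T\neq S$, whence $C(v,T)=C(u,T)$ for $T\neq S$. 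From the first bullet, $\Delta v(S,x_{-S})-\Delta u(S,x_{-S})$ equals $\varepsilon$ on $\bigtimes_{i\notin S}(a_i,b_i)$ and $0$ outside its closure; integrating yields $C(v,S)-C(u,S)=\varepsilon\cdot\operatorname{vol}(D)$.

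Next I would substitute these two identities into the expression
\begin{equation*}
\Psi_i(v)-\Psi_i(u)\;=\;\sum_{i\in K\subseteq N}\frac{(k-1)!(n-k)!}{n!}\,\Bigl[\bigl(C(v,K)-C(u,K)\bigr)-\bigl(C(v,K\setminus\{i\})-C(u,K\setminus\{i\})\bigr)\Bigr].
\end{equation*}
Only terms with $K=S$ or $K\setminus\{i\}=S$ survive. A short case split does the rest: if $i\in S$ then only $K=S$ contributes, giving the coefficient $\tfrac{(s-1)!(n-s)!}{n!}$; if $i\notin S$ then only $K=S\cup\{i\}$ contributes, through the subtracted term, giving the coefficient $-\tfrac{s!(n-s-1)!}{n!}$. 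In both cases the factor $\varepsilon\cdot\operatorname{vol}(D)$ comes out, which is exactly \eqref{Eq HIS} with the constants \eqref{eq_coeff_HIS}.

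For the degenerate cases $S=\emptyset$ and $S=N$ I would argue separately. When $S=\emptyset$, the relation $v=u+\varepsilon$ on $(0,1)^n$ together with the second bullet forces $C(v,T)=C(u,T)$ for every nonempty $T$ (the additive constant cancels inside $\Delta v$), and $C(\cdot,\emptyset)=0$ trivially. When $S=N$, the second bullet gives $C(v,T)=C(u,T)$ for every $T\neq N$, and $C(\cdot,N)=v(\mathbf{1})-v(\mathbf{0})=1$ regardless of $v$. Either way every bracket in the formula vanishes, so $\Psi(v)=\Psi(u)$, matching $\lambda_{\Psi}=\gamma_{\Psi}=0$. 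The only subtlety worth flagging—and the one I would write carefully—is that the first bullet prescribes $\Delta v(S,x_{-S})$ only on the open cube $(0,1)^{N\setminus S}$; but the complement has measure zero, so replacing $[0,1]^{N\setminus S}$ by $(0,1)^{N\setminus S}$ in the integral is harmless, and this is what makes the computation of $C(v,S)-C(u,S)$ clean.
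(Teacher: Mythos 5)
Your proposal is correct and takes essentially the same route as the paper: the paper likewise deduces $C(v,S)=C(u,S)+\varepsilon\cdot\operatorname{vol}(D)$ and $C(v,T)=C(u,T)$ for $T\neq S$ from Definition~\ref{definition_local_improvement} and then reads off the two coefficients from Equation~(\ref{eq_ssi_interval_sg}), so your explicit case split over the surviving terms $K=S$ and $K=S\cup\{i\}$ only spells out what the paper leaves implicit. For the degenerate cases $S\in\{\emptyset,N\}$ it is your appeal to the second bullet (not the parenthetical cancellation remark, which is vacuous because the points $(\mathbf{1}_T,x_{-T})$ and $(\mathbf{0}_T,x_{-T})$ lie outside $(0,1)^n$) that carries the argument; this presumes the second bullet still binds in those cases, which is exactly the reading of Definition~\ref{definition_local_improvement} that the paper's own, terser treatment of these cases also requires.
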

\begin{proof}
Let $u$, $v$, $S$, $\varepsilon$, and $D$ be given such that $u\localinc{S}{\varepsilon}{D}v$ and $S\notin\{\emptyset,N\}$. 
Due to the formula for $\Delta v(\cdot,\cdot)$ in Definition~\ref{definition_local_improvement}, we have
$$
  C(v,S)=\int_{[0,1]^n} v(\mathbf{1}_S,x_{-S})-v(\mathbf{0}_S,x_{-S})=C(u,S)+\varepsilon\cdot\operatorname{vol}(D)
$$
and
$$
  C(v,T)=\int_{[0,1]^n} v(\mathbf{1}_T,x_{-T})-v(\mathbf{0}_T,x_{-T})=C(u,T)
$$
for all $T\in 2^N\backslash\{S\}$. (For $T=\emptyset$ we have $C(v,\emptyset)=0=C(u,\emptyset)$ and for $T=N$ we have 
$C(v,N)=1=C(u,N)$.) 
From Equation~(\ref{eq_ssi_interval_sg}) we then conclude $\lambda_{\Psi}(S)=\frac{(s-1)!(n-s)!}{n!}$ for the cases $i\in S$ 
and $\gamma_{\Psi}(S)=\frac{s!(n-s-1)!}{n!}$ for the cases $j\notin S$, where only $C(u,S\cup\{j\})$ is different from $C(v,S\cup\{j\})$. 

For the second part, let $u$, $v$, $S$, $\varepsilon$, and $D$ be given such that $u\localinc{S}{\varepsilon}{D}v$ and $S\in\{\emptyset,N\}$. Since 
$\Psi$ is linear, anonymous, and efficient we have $\Psi(u)=\Psi(v)$.
\end{proof}

We remark that $\Psi$ also satisfies (HIS) for negative parameters $\varepsilon$, which is the same as interchanging the roles of $u$ 
and $v$ and considering $-\varepsilon$ instead. 

\medskip

In the context of simple games a local increment from a simple game $u$ to another simple game $v$ means that the set 
of winning coalitions of $v$, i.e., $S\subseteq N$ with $v(S)=1$, consists of the set of winning coalitions of $u$ and 
an additional winning coalition that was losing in $u$. The effects on the number of swing coalitions, i.e., 
those with $v(S)-v(S\backslash \{i\})=1$, when removing one minimal winning coalition from a simple game are well known, 
see e.g.~\cite[Lemma 3.3.12]{felsenthal1998measurement}. To be more precise, let $u$ and $v$ be two simple games such that 
the winning coalitions of $v$ are given by the winning coalitions of $u$ and a coalition $\emptyset
\subsetneq S\subsetneq N$ that is losing in $u$. As notation we write $%
v=u\oplus S$. For all $i\in S$ and all $j\in N\backslash S$ we have
\begin{equation}
\operatorname{SSI}_{i}(v)=\operatorname{SSI}_{i}(u)+\frac{(s-1)!(n-s)!}{n!}\quad \text{and}%
\quad
\operatorname{SSI}_{j}(v)=\operatorname{SSI}_{j}(u)-\frac{s!(n-s-1)!}{n!},
\label{eq_SSI_add_S}
\end{equation}%
respectively. 
Note that we cannot choose $S=\emptyset$ or $S=N$ in that setting.

\begin{lemma}
\label{lemma_hsi_constants} If $\Phi $ is a power index for interval simple
games that are semi-regular step functions that simultaneously satisfies
(E), (NP), and (HIS) and $n\geq 3$, then we have
\begin{equation}
\left( \lambda _{\Phi }(S),\gamma _{\Phi }(S)\right) =\left( \frac{%
(s-1)!(n-s)!}{n!},\frac{s!(n-s-1)!}{n!}\right)  \label{eq_hsi_constants}
\end{equation}%
for all $\emptyset \subsetneq S\subsetneq N$. Moreover, we have $\lambda
_{\Phi }(N)=\gamma _{\Phi }(\emptyset )=0$.
\end{lemma}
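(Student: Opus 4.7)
The plan is to combine the embedding of simple games as semi-regular step functions from Proposition~\ref{prop_embed_sg_special} with a downward induction on $|S|$, driven by the null player property applied to unanimity games reached from $u_N$ by successive local increments. The key observation is that for any two simple games $u_0$ and $v_0=u_0\oplus S$ with $\emptyset\subsetneq S\subsetneq N$, a direct check of the $\Delta$-formulas in Definition~\ref{definition_local_improvement} shows that $\tilde{u}_0\localinc{S}{1}{[0,1]^{N\setminus S}}\tilde{v}_0$ with $\operatorname{vol}(D)=1$, so (HIS) specializes to the simple-games identity (\ref{eq_SSI_add_S}). Summing (HIS) over $i\in N$ and invoking (E) yields the universal relation $s\lambda_\Phi(S)=(n-s)\gamma_\Phi(S)$, reducing the problem to computing $\gamma_\Phi(S)$; the same summation applied to any admissible local increment with $S\in\{\emptyset,N\}$ forces $\lambda_\Phi(N)=\gamma_\Phi(\emptyset)=0$.

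To start the induction, note that for $S=N\setminus\{i\}$ the game $v_{0,i}:=u_N\oplus(N\setminus\{i\})$ makes player $i$ null, so combining (NP) with (HIS) gives $\Phi_i(\tilde{u}_N)=\gamma_\Phi(N\setminus\{i\})$. Next (and here we use $n\ge 3$) take distinct $i,j\in N$ and reach $\tilde{u}_T$ for $T=N\setminus\{i,j\}$ by successively adding $N\setminus\{i\}$, $N\setminus\{j\}$, and $T$. Applying (NP) to the null player $i$ in $\tilde{u}_T$ and telescoping the (HIS) increments, the contribution $\Phi_i(\tilde{u}_N)$ cancels against $\gamma_\Phi(N\setminus\{i\})$ and leaves $\gamma_\Phi(N\setminus\{i,j\})=\lambda_\Phi(N\setminus\{j\})$; the same argument from the other null player $j$ gives $\gamma_\Phi(N\setminus\{i,j\})=\lambda_\Phi(N\setminus\{i\})$. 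This forces $\lambda_\Phi(N\setminus\{k\})$ to be independent of $k$; efficiency on $\tilde{u}_N$ together with $s\lambda=(n-s)\gamma$ then pins down the common value to $\lambda_\Phi(N\setminus\{k\})=1/(n(n-1))$ and $\gamma_\Phi(N\setminus\{k\})=1/n$, matching (\ref{eq_hsi_constants}) at $s=n-1$ and also giving $\gamma_\Phi(N\setminus\{i,j\})=1/(n(n-1))$ at $s=n-2$.

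The remaining coefficients come from continuing the downward induction on $t=|T|$. Given $T$ of size $t\le n-3$, reach $\tilde{u}_T$ from $\tilde{u}_N$ by adding every $S$ with $S\supseteq T,\ S\neq N$ in decreasing order of size (each intermediate game is a simple game by upward closure). Enforcing $\Phi_i(\tilde{u}_T)=0$ for a null player $i\notin T$ and expanding via (HIS) yields a single linear equation whose only unknown is $\gamma_\Phi(T)$, the other terms being $\Phi_i(\tilde{u}_N)=1/n$ and the coefficients $\lambda_\Phi(S),\gamma_\Phi(S)$ with $|S|>t$, all known by induction. The identical equation holds for $\Psi$ (by Proposition~\ref{proposition_HIS} and Proposition~\ref{prop_embed_sg_special}) with the same right-hand side, hence $\gamma_\Phi(T)=\gamma_\Psi(T)=t!(n-t-1)!/n!$, and then $s\lambda=(n-s)\gamma$ delivers $\lambda_\Phi(T)$. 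The main obstacle is the $t=n-2$ step, where the symmetry $\lambda_\Phi(N\setminus\{i\})=\lambda_\Phi(N\setminus\{j\})$ has to be extracted from (NP) alone, since symmetry is not postulated, and this is precisely where the hypothesis $n\ge 3$ enters.
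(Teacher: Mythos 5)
Your proposal is correct and follows essentially the same route as the paper's proof: embed simple games as semi-regular step functions, observe that adding a winning coalition $S$ is a local increment with $\varepsilon=1$ and $\operatorname{vol}(D)=1$, pin down $\Phi(\widetilde{u}_N)=\mathbf{1}/n$ via the null players $i,j$ in the unanimity game on $N\setminus\{i,j\}$ (the place where $n\ge 3$ enters), and then descend on coalition size using null players outside $T$ together with (E). The only difference is bookkeeping — you derive $s\lambda_\Phi(S)=(n-s)\gamma_\Phi(S)$ once by summing (HIS) and compare the resulting linear equations with those for $\Psi$, whereas the paper maintains the invariant $\operatorname{SSI}(v)=\Phi(\widetilde{v})$ along chains of games via Equation~(\ref{eq_SSI_add_S}) — which does not change the substance of the argument.
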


\begin{proof}
To each simple game $v$ we associate a semi-regular step function $\widetilde{v}$ 
via $\widetilde{v}\left( x\right) =1$ if $\left\{i\in N:x_{i}=1\right\} $ is a
winning coalition in $v$ and $\widetilde{v}\left( x\right) =0$ otherwise; c.f.\ 
Proposition~\ref{prop_embed_sg_special}.  

In the remaining part of the proof we will consider steps $v=u\oplus S$ for 
simple games $u$ and $v$, where we denote the corresponding interval simple
games by $\widetilde{u}$ and $\widetilde{v}$, respectively. We will prove
Equation~(\ref{eq_hsi_constants}) by induction from $s=n-1$ to $s=1$.
Moreover we show at each induction stage $s$ that $\operatorname{SSI}(v)=\Phi (%
\widetilde{v})$ whenever all winning coalitions in $v$ are of cardinality
greater or equal to $s$.

First note that for $v=u\oplus S$, going from $\widetilde{u}$ to $\widetilde{%
v}$ we apply a local increment construction by choosing coalition $S$ and
defining $D={\huge \times }_{i\in N\backslash S}\left[ a_{i},b_{i}\right] $ via $%
a_{i}=0$, $b_{i}=1$ for all $i\in N\backslash S$, so that
$\operatorname{vol}(D)=1$.
Choosing $\varepsilon =1$, we can easily check that $\widetilde{u}\localinc{
S}{\varepsilon}{D}\widetilde{v}$. (We have for all $%
x_{N\backslash S}\in {\huge \times }_{i\in N\backslash S}(0,1) $, $\Delta
\widetilde{u}(S,x_{-S})=\widetilde{u}(\mathbf{1}_{S},x_{-S})-\widetilde{u}(%
\mathbf{0}_{S},x_{-S})=0$ since $S$ is losing in $u$; $\Delta \widetilde{v}%
(S,x_{-S})=\widetilde{v}(\mathbf{1}_{S},x_{-S})-\widetilde{v}(\mathbf{0}%
_{S},x_{-S})=1-0$ since $S$ is winning in $v$. For any other coalition $%
T\neq S$, $T$ is winning in $v$ if and only if $T$ is winning in $u$. Thus, 
for all $x_{N\backslash T}\in {\huge \times }_{i\in N\backslash T}(0,1) $, $%
\Delta \widetilde{v}(T,x_{-T})=\Delta \widetilde{u}(T,x_{-T})$).\medskip

For a moment assume that for $u=[n;1,\dots ,1]$ and the corresponding
interval simple game $\widetilde{u}$ we have $\Phi _{i}(\widetilde{u})=%
\tfrac{1}{n}$ for all $i\in N$. Now let $v=u\oplus S$ for some coalition $%
S\subseteq N$ of cardinality $s=n-1$. Since the unique player $j$ in $%
N\backslash S$ is a null player in $\widetilde{v}$, we have $\Phi _{j}(%
\widetilde{v})=0$, so that $\gamma _{\Phi }(S)=\frac{s!(n-s-1)!}{n!}$ using
(HIS). From efficiency we then conclude $\lambda _{\Phi }(S)=\frac{%
(s-1)!(n-s)!}{n!}$. Note that $\operatorname{SSI}(u)=\Phi
(\widetilde{u})$ and by
(HIS), $\operatorname{SSI}(v)=\Phi (\widetilde{v})$. Moreover, $\operatorname{SSI}(v)=\Phi (%
\widetilde{v})$ whenever $v=u\oplus S_{1}\oplus S_{2}\oplus ...\oplus S_{p}$
for some coalitions $S_{j}$ each of cardinality $n-1$ by applying (HIS) $p$
times together with (\ref{eq_SSI_add_S}). So, the induction
start is made. \medskip

Now let $S\subseteq N$ with $0<s<n$ be given. To determine $\gamma _{\Phi
}(S)$, let $u$ be the simple game whose winning coalitions are exactly the
proper super sets of $S$. For the corresponding interval simple game $%
\widetilde{u}$ we have $\operatorname{SSI}(u)=\Phi (\widetilde{u})$
by the induction hypothesis. For $v=u\oplus S$ we have that all
players in $j\in N\backslash
S $ are null players in $v$, so that $\Phi _{j}(\widetilde{v})=\operatorname{SSI}%
_{j}(v)=0$. With this, we easily compute $\gamma _{\Phi }(S)=\Phi _{j}(%
\widetilde{u})=\operatorname{SSI}_{j}(u)=\frac{s!(n-s-1)!}{n!}$ from (HIS) and 
(\ref{eq_SSI_add_S}), which gives $\lambda _{\Phi }(S)=\frac{%
(s-1)!(n-s)!}{n!}$ using efficiency. Both $\gamma _{\Phi }(S)$ and $\lambda
_{\Phi }(S)$ depend only on $s$. Moreover, suppose that $u$ is any other
simple game whose winning coalitions are of cardinality greater than $s$ and
that $v=u\oplus S_{1}\oplus S_{2}\oplus ...\oplus S_{q}$. Then $\operatorname{SSI}%
(u)=\Phi (\widetilde{u})$ by induction hypothesis, and applying (HIS) $q$
times together with (\ref{eq_SSI_add_S}) yields $\operatorname{SSI}%
(v)=\Phi (\widetilde{v})$. \medskip

Now let us prove our assumption $\Phi _{i}(\widetilde{u})=\tfrac{1}{n}$ for
all $i\in N$, where $u=[n;1,\dots ,1]$. To this end, let $i$ and $j$ be two
arbitrary but different players in $N$ and set $X=N\backslash \{i,j\}$.
Define $\left( \varphi _{1},\dots ,\varphi _{n}\right) =\Phi (\widetilde{u})$%
. As above, player $j$ is a null player in $u\oplus \left( X\cup
\{i\}\right) $, so that (HIS) and (NP) give $\gamma _{\Phi }(X\cup
\{i\})=\varphi _{j}$. From (E) we then conclude $\lambda _{\Phi }(X\cup
\{i\})=\varphi _{j}/(n-1)$. Similarly we conclude $\gamma _{\Phi }(X\cup
\{j\})=\varphi _{i}$ and $\lambda _{\Phi }(X\cup \{i\})=\varphi _{j}/(n-1)$.
Now let $u^{\prime }=[n;1,\dots ,1]\oplus \left( X\cup \{i\}\right) \oplus
\left( X\cup \{j\}\right) $. From the above constants and (HIS) we conclude $%
\Phi _{i}(\widetilde{u^{\prime }})=\varphi _{j}/(n-1)$ and $\Phi _{j}(%
\widetilde{u^{\prime }})=\varphi _{i}/(n-1)$. In $v=u^{\prime }\oplus X%
\footnote{%
Note that this not the case for exactly two players, i.e,, $n=2$.}$ the
players $i$ and $j$ are null players so that (HIS) gives $\varphi
_{i}=\varphi _{j}$. Since $i$ and $j$ were arbitrary, we have $\Phi _{i}(%
\widetilde{u})=\tfrac{1}{n}$ for all $i\in N$ using efficiency.\medskip

For $S=\emptyset $ we can choose some $\varepsilon $ and $D$ with $%
\varepsilon >0$ and $\operatorname{vol}(D)>0$. Every player $i\in N$
is contained in $N\backslash S$, so that the value of $\lambda
_{\Phi }(\emptyset )$ does not change anything. From (E) we then
conclude $\gamma _{\Phi }(\emptyset )=0 $. For $S=N$ it is just the
other way round, i.e., every player $i\in N$ is contained in $S$ and
the value of $\gamma _{\Phi }(N)$ does not change anything.
Efficiency then gives $\lambda _{\Phi }(N)=0$.\bigskip
\end{proof}

W.l.o.g.\ we can always assume $\lambda _{\Phi }(\emptyset )=0$ and $\gamma
_{\Phi }(N)=0$. For $n=1$ the statement of Lemma~\ref{lemma_hsi_constants}
is also true, since $\emptyset$ and $N$ are the only possible subsets of $N$. 
For $n=2$ players the axioms (E), (NP), and (HIS) do not determine $\Phi$ for 
interval simple games that are semi-regular step functions as shown by the 
following parametric family.

\begin{lemma}
\label{lemma_hsi_constants 2agents} 
For $a_1,a_2\in\mathbb{R}_{\ge 0}$ with $a_1+a_2=1$ let
\begin{equation}
\Phi _{i}^{a}\left( v\right):=a_{i}+a_{j}\int_{0}^{1}\left[ v\left(
\mathbf{1}_i,\mathbf{t}_j\right) -v\left( \mathbf{0}_i,\mathbf{t}_j\right) \right] dt-a_{i}\int_{0}^{1}\left[ v\left(
\mathbf{1}_j,\mathbf{t}_i\right) -v\left( \mathbf{0}_j,\mathbf{t}_i\right) \right] dt  \label{eq_psi 2agents}
\end{equation}
for each interval simple game $v$ and $i,j\in\{1,2\}$, where $\mathbf{t}_i=t\cdot \mathbf{1}_i$. Then, $\Phi^{a}$ satisfies (E), (NP), and (HIS). 
\end{lemma}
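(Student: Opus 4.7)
The plan is to verify the three properties (E), (NP), and (HIS) by direct calculation on the closed form~(\ref{eq_psi 2agents}), taking advantage of the fact that each of the two integrands is exactly a potential-influence term $\Delta v(T,\cdot)$ for $T\in\{\{1\},\{2\}\}$.

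First I would check efficiency. Adding $\Phi_1^a(v)$ and $\Phi_2^a(v)$, the constant terms give $a_1+a_2=1$ and the two integrals appear with coefficients $a_2-a_2=0$ and $a_1-a_1=0$, respectively, so $\Phi_1^a(v)+\Phi_2^a(v)=1$. For the null player property, assume without loss of generality that player~$1$ is a null player. Then $v$ does not depend on $x_1$, so the first integrand $v(\mathbf{1}_1,\mathbf{t}_2)-v(\mathbf{0}_1,\mathbf{t}_2)$ vanishes identically, while the second integrand equals $v(\mathbf{1}_2,\mathbf{t}_1)-v(\mathbf{0}_2,\mathbf{t}_1)=v(\mathbf{1})-v(\mathbf{0})=1$ for every $t$, yielding $\Phi_1^a(v)=a_1+0-a_1\cdot 1=0$.

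The core of the proof is the verification of (HIS). I would split into cases according to $S$. For $S=\{1\}$ with domain $D=[a_2,b_2]$, the definition of $u\localinc{\{1\}}{\varepsilon}{D}v$ says that $\Delta v(\{1\},t)=\Delta u(\{1\},t)+\varepsilon$ for $t\in(a_2,b_2)$ and $\Delta v(\{1\},t)=\Delta u(\{1\},t)$ for $t\notin[a_2,b_2]$, while $\Delta v(\{2\},t)=\Delta u(\{2\},t)$ for all $t\in(0,1)$. Since the first integrand in~(\ref{eq_psi 2agents}) is precisely $\Delta v(\{1\},t)$ and the second is $\Delta v(\{2\},t)$, integration gives
\[
\Phi_1^a(v)-\Phi_1^a(u)=a_2\cdot\varepsilon\cdot(b_2-a_2)=a_2\cdot\varepsilon\cdot\operatorname{vol}(D),
\]
and analogously $\Phi_2^a(v)-\Phi_2^a(u)=-a_2\cdot\varepsilon\cdot\operatorname{vol}(D)$. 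Hence I can set $\lambda_{\Phi^a}(\{1\})=\gamma_{\Phi^a}(\{1\})=a_2$; the case $S=\{2\}$ is symmetric with $\lambda_{\Phi^a}(\{2\})=\gamma_{\Phi^a}(\{2\})=a_1$.

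For the extreme coalitions $S\in\{\emptyset,N\}$, the conditions in Definition~\ref{definition_local_improvement} impose $\Delta v(T,x_{-T})=\Delta u(T,x_{-T})$ for every proper coalition $T$ and every $x_{-T}\in(0,1)^{N\setminus T}$, so both integrands in~(\ref{eq_psi 2agents}) are unaltered between $u$ and $v$; consequently $\Phi_i^a(v)-\Phi_i^a(u)=0$ for $i=1,2$, and the HIS equation holds with $\lambda_{\Phi^a}(\emptyset)=\gamma_{\Phi^a}(\emptyset)=\lambda_{\Phi^a}(N)=\gamma_{\Phi^a}(N)=0$. The main (mild) obstacle is bookkeeping: making sure that the integrands in~(\ref{eq_psi 2agents}) really coincide with $\Delta v(\{1\},\cdot)$ and $\Delta v(\{2\},\cdot)$, so that the ``affected'' integral in each case involves exactly the coalition $S$ and the ``unaffected'' one involves $N\setminus S$; once this identification is made the case analysis is routine and the constants $\lambda_{\Phi^a}(S)$, $\gamma_{\Phi^a}(S)$ depend only on $S$ as required.
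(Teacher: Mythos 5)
Your proof is correct and follows essentially the same route as the paper, which merely notes that (E) and (NP) can be checked directly and records the (HIS) constants; your direct verification recovers exactly those constants, $\left(\lambda_{\Phi^a}(\{1\}),\gamma_{\Phi^a}(\{1\})\right)=(a_2,a_2)$ and $\left(\lambda_{\Phi^a}(\{2\}),\gamma_{\Phi^a}(\{2\})\right)=(a_1,a_1)$, via the identification of the two integrands with $\Delta v(\{1\},\cdot)$ and $\Delta v(\{2\},\cdot)$.
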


The axioms (E) and (NP) can be checked directly. For (HIS) we mention the corresponding 
constants 
$\left( \lambda _{\Phi ^{a}}(\left\{ 1\right\} ),\gamma _{\Phi ^{a}}(\left\{
1\right\} )\right) =\left( a_{2},a_{2}\right)$ 
and $\left( \lambda
_{\Phi ^{a}}(\left\{ 2\right\} ),\gamma _{\Phi ^{a}}(\left\{ 2\right\}
)\right) =\left( a_{1},a_{1}\right)$ \text{.}
We remark that all power indices satisfying (E), (N), and (HIS) for semi-regular step functions 
can indeed be parameterized as in Lemma~\ref{lemma_hsi_constants 2agents}. 

For $n=2$ including the axiom (S) is sufficient for our claim, but the 
symmetry axiom may also be replaced by some technically weaker axiom. In the context of simple games this reflects 
the fact that (E), (NP), (SymGL) do not characterize the Shapley-Shubik index. However, note that (HIS) is a stronger
requirement than (SymGL) on simple games since the later does
not include the disposition that the constants $\left( \lambda _{\Phi
}\left( S\right) ,\gamma _{\Phi }\left( S\right) \right) $ should depend
only on $\Phi $ and $S$, but not on the game where the improvement occurs.
Moreover, with at least three players, the corresponding of (HIS) axiom on 
simple games characterizes the Shapley-Shubik index when combined with (E) and (NP).

For later usage in the proof of Theorem~\ref{thm_characterization_step_function} we extract the following technical 
result from the proof of Lemma~\ref{lemma_hsi_constants}:
\begin{corollary}
  \label{cor_hsi_constants}
  If $\Phi $ is a power index for interval simple
  games that are semi-regular step functions that simultaneously satisfies
  (E), (NP), and (HIS) and $n\geq 3$, then $\Phi(\tilde{0})=\mathbf{1}/n$, where 
  $\tilde{0}(x)=1$ if $x=\mathbf{1}$ and $\tilde{0}(x)=0$ otherwise. 
\end{corollary}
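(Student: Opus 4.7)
The plan is to observe that $\tilde{0}$ coincides with $\widetilde{u}$ (in the sense of Proposition~\ref{prop_embed_sg_special}), where $u=[n;1,\dots,1]$ is the unanimity game on $N$: indeed, the only winning coalition of $u$ is $N$, so $\tilde{0}(x)=1$ iff $\{i:x_i=1\}=N$ iff $x=\mathbf{1}$. Thus the corollary is exactly the auxiliary claim ``$\Phi_i(\widetilde{u})=\tfrac{1}{n}$ for all $i\in N$'' used inside the induction base of Lemma~\ref{lemma_hsi_constants}, so I would simply repeat that argument as a standalone proof.

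Set $(\varphi_1,\dots,\varphi_n)=\Phi(\widetilde u)$ and, for arbitrary distinct $i,j\in N$, let $X=N\setminus\{i,j\}$ (non-empty because $n\ge 3$). The first step is to compute two coefficients of (HIS). Form the step $u\localinc{X\cup\{i\}}{1}{[0,1]^n}\widetilde{u\oplus(X\cup\{i\})}$, as explained at the start of the proof of Lemma~\ref{lemma_hsi_constants}. In $u\oplus(X\cup\{i\})$ player $j$ is a null player, so combining (HIS) with (NP) gives $\gamma_\Phi(X\cup\{i\})=\varphi_j$; from (E) we then obtain $\lambda_\Phi(X\cup\{i\})=\varphi_j/(n-1)$. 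Symmetrically, $\gamma_\Phi(X\cup\{j\})=\varphi_i$ and $\lambda_\Phi(X\cup\{j\})=\varphi_i/(n-1)$.

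The second step is to chain two local increments. Let $u'=u\oplus(X\cup\{i\})\oplus(X\cup\{j\})$. Applying (HIS) twice yields
\[
\Phi_i(\widetilde{u'})=\varphi_i+\lambda_\Phi(X\cup\{i\})-\gamma_\Phi(X\cup\{j\})=\varphi_j/(n-1),
\]
and analogously $\Phi_j(\widetilde{u'})=\varphi_i/(n-1)$. Now form $v=u'\oplus X$; its winning coalitions are exactly the supersets of $X$, so $i$ and $j$ are both null players in $v$. A third application of (HIS), together with (NP), gives
\[
0=\Phi_i(\widetilde v)=\varphi_j/(n-1)-\gamma_\Phi(X)\quad\text{and}\quad 0=\Phi_j(\widetilde v)=\varphi_i/(n-1)-\gamma_\Phi(X),
\]
forcing $\varphi_i=\varphi_j$. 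Since $i,j$ were arbitrary, all $\varphi_k$ agree, and efficiency fixes the common value at $1/n$.

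The only potentially delicate point is bookkeeping: one must check that each $\oplus$ step indeed corresponds to a local increment with $\varepsilon=1$ and $\operatorname{vol}(D)=1$ between two semi-regular step functions (so that (HIS) legitimately applies), and that the null-player identifications in $u\oplus(X\cup\{i\})$ and in $v=u'\oplus X$ are correct. Both verifications are carried out already at the beginning of the proof of Lemma~\ref{lemma_hsi_constants} and require no new argument, so the corollary follows by isolating that sub-argument.
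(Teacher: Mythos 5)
Your proof is correct and follows essentially the same route as the paper: the corollary is explicitly extracted from the proof of Lemma~\ref{lemma_hsi_constants}, and your argument reproduces exactly that sub-argument, identifying $\tilde{0}$ with the embedded unanimity game $[n;1,\dots,1]$ (as in Proposition~\ref{prop_embed_sg_special}) and chaining the $\oplus$ steps via (HIS), (NP), and (E), with $n\ge 3$ ensuring $X=N\backslash\{i,j\}\neq\emptyset$. The only cosmetic slip is writing the first increment as starting from $u$ with $D=[0,1]^n$ rather than from $\widetilde{u}$ with the $(N\backslash S)$-domain $[0,1]^{N\backslash S}$, which is immaterial since $\varepsilon=1$ and $\operatorname{vol}(D)=1$ in either reading.
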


\medskip

Next we want to prove that (E), (NP), and (HIS) uniquely characterize $\Psi$ within the class of interval simple games that are semi-regular 
step functions. To this end we will show how to obtain any regular step function by a sequence of local increments starting from the 
zero function. It will be necessary to also build up our discretization $\alpha$ step by step.

\begin{definition}
  For a given interval simple game $v$ that is a regular step function with discretization $\alpha\in\mathcal{D}_p$ for 
  some integer $p>1$ and another discretization $\alpha'=\left(0,\alpha_{i_1},\dots,\alpha_{i_{p'-1}},1\right)$, 
  with $i_1<\dots<i_{p'-1}$ and $1\le p'<p$, the coarsened interval simple game is given by
  $$
    v_{\alpha'}(x)=\min \left\{v(y)\,:\,y\in (\alpha)_e, e\in\overline{\mathcal{A}}_p, i_{e'_j-1/2} < e_j< i_{e'_j+1/2}\right\}
  $$
  for all $x\in (\alpha')_{e'}$, where $e'\in \overline{\mathcal{A}}_{p'}$, $i_0=0$, and $i_{p'}=p$. If $x\in[0,1]^n$ is 
  contained on the boundary of some boxes with respect to discretization $\alpha'$, then $v_{\alpha'}(x)$ is uniquely 
  determined by Equation~(\ref{eq_average}).  
\end{definition}

We can easily check that $v_{\alpha'}$ is indeed an interval simple game that is a regular step function. We will apply this coarsening 
only for the cases where $\alpha'=\left(0,\alpha_1,\dots,\alpha_{l-1},1\right)$ for some integer $1\le l\le p$. Examples are given in the appendix.

\begin{theorem}
  \label{thm_characterization_step_function}
  Let $\Phi$ be a power index satisfying (E), (NP), and (HIS) for all interval simple games with $n\ge 3$ players that are semi-regular step functions. Then, we have 
  $\Phi(v)=\Psi(v)$ for every interval simple game $v$ that is a regular step function.
\end{theorem}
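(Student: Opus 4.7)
The plan is to combine Lemma~\ref{lemma_hsi_constants}, which pins the constants $\lambda_{\Phi}(S)$, $\gamma_{\Phi}(S)$ to exactly those of $\Psi$, with an induction that realises every regular step function as the endpoint of a finite chain of local increments starting at a function for which $\Phi=\Psi$ is already known. Since (HIS) prescribes how either index changes along a local increment and Lemma~\ref{lemma_hsi_constants} forces these prescriptions to coincide, the difference $\Phi-\Psi$ is preserved along every step of such a chain, so telescoping delivers $\Phi(v)=\Psi(v)$.

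The base case is $p=1$, i.e.\ the discretization $\alpha=(0,1)$. A regular step function with that discretization equals, after the averaging extension, the embedding $\tilde u$ of some simple game $u$ in the sense of Proposition~\ref{prop_embed_sg_special}. The proof of Lemma~\ref{lemma_hsi_constants} has already shown $\Phi(\tilde u)=\operatorname{SSI}(u)$ by building $\tilde u$ up from the unanimity game $[n;1,\dots,1]$ through a sequence of simple-game steps $u'\oplus S$, and Proposition~\ref{prop_embed_sg_special} gives $\Psi(\tilde u)=\operatorname{SSI}(u)$, so $\Phi=\Psi$ on this class.

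For the inductive step, I would assume the claim for every regular step function with discretization in $\mathcal{D}_{p-1}$ and let $v$ have discretization $\alpha\in\mathcal{D}_p$ with $p\ge 2$. Take $\alpha'\in\mathcal{D}_{p-1}$ to be the coarsening obtained by dropping $\alpha_{p-1}$; then $v_{\alpha'}$ is a regular step function with discretization $\alpha'$, so $\Phi(v_{\alpha'})=\Psi(v_{\alpha'})$ by the inductive hypothesis. The technical heart of the argument is to exhibit a finite chain $v_{\alpha'}=w_0,w_1,\dots,w_m=v$ of interval simple games that are semi-regular step functions (all viewed with discretization $\alpha$, under which $v_{\alpha'}$ is also realised since $\alpha$ refines $\alpha'$) such that $w_{k-1}\localinc{S_k}{\varepsilon_k}{D_k}w_k$ for appropriate coalition $S_k$, magnitude $\varepsilon_k>0$, and $(N\setminus S_k)$-domain $D_k$. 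One processes the full-dimensional boxes $(\alpha)_e$, $e\in\overline{\mathcal{A}}_p$, on which $v$ strictly exceeds $v_{\alpha'}$ in an order compatible with the componentwise order on $\overline{\mathcal{A}}_p$, so that monotonicity of each intermediate $w_k$ is preserved; the pair $(S_k,D_k)$ is read off from the geometry of the box currently being lifted, and the $\varepsilon_k$ sum to $v-v_{\alpha'}$. Applying (HIS) at each step together with Lemma~\ref{lemma_hsi_constants} yields $\Phi(w_k)-\Phi(w_{k-1})=\Psi(w_k)-\Psi(w_{k-1})$, and telescoping then gives $\Phi(v)=\Psi(v)$.

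The main obstacle lies in the explicit construction of the chain $(w_k)$, for which one must verify simultaneously: (i) that each $w_k$ is monotone with $w_k(\mathbf{0})=0$, $w_k(\mathbf{1})=1$, hence a genuine interval simple game; (ii) that each step is a local increment in the strict sense of Definition~\ref{definition_local_improvement}, i.e.\ $\Delta w_k(T,\cdot)=\Delta w_{k-1}(T,\cdot)$ for every coalition $T\ne S_k$; and (iii) that the cumulative effect of the chain reproduces $v-v_{\alpha'}$ exactly on every box of the refined paving. Item~(ii) is the delicate one, because a naive change of the step function's value on a single box generally disturbs several potential influences $\Delta v(T,\cdot)$ at once; choosing the right elementary increments and scanning the boxes in monotone order is what isolates a single coalition per step and keeps the whole chain inside the class of semi-regular step functions on which (HIS) is assumed.
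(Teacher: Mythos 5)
Your skeleton --- pin the (HIS) constants via Lemma~\ref{lemma_hsi_constants}, realise $v$ as the endpoint of a finite chain of local increments starting from a game where $\Phi=\Psi$ is already known, and telescope --- is the same as the paper's. The genuine gap sits exactly where you flag ``the main obstacle'': you never show that lifting the value of the current step function on a single full-dimensional box is a local increment in the sense of Definition~\ref{definition_local_improvement}, and in general it is \emph{not} a single local increment for any one coalition $S_k$. Raising $u$ by $\varepsilon$ on a box $[\alpha]_{\bar{e}}$ changes $u(\mathbf{1}_T,x_{-T})$ for $T=\overline{U}(e,\bar{e})$ (coordinates where the box touches the upper boundary of $[0,1]^n$) and simultaneously changes $u(\mathbf{0}_T,x_{-T})$ for $T=\overline{L}(e,\bar{e})$ (coordinates touching the lower boundary), so the potential influences of several distinct coalitions move at once and with opposite signs; in the paper's Table~\ref{table_effect_uniform_increase} a single box lift decomposes into four steps involving the coalitions $\{1,3\}$, $\{1\}$, $\{3\}$ and $\{2\}$. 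Scanning the boxes in an order compatible with $\ge$ preserves monotonicity of the intermediate games, but it does not isolate one coalition per step. What is needed --- and what the paper supplies in Lemma~\ref{lemma_to_local_increment} and Lemma~\ref{lemma_increment}, which carry the technical weight of the theorem --- is a further decomposition of each box modification into face-by-face elementary steps $\localincstep{e}{\varepsilon/|E(c_e)|}$ over all $e\in\mathcal{F}(\bar{e})$, again processed in decreasing order: each such face step is a local increment with $S=\overline{U}(e,\bar{e})$ and increment $+\varepsilon$, or with $S=\overline{L}(e,\bar{e})$ and increment $-\varepsilon$, or is trivial (increment $0$) when the face either misses the boundary of $[0,1]^n$ or touches both the lower and the upper boundary. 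The sign flip on lower-boundary faces also invalidates the naive bookkeeping ``the $\varepsilon_k$ sum to $v-v_{\alpha'}$''. Without this face-level analysis, the assertion that each $w_{k-1}\localinc{S_k}{\varepsilon_k}{D_k}w_k$ holds for suitable data is unsubstantiated.

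A smaller but real error is the base case: a regular step function with discretization $(0,1)$ is constant on $[0,1]^n\setminus\{\mathbf{0},\mathbf{1}\}$ by Equation~(\ref{eq_average}), whereas the simple-game embeddings of Proposition~\ref{prop_embed_sg_special} are only \emph{semi}-regular, so the class you anchor the induction on is not what you claim it is. The paper instead starts every construction from the zero game $\tilde{0}$, for which Corollary~\ref{cor_hsi_constants} gives $\Phi(\tilde{0})=\mathbf{1}/n$, and builds all of $v$ up through phases corresponding to the coarsenings $\alpha^l$; your induction on $p$ via $v_{\alpha'}$ could be repaired along the same lines, but both the anchor and, again, each phase's box lifts need the decomposition described above.
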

\begin{proof}
  Let us build up the regular step function $v$ step by step assuming that $v$ has a discretization $\alpha\in\mathcal{D}_p$ for 
  some integer $p\ge 1$. For each integer $1\le l\le p$ we 
  consider the discretization $\alpha^l=\left(0,\alpha_1,\dots, \alpha_{l-1},1\right)$. For each such $l$ we consider a sequence 
  of local increments, which we call a phase. At the end of each phase our current interval simple game $u$ is the coarsening of 
  $v$ with respect to $\alpha^l$, i.e., for $l=p$ the final interval simple game $u$ coincides with $v$.
  
  For a given $e\in\overline{\mathcal{A}}_p$ let $l\in\mathbb{N}$ be such that $e_j<l$ for all $1\le j\le n$, i.e., 
  $e\in\overline{\mathcal{A}}_l$. By $c_e$ we denote the center of $[\alpha]_e$. 
  Note that $[\alpha]_e\subseteq [\alpha^l]_e$ and $[\alpha]_e$ is indeed the box, with respect to discretization $\alpha$, 
  contained in  $[\alpha^l]_e$ that has the smallest possible coordinates.   
  
  We start from the zero interval simple game $\tilde{0}$ defined by $\tilde{0}(x)=1$ if $x=\mathbf{1}$ and $\tilde{0}(x)=0$ otherwise. 
  Due to Corollary~\ref{cor_hsi_constants} we have $\Phi(\tilde{0})=\Psi(\tilde{0})=\mathbf{1}/n$. For each $l\ge 1$ we are looking at the set of all 
  possible $e\in\overline{\mathcal{A}}_l$, where at least coordinate is equal to $l-\tfrac{1}{2}$. For all such vectors $e,e'$ we choose 
  $e$ before $e'$ if $e\ge e'$. For each such $e$ we modify our current interval simple game $u$ to $u'$, where both are regular step 
  functions. Here we set $u'(x)=u(x)+\varepsilon$ for all $x\in (\alpha^l)_e$ and $u'(x)=u(x)$ for all $x\in[0,1]^n\backslash [\alpha^l]_e$, 
  where $\varepsilon=v(c_e)-u(c_e)$. The remaining values $u'(x)$ are uniquely determined by the property of a regular step function, i.e., 
  by Equation~(\ref{eq_average}). Applying the subsequent Lemma~\ref{lemma_increment} gives $\Phi(u')=\Psi(u')$.

  The statement that at the end of each phase the current interval simple game $u$ coincides with the coarsening $v_{\alpha^l}$ can 
  inductively be concluded from the monotonicity of $v$. Moreover, in all intermediate steps $u$ is always an interval simple game that 
  is a regular step function with respect to discretization $\alpha^l$. 
\end{proof}

\subsubsection{Uniform changes on subsets of the domain}
\label{subsec_uniform_changes}

What is the effect for a given power index $\Phi$ if we change the output values $v(x)$ of an interval simple game uniformly on some full dimensional box 
for a given discretization? It will turn out, that the change in $\Phi(v)$ is uniquely determined if $\Phi$ satisfies (HIS), see Lemma~\ref{lemma_increment}, 
where we consider regular step functions for technical reasons. This is indeed the outsourced part in the proof of Theorem~\ref{thm_characterization_step_function}.   
Definition~\ref{def_his} cannot be applied directly, but has to applied separately for all faces of the given full dimension box.

Assume that $f\colon[0,1]^n\to[0,1]$ is a step function with discretization $\alpha\in\mathcal{D}_p$ for some integer $p\ge 1$. Let further 
$\varepsilon\in\mathbb{R}$ be a constant and $\bar{e}\in \overline{\mathcal{A}}_p$ be the vector that describes a full dimensional box 
according to the given discretization. Note that the entries of $\bar{e}$ satisfy $\bar{e}_i-\tfrac{1}{2}\in\mathbb{N}$, i.e., they are fractional. In order 
to describe the faces of the box $[\alpha]_{\bar{e}}$ we set $\mathcal{F}(\bar{e})=\left\{e\in\mathcal{A}_p\,:\,\bar{e}_i-\tfrac{1}{2}\le e_i\le \bar{e}_i+\tfrac{1}{2}\right\}$. 
For $e\in \mathcal{F}(\bar{e})$ we set $L(e,\bar{e})=\left\{1\le i\le n\,:\,e_i=\bar{e}_i-\tfrac{1}{2}\right\}$ and $U(e,\bar{e})=\left\{1\le i\le n\,:\,e_i=\bar{e}_i+\tfrac{1}{2}\right\}$. 
In words this means that the face $[\alpha]_e$ of $[\alpha]{\bar{e}}$ is located on the lower boundary with respect to coordinate direction $i$ if $i\in L(e,\bar{e})$ and 
on the upper boundary if $i\in U(e,\bar{e})$. In the cases where $i\in I(e,\bar{e}):=N\backslash (L(e,\bar{e})\cup U(e,\bar{e}))$ the $i$th coordinate $x_i$ of $x\in[\alpha]_e$ 
can attain a continuum of values. If for $e\in\mathcal{A}_p$ another step function $g$ is given by $g(x)=f(x)+\varepsilon$ for all $x\in(\alpha)_{e}$ and 
$g(x)=f(x)$ otherwise, then we write $f\localincstep{(\alpha)_{e}}{\varepsilon}g$ or simply $f\localincstep{e}{\varepsilon}g$ whenever the underlying 
discretization is clear from the context. Each such transformation is a local increment $\localinc{S}{\tilde{\varepsilon}}{D}$ if $S$, $\tilde{\varepsilon}$, and $D$ 
are chosen accordingly. To that end, we refine $L(e,\bar{e})$ and $U(e,\bar{e})$ to those indices that are lie on the boundary of $[0,1]^n$, i.e., 
$\overline{L}(e,\bar{e}):=\left\{i\in L(e,\bar{e})\,:\,e_i=0\right\}=\left\{i\in N\,:\,e_i=0\right\}$ and 
$\overline{U}(e,\bar{e}):=\left\{i\in U(e,\bar{e})\,:\,e_i=p\right\}=\left\{i\in N\,:\,e_i=p\right\}$.

\begin{lemma}
  \label{lemma_to_local_increment}
  Let $f\colon[0,1]^n\to[0,1]$ be a step function with discretization $\alpha\in\mathcal{D}_p$, $\bar{e}\in\overline{\mathcal{A}}_p$, 
  $e\in\mathcal{F}(\bar{e})$, $\varepsilon\in\mathbb{R}$, $f\localincstep{e}{\varepsilon}g$, and $D=\left[\alpha\right]_{e_{N\backslash S}}:=
  \underset{j\in N\backslash S}{\bigtimes}\left[\alpha_{e_j-1/2},\alpha_{e_j+1/2}\right]$.
  \begin{enumerate}
    \item[(1)] If $\overline{L}(e,\bar{e})\neq \emptyset\,\wedge \overline{U}(e,\bar{e})\neq \emptyset$ or $\overline{L}(e,\bar{e})=\emptyset\wedge\overline{U}(e,\bar{e})= \emptyset$, 
               then $f\localinc{S}{\tilde{\varepsilon}}{D}g$ for arbitrary $S\in 2^N$ and $\tilde{\varepsilon}=0$.
    \item[(2)] If $\overline{U}(e,\bar{e})\neq\emptyset$ and $\overline{L}(e,\bar{e})=\emptyset$, 
               then $f\localinc{S}{\tilde{\varepsilon}}{D}g$ for $S=\overline{U}(e,\bar{e})$ and $\tilde{\varepsilon}=\varepsilon$.           
    \item[(3)] If $\overline{L}(e,\bar{e})\neq\emptyset$ and $\overline{U}(e,\bar{e})=\emptyset$, 
               then $f\localinc{S}{\tilde{\varepsilon}}{D}g$ for $S=\overline{L}(e,\bar{e})$ and $\tilde{\varepsilon}=-\varepsilon$.           
  \end{enumerate}
\end{lemma}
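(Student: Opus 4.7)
The plan is to verify directly from the definitions that the pointwise modification $g = f + \varepsilon \mathbb{1}_{(\alpha)_e}$ indeed agrees with the local increment construction $\localinc{S}{\tilde\varepsilon}{D}$ in each of the three cases. The starting observation is the identity
\begin{equation*}
\Delta g(T,x_{-T}) - \Delta f(T,x_{-T}) = \varepsilon\Bigl(\mathbb{1}_{(\mathbf{1}_T,x_{-T}) \in (\alpha)_e} - \mathbb{1}_{(\mathbf{0}_T,x_{-T}) \in (\alpha)_e}\Bigr),
\end{equation*}
valid for every $T \subseteq N$ and every $x_{-T} \in (0,1)^{N\setminus T}$. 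Hence the proof reduces to determining, for each $T$, which of the two indicators can ever be non-zero.

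The key combinatorial lemma I would isolate first is the following. Given $x_{-T}$ whose coordinates all lie in $(0,1)$, the point $(\mathbf{1}_T,x_{-T})$ can belong to $(\alpha)_e$ only if (i) every $i \in T$ satisfies $1 \in I_i$, which by the definition of $I_i$ forces $e_i = p$, i.e.\ $i \in \overline{U}(e,\bar e)$; and (ii) every $i \notin T$ satisfies $I_i \cap (0,1) \neq \emptyset$, which rules out $e_i \in \{0,p\}$, i.e.\ forces $i \notin \overline{L}(e,\bar e) \cup \overline{U}(e,\bar e)$. Together these give the necessary condition $\overline{L}(e,\bar e) \subseteq T \subseteq \overline{U}(e,\bar e)$. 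Symmetrically, $(\mathbf{0}_T,x_{-T}) \in (\alpha)_e$ forces $\overline{U}(e,\bar e) \subseteq T \subseteq \overline{L}(e,\bar e)$. Since $\overline{L}$ and $\overline{U}$ are disjoint by construction, each of these chains of inclusions can be satisfied only when one of the two sets is empty.

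With this pinned down, each case is almost immediate. In case (1), both pairs of inclusions are unsatisfiable (either because $\overline L$ and $\overline U$ are both non-empty and disjoint, or because neither of them appears at all, so only $T = \emptyset$ survives but yields $\Delta \equiv 0$). In either subcase $\Delta g \equiv \Delta f$ everywhere on $(0,1)^{N\setminus T}$, so the local increment condition holds with $\tilde\varepsilon = 0$ and arbitrary $S$, regardless of how $D$ is chosen. In case (2), only the first indicator can be non-zero, and exactly when $T \subseteq \overline U$; setting $S = \overline U$ I would then check that for $T \subsetneq S$ any $i \in S\setminus T$ would have to have $x_i = 1$ (forbidden by $x_i \in (0,1)$), so the indicator vanishes and $\Delta g = \Delta f$ for every $T \ne S$. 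For $T = S$, what remains is to identify the event $(\mathbf{1}_S,x_{-S}) \in (\alpha)_e$ with the event $x_{-S} \in \mathrm{int}(D)$, which is an unfolding of the definitions: for $j \in I(e,\bar e)$ the factor $I_j$ equals the open interval $(\alpha_{\bar e_j - 1/2},\alpha_{\bar e_j + 1/2})$, matching $\mathrm{int}(D_j)$, whereas for $j \in (L(e,\bar e) \setminus \overline L) \cup (U(e,\bar e) \setminus \overline U)$ the factor $I_j$ is a single interior point, contributing a zero-volume slab to $D$. Case (3) is completely symmetric to case (2) and yields the minus sign because the non-vanishing indicator is now $\mathbb{1}_{(\mathbf{0}_T,x_{-T}) \in (\alpha)_e}$, which enters with the opposite sign in the displayed identity.

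The main obstacle I anticipate is purely bookkeeping: carefully matching the factors of $D = \bigtimes_{j\in N\setminus S}[\alpha_{e_j-1/2},\alpha_{e_j+1/2}]$ to the slots $I_j$ of $(\alpha)_e$ and tracking the three sub-types of coordinates ($I$, $L\setminus \overline L$, $U\setminus \overline U$) so that the interior of $D$ coincides with the support of the non-zero indicator. Once this correspondence is written out, no real inequality work is needed; everything follows from the disjointness of $\overline L$ and $\overline U$ and from the openness of the domain $(0,1)^{N\setminus T}$ on which the local increment axiom is tested.
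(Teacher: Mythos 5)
Your proof is correct and follows essentially the same route as the paper's: write $\Delta g-\Delta f$ as $\varepsilon$ times a difference of indicators of $(\alpha)_e$, observe that a non-vanishing indicator forces the coordinates in $T$ onto the boundary ($e_i=p$ resp.\ $e_i=0$) and the remaining coordinates off it, so that by disjointness of $\overline{L}(e,\bar{e})$ and $\overline{U}(e,\bar{e})$ only $T=\overline{U}(e,\bar{e})$ resp.\ $T=\overline{L}(e,\bar{e})$ can register a change, and then match the support of that indicator with the domain $D$. Your explicit handling of proper subsets $T\subsetneq \overline{U}(e,\bar{e})$ (some $i\in\overline{U}(e,\bar{e})\setminus T$ would need $x_i=1$, impossible for $x_{-T}\in(0,1)^{N\setminus T}$) and of the degenerate zero-volume factors of $D$ even makes explicit two points the paper passes over silently.
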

\begin{proof}
   First note that $\overline{L}(e,\bar{e})\cap \overline{U}(e,\bar{e})= \emptyset$, so that the case analysis is exhaustive. For $T\in \{\emptyset,N\}$ we have 
   $\Delta(f,x_{-T})=\Delta(g,x_{-T})$ for all $x_{-T}\in(0,1)^{N\backslash T}$ in general. 
   
   Now let $T\in 2^N\backslash\{\emptyset,N\}$. If exists some $i\in T$ but 
   $i\notin \overline{L}(e,\bar{e})\cup \overline{U}(e,\bar{e})$, then $g(\mathbf{1}_T,x_{-T})=f(\mathbf{1}_T,x_{-T})$ and $g(\mathbf{0}_T,x_{-T})=f(\mathbf{0}_T,x_{-T})$ 
   for all $x_{-T}\in (0,1)^{N\backslash T}$. Similarly, if there exist some $i,j\in T$ with $i\in \overline{L}(e,\bar{e})$ and $j\in \overline{U}(e,\bar{e})$, then 
   $g(\mathbf{1}_T,x_{-T})=f(\mathbf{1}_T,x_{-T})$ and $g(\mathbf{0}_T,x_{-T})=f(\mathbf{0}_T,x_{-T})$ for all $x\in (0,1)^{N\backslash T}$. Thus, (1) is true 
   and it remains to check the cases $T=\overline{L}(e,\bar{e})\neq \emptyset$ or $T=\overline{U}(e,\bar{e})\neq \emptyset$.
   
   In the following we assume that exactly one of the sets $\overline{L}(e,\bar{e})$ and $\overline{U}(e,\bar{e})$ is non-empty. If 
   $T=\overline{U}(e,\bar{e})\neq \emptyset$, then $g(\mathbf{1}_T,x_{-T})=f(\mathbf{1}_T,x_{-T})+\varepsilon$ if $x_{-T}\in(\alpha)_{e_{N\backslash T}}$ and 
   $g(\mathbf{1}_T,x_{-T})=f(\mathbf{1}_T,x_{-T})$ otherwise. 
   For all $x_{-T}\in (0,1)^{N\backslash T}$ we have $g(\mathbf{0}_T,x_{-T})=f(\mathbf{0}_T,x_{-T})$. If $T=\overline{L}(e,\bar{e})\neq \emptyset$, then 
   $g(\mathbf{1}_T,x_{-T})=f(\mathbf{1}_T,x_{-T})$ for all $x_{-T}\in(0,1)^{N\backslash T}$ and 
   $g(\mathbf{0}_T,x_{-T})=f(\mathbf{0}_T,x_{-T})+\varepsilon$ for all $x_{-T}\in(\alpha)_{e_{N\backslash T}}$ and $g(\mathbf{0}_T,x_{-T})=f(\mathbf{0}_T,x_{-T})$ 
   otherwise. This gives (2) and (3), respectively. 
\end{proof}   

\begin{corollary}
  \label{cor_matter}
  Let $\Phi$ be a power index satisfying (HIS) for all interval simple games that are semi-regular step functions, $f\colon[0,1]^n\to[0,1]$ be an 
  interval simple game that is a semi-regular step function with discretization $\alpha\in\mathcal{D}_p$, $\bar{e}\in\overline{\mathcal{A}}_p$, 
  $e\in\mathcal{F}(\bar{e})$, $\varepsilon\in\mathbb{R}_{>0}$, and $f\localincstep{e}{\varepsilon}g$. We have $\Phi(f)\neq\Phi(g)$ iff 
  $L(e,\bar{e})=\overline{L}(e,\bar{e})$, $U(e,\bar{e})=\overline{U}(e,\bar{e})$, and $\overline{L}(e,\bar{e})\cup \overline{U}(e,\bar{e})\neq\emptyset$.   
\end{corollary}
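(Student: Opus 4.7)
The strategy is to reduce the box modification $f\localincstep{e}{\varepsilon}g$ to a local increment of the form $f\localinc{S}{\tilde{\varepsilon}}{D}g$ via Lemma~\ref{lemma_to_local_increment}, and then read off the consequences of (HIS) while tracking when the volume $\operatorname{vol}(D)$ is positive.

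First I would invoke Lemma~\ref{lemma_to_local_increment} to obtain one of three situations: (i) in its case~(1), namely $\overline{L}(e,\bar{e})\cup\overline{U}(e,\bar{e})=\emptyset$ or both $\overline{L}(e,\bar{e})$ and $\overline{U}(e,\bar{e})$ are non-empty, we have $\tilde{\varepsilon}=0$; (ii) in case~(2), $\overline{U}(e,\bar{e})\neq\emptyset$ and $\overline{L}(e,\bar{e})=\emptyset$, with $S=\overline{U}(e,\bar{e})$ and $\tilde{\varepsilon}=\varepsilon$; (iii) in case~(3), the symmetric situation with $S=\overline{L}(e,\bar{e})$ and $\tilde{\varepsilon}=-\varepsilon$. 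Applying (HIS) in situation~(i) yields $\Phi(f)=\Phi(g)$ immediately, since $\tilde{\varepsilon}\cdot\operatorname{vol}(D)=0$. This already rules out the possibility of $\Phi(f)\neq\Phi(g)$ whenever either $\overline{L}\cup\overline{U}=\emptyset$ or both $\overline{L}$ and $\overline{U}$ are non-empty.

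Next I would analyze $\operatorname{vol}(D)$ in the remaining cases. Writing
\[
\operatorname{vol}(D)=\prod_{j\in N\setminus S}\bigl(\alpha_{\lceil e_j\rceil}-\alpha_{\lfloor e_j\rfloor}\bigr),
\]
this product is strictly positive iff every $j\in N\setminus S$ has $e_j$ fractional, equivalently $j\notin L(e,\bar{e})\cup U(e,\bar{e})$. In case~(ii), $S=\overline{U}(e,\bar{e})$, so $\operatorname{vol}(D)>0$ is equivalent to $L(e,\bar{e})\cup U(e,\bar{e})\subseteq\overline{U}(e,\bar{e})$. Since $L\cap U=\emptyset$ and $\overline{U}\subseteq U$, this forces $L(e,\bar{e})=\emptyset=\overline{L}(e,\bar{e})$ together with $U(e,\bar{e})=\overline{U}(e,\bar{e})$. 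Case~(iii) is symmetric and yields $U(e,\bar{e})=\emptyset=\overline{U}(e,\bar{e})$ and $L(e,\bar{e})=\overline{L}(e,\bar{e})$.

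Combining the two observations gives the equivalence. For the ``only if'' direction, if any of the three stated conditions fails, then either we are in case~(i) of the Lemma (where $\tilde{\varepsilon}=0$) or $\operatorname{vol}(D)=0$, and in both situations (HIS) forces $\Phi(f)=\Phi(g)$. For the ``if'' direction, when the three conditions hold, the transition is genuinely either case~(ii) or case~(iii) with $\operatorname{vol}(D)>0$ and $\tilde{\varepsilon}\neq 0$; (HIS) then records the change $\Phi_i(g)-\Phi_i(f)=\pm\lambda_{\Phi}(S)\cdot\varepsilon\cdot\operatorname{vol}(D)$ for $i\in S$ and $\mp\gamma_{\Phi}(S)\cdot\varepsilon\cdot\operatorname{vol}(D)$ for $i\notin S$, which is a non-degenerate displacement so that $\Phi(f)\neq\Phi(g)$. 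The main bookkeeping difficulty, and the technical heart of the proof, is keeping straight the relation between $L, U$ on the one hand and $\overline{L}, \overline{U}$ on the other, and in particular verifying that the volume-positivity condition $L\cup U\subseteq S$ collapses precisely to the equalities $L=\overline{L}$ and $U=\overline{U}$ claimed in the statement.
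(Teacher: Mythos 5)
Your overall route is the paper's own: Corollary~\ref{cor_matter} is stated as an immediate consequence of Lemma~\ref{lemma_to_local_increment} combined with (HIS), and your reduction of $f\localincstep{e}{\varepsilon}g$ to a local increment $\localinc{S}{\tilde{\varepsilon}}{D}$, together with the observation that $\operatorname{vol}(D)>0$ exactly when every $j\in N\setminus S$ has $e_j$ fractional, is precisely the intended bookkeeping. The {\lq\lq}only if{\rq\rq} direction (if one of the three conditions fails then $\Phi(f)=\Phi(g)$), which is the direction actually exploited later in Lemma~\ref{lemma_increment} and Table~\ref{table_effect_uniform_increase}, you handle correctly, up to the harmless point that you invoke (HIS) with $\tilde{\varepsilon}=0$ although the axiom is formulated for $\varepsilon>0$; the paper implicitly relies on the same degenerate reading.

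The {\lq\lq}if{\rq\rq} direction, however, contains two genuine gaps. First, the three stated conditions do \emph{not} force case~(2) or~(3) of Lemma~\ref{lemma_to_local_increment}: one can have $L(e,\bar{e})=\overline{L}(e,\bar{e})\neq\emptyset$ and $U(e,\bar{e})=\overline{U}(e,\bar{e})\neq\emptyset$ simultaneously (a face fixing one coordinate at $0$ and another at $1$, e.g.\ the face $x_1=1$, $x_2=0$ of $\bar{e}=\left(l-\tfrac{1}{2},\tfrac{1}{2},l-\tfrac{1}{2}\right)$, which is conspicuously absent from Table~\ref{table_effect_uniform_increase}). That situation falls under case~(1), where $\tilde{\varepsilon}=0$ and no change is forced -- indeed $\Psi$ does not change there -- so your sentence {\lq\lq}the transition is genuinely either case~(ii) or case~(iii){\rq\rq} is false as written; to obtain a true equivalence one must additionally exclude the case that both $\overline{L}(e,\bar{e})$ and $\overline{U}(e,\bar{e})$ are non-empty (an imprecision arguably already present in the corollary's formulation, but your argument does not flag it and silently assumes it away). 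Second, even in a genuine case~(2)/(3) with $\operatorname{vol}(D)>0$ and $\tilde{\varepsilon}\neq 0$, (HIS) alone does not give $\Phi(f)\neq\Phi(g)$: the constants $\lambda_{\Phi}(S),\gamma_{\Phi}(S)$ may both vanish -- the equal-division index satisfies (HIS) with all constants equal to zero -- so the claimed {\lq\lq}non-degenerate displacement{\rq\rq} requires the non-zero constants of Lemma~\ref{lemma_hsi_constants}, i.e.\ the extra hypotheses (E), (NP) and $n\ge 3$ (or the specialization to $\Psi$ via Proposition~\ref{proposition_HIS}), which the corollary as stated, and hence your proof, does not invoke.
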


In other words, a transformation $f\localincstep{e}{\varepsilon}g$ has an effect for a power index $\Phi$, satisfying (HIS) if and only if the 
non-empty set of coordinates that is fixed in $[\alpha]_e$ lies on the boundary of $[0,1]^n$ and $\varepsilon\neq 0$.

\begin{lemma}
  \label{lemma_increment}
  Let $\Phi$ be a power index with $n\ge 3$ satisfying (HIS), (E), and (NP) for all interval simple games that are semi-regular step functions and 
  $u$ and $u'$ be interval simple games that are regular step functions with discretization $\alpha\in\mathcal{D}_p$ that satisfy $\Phi(u)=\Psi(u)$. 
  If $u'(x)=u(x)$ for all $x\in[0,1]^n\backslash [\alpha]_{\bar{e}}$ and $u'(x)=u(x)+\varepsilon$ for all $x\in(\alpha)_{\bar{e}}$ for some $\varepsilon\in\mathbb{R}_{\ge 0}$ 
  and $\bar{e}\in\overline{\mathcal{A}}_p$, then $\Phi(u')=\Psi(u')$.
\end{lemma}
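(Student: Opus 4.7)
The plan is to verify $\Phi(u')-\Phi(u)=\Psi(u')-\Psi(u)$, from which the lemma follows using the hypothesis $\Phi(u)=\Psi(u)$.

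First, analyze the change $u'-u$ face by face on $[\alpha]_{\bar e}$. Since $u$ and $u'$ are regular step functions with $u'(c_{\bar e})-u(c_{\bar e})=\varepsilon$ and $u'(c_{e'})=u(c_{e'})$ for every other $e'\in\overline{\mathcal{A}}_p$, Equation~(\ref{eq_average}) forces the difference $u'(x)-u(x)$ to equal $\varepsilon$ on the open box $(\alpha)_{\bar e}$, to equal $\varepsilon/|E(c_e)|$ on each face $(\alpha)_e$ with $e\in\mathcal{F}(\bar e)\setminus\{\bar e\}$ and $c_e\in(0,1)^n$, and to vanish at the corner points $\mathbf{0},\mathbf{1}$ (even if one of them lies in $[\alpha]_{\bar e}$). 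Conceptually, $u\to u'$ is thus the cumulative effect of step-function local increments $\localincstep{e}{\delta_e}$ indexed by $e\in\mathcal{F}(\bar e)$, with $\delta_e$ equal to the quantities just listed.

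Next, each face contribution is translated into a paper-style local increment via Lemma~\ref{lemma_to_local_increment}. Case~(1) of that lemma forces $\tilde\varepsilon=0$, so the face does not affect $\Phi$, in agreement with Corollary~\ref{cor_matter}. Cases~(2) and~(3) provide a nontrivial local increment $\localinc{S(e)}{\tilde\delta_e}{D(e)}$ with $S(e)=\overline{U}(e,\bar e)$ respectively $\overline{L}(e,\bar e)$, $\tilde\delta_e=\pm\delta_e$, and $D(e)$ the corresponding box; applying (HIS) changes $\Phi_i$ by $\lambda_\Phi(S(e))\tilde\delta_e\operatorname{vol}(D(e))$ if $i\in S(e)$ and by $-\gamma_\Phi(S(e))\tilde\delta_e\operatorname{vol}(D(e))$ if $i\notin S(e)$. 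By Lemma~\ref{lemma_hsi_constants} and Proposition~\ref{proposition_HIS} the constants $\lambda_\Phi,\gamma_\Phi$ coincide with those of $\Psi$, so the face-by-face contributions to $\Phi_i$ and to $\Psi_i$ agree term by term, and summing over all faces gives $\Phi(u')-\Phi(u)=\Psi(u')-\Psi(u)$.

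The main obstacle is a bookkeeping issue: a single step-function increment on an individual face, applied to a semi-regular step function, typically destroys semi-regularity, so (HIS) cannot be invoked naively on each face in isolation. To make the argument rigorous one either orders the face increments so that the intermediate functions remain regular step functions (for instance, by pairing the full-dimensional increment on $(\alpha)_{\bar e}$ with the boundary-face adjustments forced by Equation~(\ref{eq_average})), or cross-checks the previous step by computing $\Psi(u')-\Psi(u)$ directly from Proposition~\ref{prop_formula_ssi_interval_sg} using the explicit identity
\[
C(u',T)-C(u,T)=\varepsilon\cdot\prod_{i\notin T}\bigl(\alpha_{\bar e_i+1/2}-\alpha_{\bar e_i-1/2}\bigr)\cdot\bigl(\mathbf{1}_{T\subseteq\overline{U}(\bar e)}-\mathbf{1}_{T\subseteq\overline{L}(\bar e)}\bigr)
\]
valid for $T\notin\{\emptyset,N\}$ (and zero otherwise); the equality of this expression with the (HIS) sum read off from the face decomposition then yields $\Phi(u')-\Phi(u)=\Psi(u')-\Psi(u)$ and the lemma follows.
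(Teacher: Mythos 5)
Your main line is the paper's own: decompose $u\to u'$ into face-wise increments $\localincstep{e}{\varepsilon/|E(c_e)|}$ over $e\in\mathcal{F}(\bar e)$, convert each one into a local increment via Lemma~\ref{lemma_to_local_increment}, and use (HIS) with the constants pinned by Lemma~\ref{lemma_hsi_constants} (which by Proposition~\ref{proposition_HIS} are those of $\Psi$) so that $\Phi$ and $\Psi$ change by the same amount; your explicit formula for $C(u',T)-C(u,T)$ is also correct. The substantive shortfall is exactly at the obstacle you flag, and neither of your two remedies closes it. No ordering of single-face steps can keep the intermediate functions \emph{regular}: any nontrivial step on one face alone violates Equation~(\ref{eq_average}) on that face or on an adjacent one, which is precisely why the paper works in the class of \emph{semi-regular} step functions, where (\ref{eq_average}) is only imposed on $(0,1)^n$. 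The paper's actual device is to process the faces in decreasing order, $e$ before $e'$ whenever $e\ge e'$; this is what keeps every intermediate function monotone --- hence an interval simple game at all, a requirement you never address and which fails for careless orderings, since raising the open box before its upper faces can destroy monotonicity --- and, the paper argues, keeps it a semi-regular step function, so that (HIS), (E), (NP) remain applicable at every step, to $\Phi$ and to $\Psi$ alike. Your fallback of computing $\Psi(u')-\Psi(u)$ from Proposition~\ref{prop_formula_ssi_interval_sg} verifies only the $\Psi$ side; it gives no handle on $\Phi(u')-\Phi(u)$, because $\Phi$ is constrained solely through (HIS) applied along a chain of admissible games, and constructing that chain is the whole content of the lemma.

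Two smaller points. You record the difference $\varepsilon/|E(c_e)|$ only on faces with $c_e\in(0,1)^n$, but since $u$ and $u'$ are both regular, Equation~(\ref{eq_average}) also pins the difference on the faces lying in the boundary of $[0,1]^n$ (everywhere except $\mathbf{0},\mathbf{1}$), and these boundary faces are exactly the ones that produce the nonzero (HIS) contributions via cases (2) and (3) of Lemma~\ref{lemma_to_local_increment}; your later face-by-face accounting implicitly uses them, so the list should include them. Also, your first proposed repair, ``pairing the full-dimensional increment with the boundary-face adjustments,'' would amount to performing $u\to u'$ in one shot, which is in general not a single local increment in the sense of Definition~\ref{definition_local_improvement}; the grouping that works is the ordered loop over $\mathcal{F}(\bar e)$ used in the paper and in the proof of Theorem~\ref{thm_characterization_step_function}.
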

\begin{proof}
  In order to transform $u$ into $u'$ we apply step of the form $\localincstep{e}{\varepsilon/|E(c_e)|}$ for all faces $e\in\mathcal{F}(\bar{e})$ of $[\alpha]_{\bar{e}}$, 
  where the normalization $\varepsilon/|E(c_e)|$ guarantees that the resulting step function $u'$ is indeed regular.
  
  We loop over all possible faces $e\in\mathcal{F}(\bar{e})$, where we choose $e$ before $e'$ if $e\ge e'$. This ordering guarantees 
  that $\localincstep{e}{\varepsilon/|E(c_e)|}$ turns an interval simple game into an interval simple game, i.e., monotonicity is respected. Moreover, we stay 
  in the class of semi-regular step functions. Thus, given (HIS), (E), and (NP), we can apply Lemma~\ref{lemma_hsi_constants} and Lemma~\ref{lemma_to_local_increment} to conclude   
  $\Phi(\tilde{u})=\Psi(\tilde{u})$ for all intermediate semi-regular step functions $\tilde{u}$.
\end{proof}

Note that in the case where either $\bar{e}=\tfrac{1}{2}\cdot \mathbf{1}$ or $\bar{e}=\left(p-\tfrac{1}{2}\right)\cdot \mathbf{1}$ we even have $\Phi(u)=\Phi(u')$ 
and $\Psi(u)=\Psi(u')$. In general it is quite hard to explicitly quantify the effect of a transformation as described in Lemma~\ref{lemma_increment}. For some 
integer $l\ge 2$ we consider the discretization $\alpha=\tfrac{1}{l}\cdot(0,1,\dots,l)$ as an example. Let $n=3$ and $\bar{e}=\left(l-\tfrac{1}{2},\tfrac{1}{2},l-\tfrac{1}{2}\right)$. 
In Table~\ref{table_effect_uniform_increase} we have listed the changes of the corresponding power distribution $\Phi$ for all faces of $[\alpha]_{\bar e}$ that matter, 
see Corollary~\ref{cor_matter}. 

\begin{table}[htp]
\begin{center}
  \begin{tabular}{llllll}
  \hline
  face $e$ of $\bar{e}$ & $S$ & $\operatorname{vol}(D)$ & change $\Phi_1$ & change $\Phi_2$ & change $\Phi_3$\\
  \hline
  $x_1=1$, $x_3=1$ & $\{1,3\}$ & $l^{-1}$   & $+\frac{\varepsilon}{6l}$ & $-\frac{\varepsilon}{3l}$ & $+\frac{\varepsilon}{6l}$\\
  $x_1=1$          & $\{1\}$   & $l^{-2}$ & $+\frac{\varepsilon}{3l^2}$ & $-\frac{\varepsilon}{6l^2}$ & $-\frac{\varepsilon}{6l^2}$\\
  $x_3=1$          & $\{3\}$   & $l^{-2}$ & $-\frac{\varepsilon}{6l^2}$ & $-\frac{\varepsilon}{6l^2}$ & $+\frac{\varepsilon}{3l^2}$\\
  $x_2=0$          & $\{2\}$   & $l^{-2}$ & $+\frac{\varepsilon}{6l^2}$ & $-\frac{\varepsilon}{3l^2}$ & $+\frac{\varepsilon}{6l^2}$\\
  \hline
  \end{tabular}
  \caption{Effect of a uniform increase on some full-dimensional subcube for $\Phi$.}
  \label{table_effect_uniform_increase}
\end{center}
\end{table}

For the special case $\varepsilon=1$ and $l=2$ the players 1 and 3 increase their power by $\tfrac{1}{6}$, while player~2 reduces its power 
by $\tfrac{1}{3}$. More generally, for $\bar{e}=\left(\tfrac{1}{2}\cdot\mathbf{1}_L,\left(l-\tfrac{1}{2}\right)\cdot\mathbf{1}_U\right)$ for some disjoint sets
$L,U\neq \emptyset$ with $L\cup U=N$, the total change for $\Phi_i$ is given by
\begin{eqnarray}   
  &&\sum_{\{i\}\subseteq T\subseteq L} -\frac{\varepsilon}{l^{n-t}}\cdot \frac{(t-1)!(n-t)!}{n!}
  +\sum_{\emptyset\subsetneq T\subseteq L\backslash\{i\}} \frac{\varepsilon}{l^{n-t}}\cdot \frac{t!(n-t-1)!}{n!}\nonumber\\ 
  &&+\sum_{\{i\}\subseteq T\subseteq U} \frac{\varepsilon}{l^{n-t}}\cdot \frac{(t-1)!(n-t)!}{n!}
  +\sum_{\emptyset\subsetneq T\subseteq U\backslash\{i\}} -\frac{\varepsilon}{l^{n-t}}\cdot \frac{t!(n-t-1)!}{n!}\label{eq_corner_increase},
\end{eqnarray}
where $t=|T|$. For $\varepsilon=1$ and $l=2$ this complicated expression simplifies to an overall increase of 
$\frac{(|U|-1)!\cdot(n-|U|)!}{n!}$ for every player $i\in U$ and a decrease of $\frac{|U|!\cdot(n-|U|-1)!}{n!}$ for all other players. 
Instead of a combinatorial proof we refer to the power distribution of a $(2,2)$-simple game in Proposition~\ref{prop_embed} and note that we 
actually obtain Equation~(\ref{eq_SSI_add_S}) for a simple game (after adding one winning coalition). For general $(j,k)$ simple games 
and their natural embedding as an interval simple game the situation is more complicated. Here we might have $L\cup U\neq N$, i.e., the full-dimensional 
boxes of the discretization do not need to meet the boundary of $[0,1]^n$ in all coordinate directions. Moreover, we are not aware of a generalization 
of Equation~(\ref{eq_SSI_add_S}) to $(j,k)$ simple games.
 
Finally, we mention that the technical normalization in $\localincstep{e}{\varepsilon/|E(c_e)|}$, i.e., the division by $|E(c_e)|$ does not have 
an effect on the resulting power distribution. I.e., due to Corollary~\ref{cor_matter} we have an effect only if $|E(c_e)|=1$. 
 
\subsection{A characterization of $\Psi$}
\label{subsec_characterization}

In Theorem~\ref{thm_characterization_step_function} we have already characterized $\Psi$ by three axioms on the subset of (regular) step functions 
of interval simple games. The idea now is to consider for a given approximable interval simple game $v$ an approximation sequence $(f^h)_{h\in \mathbb{N}}$ of 
regular step functions. Then, $f^h$ tends to $v$ and $\Psi(v)$ should be given by $\lim_{h\to\infty} \Psi(f^h)$.

\begin{definition}
  \label{definition_limit_psi} A power index $\Phi$ is called \emph{discretizable} (D) if for any approximable interval simple game 
  $v$ and any approximation sequence $\left(f^h\right)_{h\in\mathbb{N}}$ the limit $\lim_{h\to\infty} \Phi\left(f^h\right)$ exists and coincides 
  with $\Phi(v)$.
\end{definition}  

\begin{proposition}
  \label{prop_Psi_discretizable}
  The power index $\Psi$ is discretizable.
\end{proposition}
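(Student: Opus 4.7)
The plan is to reduce the claim to the behavior of the coefficients $C(\cdot,T)$ appearing in the simplified formula of Proposition~\ref{prop_formula_ssi_interval_sg}. Concretely, for any interval simple game $w$ and player $i\in N$,
\begin{equation*}
\Psi_i(w) = \sum_{i\in S\subseteq N} \frac{(s-1)!(n-s)!}{n!}\cdot\bigl[C(w,S)-C(w,S\backslash\{i\})\bigr],
\end{equation*}
which is a finite linear combination of $2^n$ terms with coefficients depending only on $n$. Hence it suffices to establish, for every fixed $T\subseteq N$, that $C(f^h,T)\to C(v,T)$ as $h\to\infty$; taking linear combinations then yields $\Psi(f^h)\to\Psi(v)$.

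First I would verify that all quantities are well-defined. Both $v$ and each $f^h$ are interval simple games (the $f^h$ by construction, as every approximation is monotone with $f^h(\mathbf{0})=0$ and $f^h(\mathbf{1})=1$), so by the integrability argument already given after Proposition~\ref{prop_formula_ssi_interval_sg} each of the integrals defining $C(v,T)$ and $C(f^h,T)$ exists. Next, I would estimate
\begin{equation*}
|C(v,T)-C(f^h,T)| \le \int_{[0,1]^n} \!\!\bigl|v(\mathbf{1}_T,x_{-T})-f^h(\mathbf{1}_T,x_{-T})\bigr|\,\mathrm{d}x + \int_{[0,1]^n} \!\!\bigl|v(\mathbf{0}_T,x_{-T})-f^h(\mathbf{0}_T,x_{-T})\bigr|\,\mathrm{d}x.
\end{equation*}
Each integrand is bounded pointwise by $\sup_{y\in[0,1]^n}|v(y)-f^h(y)|$, so the right-hand side is at most $2\sup_{y\in[0,1]^n}|v(y)-f^h(y)|$. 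By the uniform convergence condition built into Definition~\ref{def_approx}, this supremum tends to $0$, so $C(f^h,T)\to C(v,T)$ for every $T\subseteq N$.

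Combining these convergences across the finitely many $T$ appearing in the expression for $\Psi_i(w)$, we obtain $\lim_{h\to\infty}\Psi_i(f^h)=\Psi_i(v)$ for each $i\in N$, as required. I do not expect any real obstacle: the only point that needs a moment of care is that the uniform bound holds on all of $[0,1]^n$, including the boundary hyperplanes $\{x_T=\mathbf{1}_T\}$ and $\{x_T=\mathbf{0}_T\}$ where boundary averaging (\ref{eq_average}) pins down the values of $f^h$. This is automatic because the supremum in the definition of an approximation sequence is taken over all of $[0,1]^n$, so no delicate treatment of the measure-zero boundary is needed.
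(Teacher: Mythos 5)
Your proof is correct, but it takes a more explicit route than the paper. The paper's own proof is a two-sentence appeal to Riemann integration: the integrals defining $\Psi$ exist by monotonicity, and since the Riemann integral is itself defined via limits over step functions, the statement "naturally follows". You instead reduce everything to the finitely many coefficients $C(\cdot,T)$ of Proposition~\ref{prop_formula_ssi_interval_sg} and show $C(f^h,T)\to C(v,T)$ directly from the sup-norm convergence that Definition~\ref{def_approx} builds into an approximation sequence, via the bound $|C(v,T)-C(f^h,T)|\le 2\sup_{y\in[0,1]^n}|v(y)-f^h(y)|$. What your version buys is self-containedness and rigor on exactly the point the paper elsewhere flags as delicate: $\Psi$ depends on evaluations of the game on the boundary faces $\{x_T=\mathbf{1}_T\}$ and $\{x_T=\mathbf{0}_T\}$, and a generic "Riemann sums converge" argument does not by itself control the values a regular step function takes there, whereas your observation that the supremum in Definition~\ref{def_approx} runs over all of $[0,1]^n$ handles this cleanly. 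What the paper's version buys is brevity and the conceptual remark that monotonicity alone already guarantees integrability; your argument quietly uses the same fact when asserting that $C(v,T)$ and $C(f^h,T)$ exist, which you correctly delegate to the discussion following Proposition~\ref{prop_formula_ssi_interval_sg}. One cosmetic remark: you do not actually need integrability of the absolute differences, since bounding the (integrable) differences pointwise by $\pm\sup_{y}|v(y)-f^h(y)|$ already yields the same estimate.
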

\begin{proof}
  As argued in Section~\ref{sec_the_index} the integrals in the definition of $\Psi$ even exist applying the concept of the 
  Riemann integral due to the monotonicity of interval simple games. Since the Riemann integral itself is defined via 
  limits over sequences of step functions the statement naturally follows.
\end{proof}

For approximation sequences we may also restrict ourselves to uniform discretizations. The power index $\Psi$ has the limit property of Definition~\ref{definition_limit_psi} 
also for more general sequences of (regular) step functions. However, it makes sense to assume a (seemingly) weaker condition for axiom (D) if 
we are able to conclude uniqueness of the power index in the end.

Before we state the final axiomatization we want to state two further examples to highlight the technical subtleties of power indices 
for general interval simple games.  

Let $n\ge 1$ be an integer and $v$ be the interval simple game for $n$ players with $v(\mathbf{0})=0$ and $v(x)=1$ for all 
$x\in[0,1]^n\backslash\{\mathbf{0}\}$. For every $\emptyset\subseteq T\subsetneq N$ we have $C(v,T)=0$ and $C(v,N)=1$, so that 
$\Psi_i(v)=\tfrac{1}{n}$ for all $i\in N$. For every integer $p\ge 1$ and every discretization $\alpha\in\mathcal{D}_p$ there 
exists a regular step function $f$ with discretization $\alpha$ and $f(c_e)=1$ for all $e\in \overline{\mathcal{A}}_p$, where 
$c_e$ denotes the center of $[\alpha]_e$. For all these step functions we also have $\Psi(f)_i=\tfrac{1}{n}$ for all $i\in N$. However this 
is not the only choice for the step functions. For each $e\in\overline{\mathcal{A}}_p$, where $e_h=\tfrac{1}{2}$ for some index $h\in N$, we may 
set $f(c_e)=0$ instead of $f(c_e)=1$, since $\mathbf{0}\in[\alpha]_e$. Even if we ignore the symmetry assumptions for semi-regular 
step functions, we can see that those changes of values of $f$ at the {\lq\lq}boundary{\rq\rq} of $[0,1]^n$ have an impact on $\Psi(f)$ 
that tends to zero if $\omega(\alpha)$ tends to zero.

We remark that we can embed every $(j,k)$ simple game as an interval simple game, but for the other direction we need further conditions, 
i.e., $f(x)=0$ for all $x\in (\alpha)_{\frac{1}{2}\cdot\mathbf{1}}$ and $f(x)=1$ for all $x\in (\alpha)_{\left(p-\frac{1}{2}\right)\cdot\mathbf{1}}$. 

\begin{theorem}
  \label{thm_characterization}
  Let $\Phi$ be a power index that satisfies (E), (NP), (HIS), and (D) for interval simple games with $n\ge 3$ players. Then we have $\Phi(v)=\Psi(v)$ for every 
  interval simple game that is approximable, i.e., $\Phi=\Psi$.
\end{theorem}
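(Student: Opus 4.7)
The plan is to assemble the characterization on approximable interval simple games directly from the two pieces already in hand: the characterization on regular step functions (Theorem~\ref{thm_characterization_step_function}) and the discretizability axiom~(D). All of the real work, both combinatorial (the induction on $s$ via local increments) and analytic (the reduction of $\Psi$ to Riemann integrals), has been done previously, so the theorem becomes a bookkeeping argument.

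Concretely, fix an approximable interval simple game $v$ with $n\ge 3$ players, and pick an approximation sequence $\left(f^h\right)_{h\in\mathbb{N}}$ of regular step functions converging uniformly to $v$ in the sense of Definition~\ref{def_approx}; such a sequence exists by hypothesis. Since each $f^h$ is a regular step function and hence in particular a semi-regular step function, and since $\Phi$ is assumed to satisfy (E), (NP), and (HIS) on the larger class of semi-regular step functions with $n\ge 3$, Theorem~\ref{thm_characterization_step_function} applies and yields
\begin{equation*}
  \Phi\!\left(f^h\right)=\Psi\!\left(f^h\right)\quad \text{for every }h\in\mathbb{N}.
\end{equation*}

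Next I would invoke discretizability twice. For $\Phi$, axiom (D) directly gives that the limit $\lim_{h\to\infty}\Phi\!\left(f^h\right)$ exists and equals $\Phi(v)$. For $\Psi$, the same conclusion holds by Proposition~\ref{prop_Psi_discretizable}, which established that $\Psi$ itself is discretizable: $\lim_{h\to\infty}\Psi\!\left(f^h\right)=\Psi(v)$. Passing to the limit in the identity $\Phi\!\left(f^h\right)=\Psi\!\left(f^h\right)$ then gives $\Phi(v)=\Psi(v)$, which is exactly the desired conclusion.

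There is no genuine obstacle beyond making sure the hypotheses line up. The one point I would be careful about is that Theorem~\ref{thm_characterization_step_function} is stated for power indices defined on semi-regular step functions; but a regular step function is a fortiori semi-regular, so each $f^h$ lies in the appropriate domain and the reference is legitimate. Everything else---the existence of limits, the monotonicity-based integrability used inside Proposition~\ref{prop_Psi_discretizable}, and the fact that approximation sequences exist---has already been pinned down in the previous subsections, so the proof itself can be presented very compactly.
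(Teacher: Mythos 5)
Your proposal is correct and follows exactly the route the paper takes: apply Theorem~\ref{thm_characterization_step_function} to each term of an approximation sequence and then pass to the limit using axiom (D) for $\Phi$ and Proposition~\ref{prop_Psi_discretizable} for $\Psi$. You have merely spelled out in detail what the paper compresses into one sentence, so nothing further is needed.
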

\begin{proof}
  Combine Proposition~\ref{prop_Psi_discretizable} with Theorem~\ref{thm_characterization_step_function} using axiom (D).
\end{proof}

We remark that the four stated axioms for $\Psi$ are independent from each other:
\begin{itemize}
  \item The power index $2\cdot \Psi$ satisfies (NP), (HIS), and (D), but not (E).
  \item Denote by $\operatorname{ED}$ the \emph{equal division} power index which assigns $\tfrac{1}{n}$ to
        each player for ever interval simple game $v$. Then the power index $\tfrac{1}{2}\Psi+\tfrac{1}{2}\operatorname{ED}$ 
        satisfies (E), (HIS) and (D), but not (NP).
 \item In Proposition~\ref{prop_generalized_measure_point} and Proposition~\ref{prop_generalized_measure_density} we have presented 
       parametric classes of power indices for interval simple games that satisfy (E), (A), (S), (NP), and (TP). Those from 
       Proposition~\ref{prop_generalized_measure_density} also satisfy (D). In general they do not satisfy (HIS) as shown by 
       an example in Section~\ref{sec_the_index}. Another concrete example is given by the mapping $v\mapsto \Psi(v^2)$, 
       as already mentioned in Section~\ref{sec_the_index}, which satisfies (E), (NP), and (D), but not (HIS).
 \item The construction of a power index that satisfies (E), (NP), and (HIS) but not (D) is a bit more technically involved. On the set 
       of interval simple games with $n$ players we can define an equivalence relation, where two interval simple games are in the 
       same class if one of them can be obtained from the other by a finite sequence of local improvements. It can be shown 
       that there exists more than one equivalence class if $n\ge 2$. For e.g.\ $v(x)=\prod_{i=1}^n x_i^i$ we can define 
       $\Phi_i(v)=\frac{1}{n(n+1)}\cdot (1,2,\dots,n)\neq \Psi(v)$, see Proposition~\ref{prop_formula_exponential_product}. 
       For every interval simple game $u$ within the same equivalence from $v$ the value of $\Phi(u)$ is defined by $\Phi(v)$ via 
       (HIS). Moreover, we cannot lose efficiency by those transformation and no player can turn into a null player. For 
       every interval simple game $u'$ that is contained in a different equivalence class than $v$, we set $\Phi(u')=\Psi(u')$, 
       so that $\Phi$ satisfies (E), (NP), and (HIS), but not (D).                 
\end{itemize}

We remark that (HIS) is a quite strong axiom. The situation is similar for simple games and (SymGL), which e.g.\ 
implies the transfer axiom, see \cite{laruelle2001shapley}. 

\section{Conclusion and open problems}
\label{sec_conclusion}
The Shapley-Shubik index $\operatorname{SSI}$ was designed to evaluate the power distribution in committee systems 
drawing binary decisions and is one of the most established power indices. It was generalized in the literature to 
decisions with more than two levels of approval in the input and output, i.e., so-called $(j,k)$ simple games. A 
fitting axiomatization is still the topic of current research, see \cite{shapleybookjosep}. Special cases for $(j,k)$ 
are e.g.\ treated in \cite{bernardifreixas2018new}. If we consider the limit of $j$ and $k$ to infinity, then we end 
up with a continuum of options. Those games, that we call interval simple games here, were e.g.\ introduced in 
\cite{kurz2014measuring,kurz2018importance}. In the same papers a generalization of the Shapley-Shubik index, i.e., $\Psi$, 
was proposed and motivated by a generalized version of the roll call interpretation of the Shapley-Shubik index. 
Here we prove the first axiomatization for $\Psi$ and show that the Shapley-Shubik index for simple games, as well as 
for $(j,k)$~simple games, occurs as a special discretization. This relation and the closeness of the stated axiomatization 
to the classical case suggests to speak of the Shapley-Shubik index for games with interval decisions that can also be generalized 
to a value. In that context we have shown that generalized versions of the classical axioms for simple games, see 
Proposition~\ref{prop_generalized_measure_point}, are not sufficient any more.

For the newly introduced axiom (HIS) we gave some justification and remarked its similarity to the axiom (SymGL) for simple games. However, 
as shown in e.g.\ (\ref{eq_corner_increase}), the implications of (HIS) are much more farreaching than the implications of  (SymGL), which 
is a more direct axiom tailored for simple games. Again, we mention that we are not aware of a generalization of Equation~(\ref{eq_SSI_add_S}) 
to $(j,k)$ simple games and pose this as an open problem.

Our emphasis on step functions is not really essential for our approach. One main motivation is the proposed natural embedding of $(j,k)$ simple 
games as interval simple games. Another motivation was that we do not want to dive too much into the mathematical details of integrability and 
approximability. Especially the later topic should be worth a more mathematical treatment. The introduction of regular step functions was necessary 
to precisely describe the above mentioned natural embedding. Nonetheless $\Psi$ can be defined as an $n$-dimensional integral, the corresponding 
power vector is not uniquely defined for an ordinary step functions. The precise values on lower dimensional faces can be essential. Actually, one 
can be more precise, see Corollary~\ref{cor_matter}. I.e., faces that touch the boundary of $[0,1]^n$ are the essential ones. Stated more directly, 
the values of an interval simple game $v$ in the interior of its domain are more or less irrelevant for $\Psi$. This property might be analyzed and 
criticized from a more general and non-technical point of view. However, our rigorous technical analysis uncovers this fact for the first time, while 
it is also valid for the $\operatorname{SSI}$ variant for $(j,k)$ simple games. I.e., in e.g.\ a $(4,4)$ simple game $v$ for $n=3$ voters the value of 
$v(1,2,1)$ can be changed to $0$, $1$, $2$, or $3$ without any direct effect for the power distribution of the players. Of course monotonicity implies 
some possible indirect changes of other function values, which then can have an effect for the power distribution. For simple games there are no 
{\lq\lq}internal{\rq\rq} vote vectors. In any case this {\lq\lq}boundary dependence{\rq\rq} should be studied and interpreted in more detail. Semi-regular 
step function are just a technical artefact in order to allow inductive arguments using (HIS), while being able to state all technical details and 
subtleties without glossing over.

Nevertheless the power measurement $\Psi$ can be criticized, we think that it is an reasonable object that is worth to be studied in more detail. 
More explicit formulas for special parametric classes of interval simple games, c.f.\ Section~\ref{sec_the_index}, might be a promising direction 
for further research. Of course, also other axiomatizations of the Shapley-Shubik index or the Shapley value should be studied, whether they can be 
generalized to interval simple games. In any case, we think that interval simple games with their natural embedding to $(j,k)$ simple games 
are a good yardstick to check how specific some approaches in the current and in the future literature are when considering specific instances 
of $(j,k)$ simple games.      

\section*{Acknowledgment}
Hilaire Touyem benefits from a financial support of the CETIC (Centre d'Excellence Africain en
Technologies de l'Information et de la Communication) Project of the University of
Yaounde~I.


\appendix

\section{A working example}

Let $\alpha =\left( 0,\tfrac{1}{4},\tfrac{1}{2},1\right) $, $n=2$ and
consider the regular step function $v$ given by

\begin{equation*}
\begin{tabular}{|c|c|c|c|c|c|c|c|c|c|}
\hline
&  &  &  &  &  &  &  &  &  \\
$e$ & $\left( \tfrac{1}{2},\tfrac{1}{2}\right) $ & $\left( \tfrac{1}{2},\tfrac{3}{2}\right) $ & $\left( \tfrac{3}{2},\tfrac{1}{2}\right) $ & $%
\left( \tfrac{1}{2},\tfrac{5}{2}\right) $ & $\left( \tfrac{5}{2},\tfrac{1}{2}\right) $ & $\left( \tfrac{3}{2},\tfrac{3}{2}\right) $ & $\left(
\tfrac{3}{2},\tfrac{5}{2}\right) $ & $\left( \tfrac{5}{2},\tfrac{3}{2}\right) $ & $\left( \tfrac{5}{2},\tfrac{5}{2}\right) $ \\
&  &  &  &  &  &  &  &  &  \\ \hline
&  &  &  &  &  &  &  &  &  \\
$c_{e}$ & $\left( \tfrac{1}{8},\tfrac{1}{8}\right) $ & $\left( \tfrac{1}{8},%
\tfrac{3}{8}\right) $ & $\left( \tfrac{3}{8},\tfrac{1}{8}\right) $ & $\left(
\tfrac{1}{8},\tfrac{3}{4}\right) $ & $\left( \tfrac{3}{4},\tfrac{1}{8}%
\right) $ & $\left( \tfrac{3}{8},\tfrac{3}{8}\right) $ & $\left( \tfrac{3}{8}%
,\tfrac{3}{4}\right) $ & $\left( \tfrac{3}{4},\tfrac{3}{8}\right) $ & $%
\left( \tfrac{3}{4},\tfrac{3}{4}\right) $ \\
&  &  &  &  &  &  &  &  &  \\ \hline
&  &  &  &  &  &  &  &  &  \\
$v\left( c_{e}\right) $ & $0.1$ & $0.2$ & $0.3$ & $0.4$ & $0.5$ & $0.6$ & $%
0.7$ & $0.8$ & $0.9$ \\
&  &  &  &  &  &  &  &  &  \\ \hline
\end{tabular}%
\end{equation*}


In the proof of Theorem \ref{thm_characterization_step_function}, it is
shown how one can move from the zero interval simple game to any other
interval simple game that is regular. Below we give an implementation of
this construction, for illustration, to determine $\Psi (v)$ for the game $v$
just presented. In the intermediate steps, only semi-regular step functions are used. Thus, 
we only provide the values of the game at the center of each box, or on an
edge where the value of game differs from the one at the center; any other
edge is completed using Equation~(\ref{eq_average}).

We do not enumerated all tiny steps of the form $f\localincstep{e}{\varepsilon}g$ explicitly. 
Due to Lemma~\ref{lemma_to_local_increment} we only consider steps of the form $\localinc{S}{\varepsilon}{D}$ 
for all $\emptyset\neq S\subseteq \overline{U}(e,\bar{e})$ and $\localinc{T}{-\varepsilon}{D}$ 
for all $\emptyset\neq T\subseteq \overline{L}(e,\bar{e})$. Whenever we have two such steps (for $n=2$ 
we cannot have more), we increase the function values of the interior of the corresponding rectangle 
at the second step.

Going through all the details of the inductive construction is rather tedious, but might 
help the reader to comprehend the technical details of our definitions and lemmas.   

\begin{figure}[htp]
\centering
\includegraphics[width=7cm]{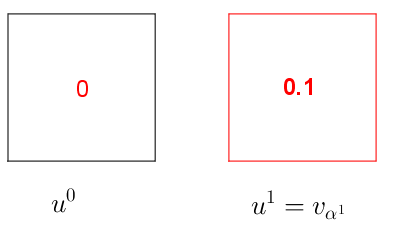}
\caption{Moves at phase 1}
\label{phase_1}
\end{figure}

\subsubsection*{\textbf{Phase 1}}

We start with the interval simple game given by $u^{0}(x)=0$ for all $x\in
[0,1]^{2}\backslash \left\{ \left( 1,1\right) \right\} $, where we
clearly have $\Psi (u^{0})=\left( \tfrac{1}{2},\tfrac{1}{2}\right) $. Next
we consider the local increment $u^{0}\localinc{\emptyset}{0.1}{[0,1]^{2}}u^{1}$, 
so that $\Psi (u^{1})=\left( \tfrac{1}{2},\tfrac{1}{2}\right)$; see Figure~\ref{phase_1}.

\begin{figure}[htp]
\centering
\includegraphics[width=10cm]{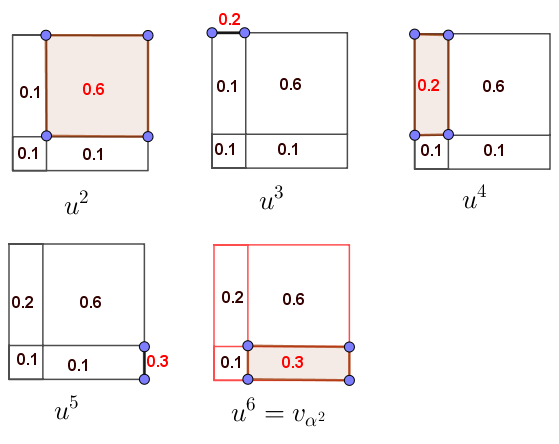}
\caption{Moves at phase 2}
 \label{phase_2}
\end{figure}

\subsubsection*{\textbf{Phase 2}}

At this stage, the current game $u=u^{1}$ already coincides with $v$ on the
box $\left[ \alpha \right] _{\left( \tfrac{1}{2},\tfrac{1}{2}\right) }$. We now consider the
discretization $\alpha ^{2}=\left( 0,\tfrac{1}{4},1\right) $. We have to
order the vectors in $\overline{\mathcal{A}}_{2}\backslash \left\{ \left(\tfrac{1}{2},\tfrac{1}{2}\right)\right\} $
according to $\geq $. There are exactly two possibilities: $\left(\tfrac{3}{2},\tfrac{3}{2}\right)$, $\left(\tfrac{3}{2},\tfrac{1}{2}\right)$,
$\left(\tfrac{1}{2},\tfrac{3}{2}\right)$ and $\left(\tfrac{3}{2},\tfrac{3}{2}\right)$, $\left(\tfrac{1}{2},\tfrac{3}{2}\right)$, 
$\left(\tfrac{3}{2},\tfrac{1}{2}\right)$. Note that we are dealing with a
partial order, i.e., $\left(\tfrac{3}{2},\tfrac{3}{2}\right)>\left(\tfrac{3}{2},\tfrac{1}{2}\right)$ and 
$\left(\tfrac{3}{2},\tfrac{3}{2}\right)>\left(\tfrac{1}{2},\tfrac{3}{2}\right)$, but we cannot compare $\left(\tfrac{3}{2},\tfrac{1}{2}\right)$ 
and $\left(\tfrac{1}{2},\tfrac{3}{2}\right)$. For $e=\left(\tfrac{3}{2},\tfrac{3}{2}\right)$ we compute $S=\{1,2\}$ and $T=\emptyset $,
$D=[\tfrac{1}{4},1]^{2}$, and $\varepsilon =v(c_{\left(\tfrac{3}{2},\tfrac{3}{2}\right)})-v(c_{\left(\tfrac{1}{2},\tfrac{1}{2}\right)})=0.50$.
Setting $u^{1}\localinc{S}{\varepsilon}{D}u^{2}$ we obtain $%
\Psi (u^{2})=\left( \tfrac{1}{2},\tfrac{1}{2}\right) $. Similarly, for $%
e=\left( \tfrac{1}{2},\tfrac{3}{2}\right) $, we have $S=\left\{ 2\right\} $ and $T=\left\{
1\right\} $. First, $u^{2}\localinc{\left\{ 2\right\}}{0.1}{\left[ 0,\frac{1}{4}\right]} 
u^{3}$ gives $\Psi (u^{3})=\left( \tfrac{1}{2}-%
\frac{1}{80},\tfrac{1}{2}+\frac{1}{80}\right) $. Next $u^{3}\localinc{\left\{
1\right\}}{-0.1}{\left[ \frac{1}{4},1\right] }u^{4}$
implies $\Psi (u^{4})=\left( \frac{39}{80}-\frac{3}{80},\frac{41}{80}+\frac{3%
}{80}\right) $. For $e=\left( \tfrac{3}{2},\tfrac{1}{2}\right) $, we have $S=\left\{ 1\right\} $
and $T=\left\{ 2\right\} $. The first move to update the value of the game
in the box $\left[ \alpha \right] _{e}$ first consists in $u^{4}\localinc
{\left\{ 1\right\}}{0.2}{\left[ 0,\frac{1}{4}\right] }u^{5}$
gives $\Psi (u^{5})=\left( \frac{9}{20}+\frac{2}{80},\frac{11}{20}-\frac{2}{%
80}\right)$. Finally, $u^{5}\localinc{\left\{ 2\right\}}{-0.2}{\left[ \frac{1}{4},1\right] }u^{6}$ implies 
$\Psi (u^{6})=\left( \frac{19}{40}+\frac{6}{80},\frac{21}{40}-\frac{6}{80}\right)$. Figure~\ref{phase_2}
shows the entire sequence of moves at this phase.

\begin{figure}[htp]
\centering
\includegraphics[width=12cm]{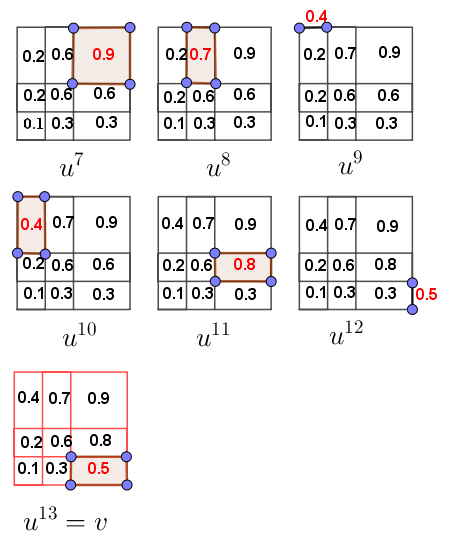}
\caption{Moves at phase 3}
\label{phase_3}
\end{figure}

\subsubsection*{\textbf{Phase 3}}

Note that the current game $u=u^{6}$ already coincides with $v$ on the box $%
\left[ \alpha \right] _{e}$ for $e\in \overline{\mathcal{A}}_{2}$. We consider the
discretization $\alpha =\alpha ^{3}=\left( 0,\tfrac{1}{4},\tfrac{1}{2}%
,1\right) $. We have to order the vectors in $\overline{\mathcal{A}}_{3}\backslash
\overline{\mathcal{A}}_{2}$ according to $\geq $. As above, there are several
possibilities that emerge to the same result. Here we consider $\left(\tfrac{5}{2},\tfrac{5}{2}\right)$, 
$\left(\tfrac{3}{2},\tfrac{5}{2}\right)$, $\left(\tfrac{1}{2},\tfrac{5}{2}\right)$, $\left(\tfrac{5}{2},\tfrac{3}{2}\right)$, 
$\left(\tfrac{5}{2},\tfrac{3}{2}\right)$, and $\left(\tfrac{5}{2},\tfrac{1}{2}\right)$. For $e=\left(\tfrac{5}{2},\tfrac{5}{2}\right)$ 
we have $S=\{1,2\}$ and $T=\emptyset$ and $u^{6}\localinc{N}{0.3}{\left[\tfrac{1}{2},1\right]^{2}}u^{7}$. 
It follows that $\Psi (u^{7})=\left( \frac{11}{20},%
\frac{9}{20}\right) $. Next for $e=\left( \tfrac{3}{2},\tfrac{5}{2}\right) $, we have $S=\left\{
2\right\} $ and $T=\emptyset$. Setting $u^{7}\localinc{\left\{ 2\right\}
}{0.1}{\left[ \frac{1}{4},\frac{1}{2}\right]}u^{8}$ gives $%
\Psi (u^{8})=\left( \frac{11}{20}-\frac{1}{80},\frac{9}{20}+\frac{1}{80}%
\right)$. For $e=\left( \tfrac{1}{2},\tfrac{5}{2}\right) $, we have $S=\left\{ 2\right\} $ and $%
T=\left\{ 1\right\} $. The first move to update the value of the game in $%
\left[ \alpha \right] _{e}$ first consists in $u^{8}\localinc{\left\{
2\right\}}{0.2}{\left[ 0,\frac{1}{4}\right] }u^{9}$ which
gives $\Psi (u^{9})=\left( \frac{43}{80}-\frac{2}{80},\frac{37}{80}+\frac{2}{%
80}\right) $. Next, $u^{9}\localinc{\left\{ 1\right\} }{-0.2}{\left[ \frac{1}{2}%
,1\right] }u^{10}$ yields $\Psi (u^{10})=\left( \frac{41}{%
80}-\frac{2}{40},\frac{39}{80}+\frac{2}{40}\right) $. For $e=\left(\tfrac{5}{2},\tfrac{3}{2}\right) $, we have $S=\left\{ 1\right\} $ and $T=\emptyset $. Setting $%
u^{10}\localinc{\left\{ 1\right\}}{0.2}{\left[ \frac{1}{4},\frac{1}{2}\right] }%
u^{11}$ gives $\Psi (u^{11})=\left( \frac{37}{80}+\frac{2}{%
80},\frac{43}{80}-\frac{2}{80}\right) $. For $e=\left( \tfrac{5}{2},\tfrac{1}{2}\right) $, we have
$S=\left\{ 1\right\} $ and $T=\left\{ 2\right\} $. Two moves are needed to
update the value of the game in $\left[ \alpha \right] _{e}$. First, $u^{11}%
\localinc{\left\{ 1\right\}}{0.2}{\left[ 0,\frac{1}{4}\right] }u^{12}$ gives 
$\Psi (u^{12})=\left( \frac{39}{80}+\frac{2}{%
80},\frac{41}{80}-\frac{2}{80}\right)$. Finally, $u^{12}\localinc{\left\{
2\right\}}{-0.2}{\left[ \frac{1}{2},1\right] }u^{13}$ leads
$\Psi (u^{13})=\left( \frac{41}{80}+\frac{2}{40},\frac{39}{80}-\frac{2}{40}%
\right) =\left( \frac{9}{16},\frac{7}{16}\right) $. All those moves are
illustrated in Figure~\ref{phase_3}.

\end{document}